\documentclass{article}
\usepackage{graphicx}
\usepackage[hidelinks]{hyperref}
\usepackage{doi}
\usepackage{caption}
\usepackage{subcaption}
\usepackage{float} 
\usepackage{xcolor}
\usepackage{amsmath}
\usepackage{amsthm}
\newtheorem{theorem}{Theorem}[section]
\newtheorem{corollary}[theorem]{Corollary}
\newtheorem{lemma}[theorem]{Lemma}
\newtheorem{definition}[theorem]{Definition}
\newtheorem{observation}[theorem]{Observation}
\usepackage{enumitem}
\usepackage{booktabs} 
\usepackage[normalem]{ulem}

\usepackage[dvipsnames]{xcolor}

\newcommand{\len}{\mathrm{len}}

\usepackage{algorithm}
\usepackage{algpseudocode}
\usepackage[numbers, sort]{natbib}

\newcommand{\OPT}{{\mathsf{OPT}}}

\newcommand{\cost}{{\mathsf{cost}}}

\newenvironment{obsproof}{\par\noindent\underline{\textit{Proof.}}}{\mbox{}\hfill$\bullet$\par}

\title{Minimizing Total Travel Time for Collaborative Package Delivery with Heterogeneous Drones}
\author{Thomas Erlebach\thanks{Department of Computer Science, Durham University, UK} \and Kelin Luo\thanks{Department of Computer Science and Engineering, University at Buffalo, USA} \and Wen Zhang\footnotemark[\value{footnote}]}
\date{}

\begin{document}

\maketitle

 \begin{abstract}
     Given a fleet of drones with different speeds and
     a set of package delivery requests, the collaborative
     delivery problem asks for a schedule for the
     drones to collaboratively carry out all package deliveries,
     with the objective of minimizing the total travel time
     of all drones. We show that the best non-preemptive
     schedule (where a package that is picked up at its source
     is immediately delivered to its destination by one drone)
     is within a factor of three of the best preemptive schedule
     (where several drones can participate in the delivery of
     a single package). Then, we present a constant-factor
     approximation algorithm for the problem of computing the
     best non-preemptive schedule. The algorithm reduces
     the problem to a tree combination problem and uses
     a primal-dual approach to solve the latter. We have
     implemented a version of the algorithm optimized for
     practical efficiency and report the results of experiments
     on large-scale instances with synthetic and real-world data,
     demonstrating that our algorithm is scalable and delivers schedules
     of excellent quality.   
 \end{abstract}
 
\section{Introduction}
\label{sec:intro}

The rise of autonomous vehicle technology has profoundly transformed the logistics and delivery industry, enabling faster and more efficient package delivery. Given the varying serving speeds, consumption costs, and service areas of different vehicles or drones, one of the key challenges in this domain is optimizing the delivery process, particularly when multiple vehicles are available to complete pickup and delivery tasks~\cite{savelsbergh201650th}.
This class of problems, known as Collaborative Delivery Problems (CD Problems), focuses on finding an optimal schedule to transport packages from one location to another through the collective effort of multiple vehicles.
There are two modes in which vehicles can collaborate on completing the pickup
and delivery tasks: In \emph{coarse-grained} collaboration, the vehicles collaborate
by partitioning the package delivery requests among themselves: Each package delivery request
is assigned to one vehicle, and each of the vehicles independently follows a route based on its assigned requests. In \emph{fine-grained} collaboration,
multiple vehicles can participate in the fulfillment of a single request:
The vehicles can collaborate by handing off packages between agents to enhance efficiency and reduce delivery time.

In this paper, we study the multi-package delivery problem in a collaborative setting.
We extend the classical problem of multiple vehicle pickup to a setting where the available vehicles have different efficiencies, particularly in terms of speed. We formulate this problem as a Heterogeneous Collaborative Delivery Problem and investigate how delivery planning can be optimized to minimize the objective of total travel time.
We use the terms `drone' and `vehicle' interchangeably in the remainder of the
paper.

Our contributions can be summarized as follows: (1)
We analyze the effectiveness of fine-grained collaboration among vehicles. Our results demonstrate that the benefit of fine-grained collaboration is limited. Specifically, we show that the gap between serving packages with fine-grained collaboration (possibly involving multiple vehicles in the delivery of a single package) and serving them with coarse-grained collaboration (serving each package with a single vehicle) is bounded by a constant factor.
(2) The main contribution of this paper is a constant-factor approximation algorithm
for the Heterogeneous CD problem with coarse-grained collaboration. We reduce the problem to a Tree Combination Problem and adapt a primal-dual algorithm presented for the multiple-vehicle TSP to address the latter problem. Our model accounts for the heterogeneous capabilities of vehicles and achieves a constant approximation ratio for minimizing total travel time (or total cost). Furthermore, our reduction shifts the problem complexity to depend primarily on the number of vehicles rather than the number of requests, which is typically much larger than the number of vehicles. This advancement offers a practical framework for optimizing real-world routing problems, where variations in vehicle capabilities play a critical role.
(3) We describe an optimized implementation of our algorithm and report
experimental results regarding running time and solution quality for a range
of synthetic and real-world instances. Compared to a baseline heuristic without
approximation guarantee, our algorithm produces solutions of comparable and
sometimes substantially better quality while displaying a significantly reduced running time.

The remainder of the paper is structured as follows. Section~\ref{sec:related} discusses related work. Section~\ref{sec:Preliminary} introduces relevant notation and definitions.
Section~\ref{sec:FromNonpreToPre} proves that the gap between fine-grained and
coarse-grained schedules is bounded. In Section~\ref{sec:Non-preemptiveMin-sumCD},
we present our constant-factor approximation algorithm for the heterogeneous CD
problem with coarse-grained collaboration. In Section~\ref{sec:adjustments}, we
describe the optimizations we have applied for the implementation of the algorithm.
Then, Section~\ref{sec:exp} presents the experimental results. Conclusions are
given in Section~\ref{sec:conclusion}.

\subsection{Related Work}\label{sec:related}

\paragraph{Collaboration on a Single Request.} 
The problem of delivering a single package has been well studied~\cite{bartschi_et_al:LIPIcs.STACS.2017.10, bartschi_et_al:LIPIcs.MFCS.2018.56, carvalho2021fast}.  
These studies consider a problem where \( k \) agents, initially located at distinct nodes of an undirected graph with \( n \) nodes and weighted edges, must collaboratively deliver a single item from a source node \( s \) to a target node \( t \). The agents can traverse the graph edges starting at time \( 0 \), with each agent \( i \) moving at a specific velocity \( \nu_i \) and energy
consumption rate \( w_i \).
The objective of minimizing total consumption cost was first studied by Bärtschi et al.~\cite{bartschi_et_al:LIPIcs.STACS.2017.10}, who provided an \( O(n^3) \) algorithm to find the optimal solution. The objective of minimizing delivery time was first addressed by Bärtschi et al.~\cite{bartschi_et_al:LIPIcs.MFCS.2018.56}, who proposed an algorithm with time complexity \( O(k^2 m + k n^2 + \text{APSP}) \), where APSP refers to the time required to compute all-pairs shortest paths in a graph with \( n \) nodes and \( m \) edges. Later, Carvalho et al.~\cite{carvalho2021fast} improved the time complexity for this problem to \( O(k n \log n + k m) \). All these results are based on the common assumption that every agent can travel freely throughout the entire graph without any restrictions. 

Studies on collaboration for a single request also explore scenarios where each agent's travel distance is limited~\cite{chalopin2014data, chalopin2014data2}, where an agent's overall movement is restricted~\cite{erlebach_et_al:LIPIcs.ISAAC.2022.49}, or even where the package's path is restricted~\cite{CHALOPIN202187}. The problem in which each agent has an energy budget that constrains its total distance traveled has been shown to be strongly NP-hard in general graphs~\cite{chalopin2014data}. Scenarios where each agent is constrained to a specified movement subgraph have been studied for both minimizing total consumption and total delivery time, and it has been shown to be NP-hard~\cite{erlebach_et_al:LIPIcs.ISAAC.2022.49}. The variant in which the package must travel via a fixed path in a general graph has been studied by Chalopin et al.~\cite{CHALOPIN202187}. 

\paragraph{Collaboration on Two or More Requests.}  
Although significant progress has been made, particularly with the proposal of fast polynomial-time algorithms, in understanding collaborative delivery in single-package scenarios, the problem becomes computationally challenging once multiple packages are introduced~\cite{bartschi_et_al:LIPIcs.STACS.2017.10}. In fact, even with just two packages, the minimum delivery time collaborative delivery problem, without any additional restrictions, is already NP-hard~\cite{carvalho2021fast}.

\paragraph{Dial a Ride.} 
The Dial-a-Ride Problem (DARP)—in both its preemptive and non-preemptive forms—involves a set of objects (each specified by a source-destination node pair) and a fleet of vehicles. The objective is to compute the shortest possible tour such that all objects are picked up from their sources and delivered to their destinations, without exceeding the vehicle’s capacity at any point in time. This problem is typically studied under capacity constraints, with the common assumption that all vehicles are homogeneous.  
Our problem can be viewed as a heterogeneous, unit-capacity, non-preemptive DARP.
While Frederickson et al.~\cite{frederickson1976approximation} gave a 1.8-approximation for the unit-capacity case on general graphs, the heterogeneous setting, to the best of our knowledge, remains unexplored.

\paragraph{Heterogeneous TSP.}   
A special case of our problem, where the pickup location equals the drop-off location, can be viewed as a Multiple Depot Heterogeneous Traveling Salesman Problem (MDHTSP), since each package delivery is considered completed as soon as its corresponding node is visited, with
only the minor difference that vehicles need to return to their depots in the MDHTSP. Rathinam et al.~\cite{rathinam2020primal} present a 2-approximation algorithm for MDHTSP based on the primal-dual method. Our algorithm also leverages the primal-dual method, with an extension of their method.

\section{Preliminaries}
\label{sec:Preliminary} 
 We are given a metric space \((X, d)\), where \(d: X \times X \to \mathcal{R}_{\geq 0}\) is a distance function satisfying the triangle inequality. A set of vehicles (or drones) \(K\) with \(k = |K|\) is specified, where each drone \(j \in K\) has a depot (or initial location) \(r_j \in X\) and a speed \(p_j\). To traverse from node \(u \in X\) to node \(w \in X\), drone \(j\) takes time \(\frac{d(u, w)}{p_j}\). Without loss of generality, we assume that for any two drones \(i\) and \(j\) with \(i < j\), the speed satisfies \(p_i \geq p_j\).  
We are also given a set of \(m\) packages to be delivered, each specified by a pickup location \(s_i \in X\) and a drop-off location \(t_i \in X\). We use the terms \emph{package} and \(s\)-\(t\) pair interchangeably. Each drone can carry at most one package at a time.
 
One of the main objective functions in collaborative delivery problems is the total completion time, also
known as the sum of completion times. For a given solution, the completion time is computed for every
drone, and the objective function value is the sum of all these values. 
The task of our problem is to find a schedule for delivering all the packages from the start node to the destination node while minimizing the total completion time.   
 
We define a schedule $\pi_j$ for drone \( j \) as an ordered (by pickup time $t$) set of triples \( \{(u_i, w_i, t)\} \), where the drone picks up package \( P_i \) at location \( u_i \) at time \( t \) and delivers it to location \( w_i \). For any two consecutive triples in the schedule \( \{(u_i, w_i, t)\} \) and \( \{(u_{i'}, w_{i'}, t')\} \), the pickup time of the second must be at least the previous pickup time plus the travel time from \( u_i \) to \( w_i \) and from \( w_i \) to \( u_{i'} \). 
The time $t$ of the first triple \( \{(u_i, w_i, t)\} \) in $\pi_j$ must be at least the time it takes the drone to travel from its depot $r_j$ to $u_i$. 

The schedule for each package \( P_i \) can be obtained by collecting all triples involving package \( i \), i.e., \( \bar{\pi}_i= \{(u_i, w_i, t), ...\} \), from the union of all drone schedules \( \bigcup_j \pi_j \), and sorting them by the associated time \( t \). Note that for each package \( P_i \), the first triple should have \( u_i = s_i \), and the final triple should have \( w_i = t_i \); these may be the same triple.  
A package can only be picked up at a location \( u_i \) if it is the source \( s_i \), or after it has been dropped off at \( u_i \) by another drone (or the same drone) at an earlier time.

\begin{definition}[Min-Sum Collaborative  Delivery Problem (Min-Sum CD Problem)]
Given a metric space $(X, d)$, a set of $m$ packages $\mathcal{P} = \{s_i, t_i\}_{i=1}^m$ where $s_i$ is the pick-up location and $t_i$ is the drop-off location, and $k$ drones each with a depot $r_j$ and speed $p_j$ for each drone~$j$, 
the goal is to find schedules (or routes) \( \{\pi_j\}_{j \in [k]} \) such that the schedule for each drone \( j \) originates at its initial location \( r_j \), and each package \( i \) is transported from its source \( s_i \) to its destination \( t_i \).  
The objective is to minimize the total cost, defined as the sum of travel times across all drone routes: 
\[
\sum_{j=1}^k \left(t_{f_j} + \frac{d(u_{f_j}, w_{f_j})}{p_j}\right),
\]  
where \( (u_{f_j}, w_{f_j}, t_{f_j}) \) is the final triple in the
schedule $\pi_j$ of drone $j$. Note that drone $j$ finishes
its schedule at time $t_{f_j} + \frac{d(u_{f_j}, w_{f_j})}{p_j}$
because \( t_{f_j} \) is the time
at which it makes its final pick-up (at location $u_{f_j}$), and then
its final move is to take the package $P_{f_j}$ from $u_{f_j}$
to $w_{f_j}$ with speed $p_j$.
\end{definition}

We consider Min-Sum CD with both fine-grained and course-grained
collaboration. In the remainder of the paper, we refer to these
variants of the problem as the preemptive and the non-preemptive
variant, respectively.

\begin{definition}[Preemptive vs Non-preemptive]
Considering the output routes, we have the following two versions of Min-Sum CD:
\begin{itemize}
\item In the \emph{preemptive Min-Sum CD} problem, after
picking up a package from its source, it may (repeatedly) be left at an intermediate location
before being picked up by another (or the same) drone and transported further on its route to the destination. In this case, the schedule for each package may include multiple triples.

\item In the \emph{non-preemptive Min-Sum CD} problem, a package, once picked up from its source, is carried by the drone until it is dropped off at its destination. In this case, the schedule for each package includes only a single triple \( (s_i, t_i, t) \).
\end{itemize}
\end{definition}

In this paper, our focus will be on designing algorithms for the \emph{non-preemptive Min-Sum CD} problem, as we can demonstrate that the impact of preemption on the Min-Sum CD problem is limited: We show in the following section that the gap between the cost of the optimal preemptive and non-preemptive tours is no more than a constant factor.

\section{From Preemptive  to Non-preemptive}
\label{sec:FromNonpreToPre}

We show that the gap between the optimal collaborative preemptive and non-preemptive
tours is bounded.

\begin{theorem}
\label{thm:FromNonpreToPreMinSum}
An optimal schedule for the \emph{preemptive Min-Sum CD} problem with total cost $\OPT$ can be transformed into a non-preemptive schedule
with cost at most $3\cdot \OPT$.
\end{theorem}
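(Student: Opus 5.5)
The plan is to leave every drone's route of the optimal preemptive schedule in place and graft onto it one detour per package. Each detour delivers the whole package non-preemptively. Its cost will be charged to the carrying work that the preemptive schedule already spends on that package.

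\textbf{Notation.} Fix an optimal preemptive schedule. Let $W_j$ be the walk of drone $j$, so $\OPT=\sum_j \len(W_j)/p_j$. Package $P_i$ moves from $s_i$ to $t_i$ as a sequence of carried segments $(u^1,w^1),\dots,(u^q,w^q)$, with $u^1=s_i$, $w^q=t_i$ and $u^{\ell+1}=w^{\ell}$. Segment $\ell$ is carried by drone $j_\ell$. Define
\[
c_i=\sum_{\ell=1}^{q} d(u^\ell,w^\ell)/p_{j_\ell},
\]
the time spent carrying $P_i$. A drone carries at most one package at a time, so for each $j$ the carried segments of different packages are disjoint portions of $W_j$. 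Hence $\sum_i c_i\le \OPT$.

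\textbf{Assignment and detour.} Assign $P_i$ to the fastest drone $j^*$ among $j_1,\dots,j_q$, and let $\ell^*$ be the first segment it carries. In the new schedule, drone $j^*$ follows $W_{j^*}$ until it reaches $u^{\ell^*}$. There it performs the detour $u^{\ell^*}\to s_i\to t_i\to u^{\ell^*}$, carrying $P_i$ only on the middle leg. It then resumes $W_{j^*}$; every portion of $W_{j^*}$ that used to carry $P_i$ is now traversed empty. By the triangle inequality along the chain of segments:
\begin{itemize}
\item $d(u^{\ell^*},s_i)\le\sum_{\ell<\ell^*} d(u^\ell,w^\ell)$;
\item $d(s_i,t_i)\le\sum_{\ell} d(u^\ell,w^\ell)$;
\item $d(t_i,u^{\ell^*})\le\sum_{\ell\ge\ell^*} d(u^\ell,w^\ell)$.
\end{itemize}
So the detour has length at most $2\sum_\ell d(u^\ell,w^\ell)$. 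Since $p_{j^*}\ge p_{j_\ell}$ for every $\ell$, its duration at speed $p_{j^*}$ is at most $2c_i$.

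\textbf{Feasibility.} This is where I expect the argument to need the most care, though it is not deep.
\begin{itemize}
\item At $u^{\ell^*}$ drone $j^*$ was about to carry $P_i$ in the original schedule, so in the new schedule it holds no other package there. The unit-capacity constraint therefore holds during the detour.
\item If several packages are assigned to $j^*$, their detours are inserted at distinct points of $W_{j^*}$ (or nested in order at the same point), and each starts and ends at the same location.
\item Every package now appears in exactly one triple $(s_i,t_i,t)$, and pickups happen only at sources. So no package has to wait for a handoff.
\item The non-preemptive schedule imposes no timing constraint beyond each drone's own sequence. Hence each drone's pickup times can simply be recomputed along its new walk.
\end{itemize}

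\textbf{Total cost.} The new cost is at most
\[
\sum_j \len(W_j)/p_j+\sum_i 2c_i\le \OPT+2\,\OPT=3\,\OPT .
\]
A drone's final move may now be empty travel rather than a delivery. Such a tail can only be dropped, which does not increase the cost.
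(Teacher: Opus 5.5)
Your proof is correct and is essentially the paper's argument: each package goes to the fastest drone that carries it in the preemptive optimum, a detour through $s_i$ and $t_i$ is spliced in at that drone's first pickup of the package, and the detour is charged to twice the package's carried length at that speed. The only difference is that the paper resumes the tour at the drop-off point $w^{\ell^*}$ rather than returning to $u^{\ell^*}$, which does not affect the bound.

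One small correction: a drone's cost is its completion time, which may include waiting at handoffs. So you should write $\OPT\ge\sum_j \len(W_j)/p_j$ rather than equality; the inequality is all your final chain needs.
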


\begin{proof}
Consider an optimal schedule $\Pi^*$ for the  \emph{Preemptive Min-Sum CD} problem, with $\pi^*_i$ denoting the schedule of drone~$i$ in the optimal schedule. 
 The cost of this optimal preemptive solution, $\OPT$, also serves as a lower bound for the optimal solution to the \emph{Non-preemptive Min-Sum CD} problem. Given an optimal schedule $\Pi^*= \{\pi^*_i\}_{i\in[k]}$, we construct a feasible schedule  for the non-preemptive Min-Sum CD problem with value at most $3\cdot \OPT$. 

The construction of a feasible non-preemptive schedule $\Pi$ from $\Pi^*$ is done via the following algorithm.  Recall that the drones are indexed by non-increasing speed.

	\begin{algorithm}[H]
 \caption{Reducing a preemptive solution to a non-preemptive solution}\label{alg:FromNonpreToPreMinSum}
 		\begin{algorithmic}[1]
 			\State $\Pi=\emptyset$, $i=1$.
 			\While{$i\le k$}
 				\State Obtain from $\pi^*_i$ a schedule $\pi_i$ serving all unscheduled packages in the order of serving in $\pi^*_i$.
 				\State Set $\Pi\gets \Pi\cup\{\pi_i\}$ and $i\gets i+1$.
 			\EndWhile
 		\end{algorithmic}
 		\end{algorithm}
 		
\begin{figure}
\centering
\begin{subfigure}{.43\textwidth}
  \centering
  \includegraphics[width=.8\linewidth]{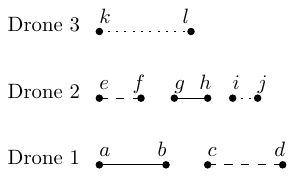}
  \caption{Original preemptive schedule}
\end{subfigure}%
\begin{subfigure}{.55\textwidth}
  \centering
  \includegraphics[width=1\linewidth]{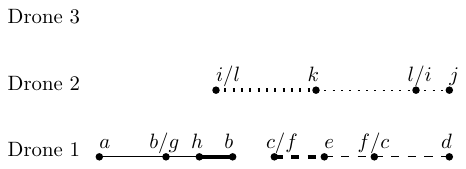}
  \caption{Resulting non-preemptive schedule}
\end{subfigure}
\caption{Consider the preemptive schedule with three drones in Figure (a), represented by solid, dotted, and dashed lines, corresponding to three different packages, respectively. Drones $2$ and $3$ are involved in parts of the dotted package schedule, while drones $1$ and $2$ are involved in parts of both the dashed and the solid package schedules. Note that points $b/g$, $c/f$, and $i/l$ are co-located. 
In the constructed non-preemptive schedule in Figure (b), additional travel distances are incurred for each package due to a drone deviating from its original route—either to pick up a package or to return to a point where it can resume its original schedule after a delivery. These additional distances are represented by thicker lines matching the style of the corresponding package in Figure (b). 
For the solid package, the fastest drone, drone $1$, first serves the sub-path $a \rightarrow b$, then the sub-path $g \rightarrow h$ to complete the delivery, and finally returns to point $b$ to resume the original schedule. 
For the dashed package, drone $1$ travels from point $c$ (which shares the location with $f$) to the pickup point $e$, then serves the sub-path $e \rightarrow f$ followed by the sub-path $c \rightarrow d$ to complete the delivery.
}
\label{fig:fromcollaborativetopreemptive}
\end{figure}

The schedules $\pi^*_i$ that constitute $\OPT$ are processed in increasing order of~$i$, i.e., the schedule of the fastest
drone is processed first and that of the slowest drone last.
We say a drone serves package $j$ if it carries $j$ at least once during its path.
Assume drone $i$ serves $k_i$ different packages $P_{i_1},\ldots,P_{i_{k_i}}$, indexed
in the order in which $i$ picks them up the first time, that aren't served by any
drone $i'$ with $i'<i$. For each $1\le j\le k_i$,
assume drone $i$ picks up $P_{i_j}$ at vertex $v_{i_j}$ and drops it off at $w_{i_j}$
the first time it carries $P_{i_j}$. For each $j$ with $1\le j\le k_i$, we
modify the tour of drone $i$ as follows: after reaching $v_{i_j}$, the drone
moves to $s_{i_j}$, picks up $P_{i_j}$ there, carries it directly to $t_{i_j}$,
then moves to $w_{i_j}$ and continues its original tour. After processing $\pi^*_i$
in this way to obtain a schedule $\pi_i$ for drone~$i$, the packages $P_{i_1},\ldots,P_{i_{k_i}}$
are served non-preemptively by drone~$i$. If $i<k$, we proceed to handle drone $i+1$
in the same way. 
Figure~\ref{fig:fromcollaborativetopreemptive} gives an example of preemptive tours and the constructed non-preemptive tours.

The algorithm transforms an optimal preemptive solution $\{\pi^*_i\}_{i\in[k]}$ into a feasible non-preemptive one $\{\pi_i\}_{i\in[k]}$.
For each package $P_j$, let $i(j)$ denote the fastest drone that serves $P_j$ in the preemptive schedule. Let $\bar{\pi}^*_j$ represent the set of triplets involving package $P_j$ that are traversed in the preemptive schedule, and let $w(\bar{\pi}^*_j)$ denote the total length of these triplets. 
Then $\sum_j (w(\bar{\pi}^*_j) / p_{i(j)})$ is a lower bound
on the total cost of the preemptive solution, as package $P_j$
is served only by drone $i(j)$ and slower drones. When transforming the solution
into a feasible non-preemptive one using the algorithm described above, 
the increase in the total cost is at most
$\sum_j (2 w(\bar{\pi}^*_j) / p_{i(j)})\le 2\OPT$.  
This is because letting drone  
$i(j)$ that originally serves only the subpath of $\bar{\pi}^*_j$ from some $v_j$ to some $w_j$ in the preemptive
solution now serve the whole package $P_j$ requires extra trips from
$v_j$ to $s_j$ and back, and from $w_j$ to $t_j$ and back. This results in an increase in total distance of at most $2w(\bar{\pi}^*_j)$, because the total
length of the subpaths from $v_j$ to $s_j$ and from $w_j$ to $t_j$ is
at most $w(\bar{\pi}^*_j)$ as these subpaths are disjoint parts of $\bar{\pi}^*_j$.
In conclusion, the constructed
non-preemptive solution  has total cost at most $3\cdot \OPT$ because
the original preemptive solution has cost at most $\OPT$ and the increase
in the total cost when transforming the solution into a non-preemptive
one is at most $2\cdot \OPT$.
\end{proof} 

Theorem~\ref{thm:FromNonpreToPreMinSum} implies the following.

\begin{corollary}
    An $\alpha$-approximation algorithm for the \emph{Non-preemptive Min-Sum CD} problem implies a
$3\alpha$-approximation algorithm for the \emph{Preemptive Min-Sum CD problem}.
\end{corollary}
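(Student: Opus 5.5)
The plan is to run the $\alpha$-approximation algorithm for the non-preemptive problem on the given instance unchanged and return its schedule. Every non-preemptive schedule is also a feasible preemptive schedule: each package is described by the single triple $(s_i,t_i,t)$, and the definition of the preemptive variant allows this. The cost is the same sum of drone travel times under both readings, so the returned schedule is feasible for the preemptive problem and has the same objective value.

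Next I will compare optimal values. Write $\OPT_{p}$ for the optimal preemptive cost and $\OPT_{np}$ for the optimal non-preemptive cost. Theorem~\ref{thm:FromNonpreToPreMinSum} turns an optimal preemptive schedule into a feasible non-preemptive schedule of cost at most $3\cdot\OPT_p$, so $\OPT_{np}\le 3\cdot\OPT_p$. The algorithm's output has cost at most $\alpha\cdot\OPT_{np}$, and therefore at most $3\alpha\cdot\OPT_p$, which is the claimed ratio.

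The only delicate point is whether the theorem's transformation yields a genuinely feasible non-preemptive schedule in the sense of the definitions. In particular, every package must be served exactly once, by the fastest drone touching it. The modified tours must also keep valid pickup times; this is automatic because drones never wait on others once every package is carried non-preemptively by one drone. These conditions are already part of Theorem~\ref{thm:FromNonpreToPreMinSum}, so I will simply invoke it. The running time is that of the non-preemptive algorithm, since there is no overhead. I expect no real obstacle beyond checking that the two problem variants share the same instance format and the same cost function.
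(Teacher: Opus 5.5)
Your proposal is correct and is essentially the argument the paper intends when it says the corollary follows from Theorem~\ref{thm:FromNonpreToPreMinSum}: any non-preemptive schedule is a feasible preemptive schedule of the same cost, and the theorem gives $\OPT_{\mathrm{np}}\le 3\cdot\OPT_{\mathrm{p}}$, so the $\alpha$-approximate output costs at most $3\alpha\cdot\OPT_{\mathrm{p}}$. Your third paragraph is harmless but unnecessary, since the feasibility of the transformed schedule is exactly what the theorem already provides.
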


The following lemma shows that the total length of all shortest package routes,
divided by the speed of the fastest drone, is a lower bound on the optimal cost of the Min-Sum CD problem.

\begin{lemma}
\label{lem:lengthoverpklowerbound}
Let an instance $I$ of Preemptive or Non-preemptive Min-Sum CD be given,
and let $p_1$ be the speed of the fastest drone.
Then $$\frac{\sum_{i=1}^m d(s_i,t_i)}{p_1}$$ is a lower bound
on the optimal sum of completion times for~$I$.
\end{lemma}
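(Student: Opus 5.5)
It suffices to handle the preemptive case. Every non-preemptive schedule is also a feasible preemptive schedule, so the optimal preemptive cost is at most the optimal non-preemptive cost. A lower bound on the former is therefore a lower bound on the latter. So fix an arbitrary feasible preemptive schedule $\{\pi_j\}_{j\in[k]}$. We will show that its cost is at least $\sum_i d(s_i,t_i)/p_1$.

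The first step is to charge travel to packages through the loaded moves. For each package $P_i$, consider its package schedule $\bar{\pi}_i=\{(u^1,w^1,t^1),\dots,(u^\ell,w^\ell,t^\ell)\}$, sorted by time. By the feasibility rules, $u^1=s_i$ and $w^\ell=t_i$. Moreover, each later pickup location $u^{h+1}$ is the location where the package was last dropped, namely $w^h$. Hence the triples form a chain from $s_i$ to $t_i$, and the triangle inequality gives
\[
\sum_{h=1}^{\ell} d(u^h,w^h)\ \ge\ d(s_i,t_i).
\]
Note that the chain property needs care: a package may only be picked up where it currently lies. A short induction on $h$ makes this precise.

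The second step converts distance into time. Each triple $(u^h,w^h,t^h)$ belongs to the schedule of some drone $j$, and carrying the package from $u^h$ to $w^h$ takes that drone time $d(u^h,w^h)/p_j \ge d(u^h,w^h)/p_1$, since $p_j\le p_1$. A drone carries at most one package at a time, so the loaded moves in a single drone's schedule are disjoint time intervals within $[0,\text{completion time}]$. This follows because consecutive pickups are separated by at least the carrying time, by the schedule definition. Therefore drone $j$'s completion time, $t_{f_j}+d(u_{f_j},w_{f_j})/p_j$, is at least the sum of $d(u,w)/p_j$ over all triples in $\pi_j$.

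Finally, summing over drones gives
\[
\text{cost}\ \ge\ \sum_j\sum_{(u,w,t)\in\pi_j}\frac{d(u,w)}{p_1}=\sum_i\sum_{(u,w,t)\in\bar{\pi}_i}\frac{d(u,w)}{p_1}\ \ge\ \frac{\sum_i d(s_i,t_i)}{p_1},
\]
where the middle equality regroups the triples by package. The main step needing attention is the claim that a drone's completion time dominates the sum of its loaded travel times. I would prove it by induction along $\pi_j$, showing that the pickup time of the $h$-th triple is at least the sum of the carrying times of the earlier triples. This uses exactly the constraint that the next pickup occurs no earlier than the previous pickup plus the travel time from $u$ to $w$ and then to the next pickup location.
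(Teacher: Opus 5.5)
Your proof is correct and follows essentially the same route as the paper: bound the cost from below by the total loaded length $\sum_i w(\bar{\pi}_i)$ divided by $p_1$, then use $w(\bar{\pi}_i)\ge d(s_i,t_i)$. Your reduction to the preemptive case, the chain/triangle-inequality step, and the induction showing each drone's completion time dominates its total carrying time fill in details the paper asserts without proof.
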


\begin{proof} 
Let $\Pi^*$ be an optimal solution to instance $I$.
For each $1 \le i \le m$, let $\bar{\pi}_i$ denote the set of triplets involving package $P_i$ in $\Pi^*$.  
In the non-preemptive case, $\bar{\pi}_i$ contains only a single triplet $(s_i, t_i, t)$.  
In the preemptive case, $\bar{\pi}_i$ may be a concatenation of triplets handled by multiple drones. 
The total cost of $\Pi^*$ is at least $\left( \sum_{i=1}^m w(\bar{\pi}_i) \right) / p_1$,  
where $w(\bar{\pi}_i)$ denotes the total length of the triplets in $\bar{\pi}_i$. 
Since $w(\bar{\pi}_i) \ge d(s_i, t_i)$ for each $i$, the lemma follows.
\end{proof}

\section{Non-preemptive Min-Sum CD}
\label{sec:Non-preemptiveMin-sumCD}
We first consider the special case where all drones are initially located
in the same depot and show that using only the fastest drone provides
a constant-factor approximation (Section~\ref{sec:samedepot}).
After that, we tackle the general case in Section~\ref{sec:arbdepot}.

\subsection{Identical Depot Locations}\label{sec:samedepot}
Assume that all drones are located at the same depot. We start with the following lemma.

 \begin{lemma}
 \label{lem:single-depot-MST}
      Let an instance $I$ of non-preemptive Min-Sum CD where all drones have the same
     depot location be given.
     Denote by $T$ a minimum spanning tree for the depot location and the sources
     of all packages, and use $w(T)$ to denote the sum of its edge lengths.
     Then $w(T)/p_1$ is a lower bound on the optimal sum of completion times,
     where $p_1$ is the speed of the fastest drone.
 \end{lemma}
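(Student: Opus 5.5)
Fix an optimal non-preemptive schedule $\Pi^*=\{\pi^*_j\}_{j\in[k]}$ for $I$, with common depot $r$. The plan is to extract from the routes of the drones that are actually used a connected structure containing $r$ and every source $s_i$, bound its total length against the cost of $\Pi^*$, and then compare it with $w(T)$.

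First, for each drone $j$, its route is a walk starting at $r$. Since the schedule is non-preemptive, this walk visits the source $s_i$ of every package it carries, namely at the pickup of the triple $(s_i,t_i,t)$. The length $L_j$ of this walk (depot to first pickup, then every carry and every repositioning move, up to the final drop-off) satisfies $L_j/p_j \le t_{f_j} + d(u_{f_j},w_{f_j})/p_j$. Any waiting time only increases the right-hand side, so this inequality holds regardless of idling. Because $p_j\le p_1$, we obtain $L_j/p_1 \le L_j/p_j$, and summing over $j$ gives $\sum_j L_j/p_1 \le \OPT$.

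Second, take the union $H$ of all these walks, viewed as a multigraph on the points they visit. Every walk starts at the shared depot $r$, so $H$ is connected. It contains $r$ and all sources $s_1,\dots,s_m$, since every package is picked up by some drone. Its total length is at most $\sum_j L_j$. Shortcutting each walk to the subsequence of points in $\{r\}\cup\{s_i\}$ does not increase the length, by the triangle inequality. The result is a connected graph on exactly the point set spanned by $T$, in the metric closure, of total length at most $\sum_j L_j$. It therefore contains a spanning tree of weight at least $w(T)$. Combining, $w(T)/p_1 \le \sum_j L_j/p_1 \le \OPT$.

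The only delicate step is the shortcutting argument: we must check that the shortcut walks remain connected through $r$, which holds because each walk begins at $r$. We must also account for sources that may coincide with each other or with $r$; these contribute zero-length edges and cause no difficulty. Drones that are unused contribute $L_j=0$ and can be ignored. Everything else is a direct consequence of the definitions, in the same spirit as Lemma~\ref{lem:lengthoverpklowerbound}.
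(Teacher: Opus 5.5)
Your proposal is correct and follows essentially the same route as the paper: take the union of the optimal drones' walks, which is connected through the common depot. Shortcut it to the depot and the sources, compare its length with $w(T)$, and divide by $p_1$ using $p_j \le p_1$. You spell out a few details the paper leaves implicit: the per-drone bound $L_j/p_j$ on completion time (including waiting), unused drones, and coinciding points.
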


 \begin{proof}
     Let $\pi^*_1,\pi^*_2,\ldots,\pi^*_k$ be the schedules (or paths) followed by the $k$ drones
     in an optimal solution. The union of these paths (after short-cutting all
     nodes that are not depot or packages sources) is a connected graph
     that contains the depot and the sources of all packages and uses only
     direct edges between such nodes (as opposed to, say, a Steiner tree that
     could use additional Steiner nodes to connect the nodes). As the minimum
     spanning tree has minimum total length among all such connected graphs,
     the total length of the edges of those paths is at least $w(T)$.
     Since no drone has speed larger than $p_1$, the total completion
     time of the optimal solution is at least $w(T)/p_1$.
 \end{proof}

 \begin{theorem}[Min-Sum CD with Identical Depot Locations]
\label{thm:Non-preemptiveMin-sumCDIdenticalDepot4}
 There is a $4$-approximation algorithm for the non-preemptive Min-Sum CD problem when all drones have the same depot location.
\end{theorem}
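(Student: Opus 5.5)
We use only the fastest drone~$1$, whose depot $r$ is shared by all drones, and give it a tour built from the minimum spanning tree $T$ of Lemma~\ref{lem:single-depot-MST}. The cost will be compared against two lower bounds on $\OPT$: $w(T)/p_1$ from Lemma~\ref{lem:single-depot-MST}, and $\sum_{i=1}^m d(s_i,t_i)/p_1$ from Lemma~\ref{lem:lengthoverpklowerbound}. Both hold for the non-preemptive optimum.

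\textbf{Step 1: build the route.} Double the edges of $T$ and take an Euler tour. Shortcut repeated vertices to obtain a closed walk starting at $r$ that visits every source $s_i$ in some order $s_{\sigma(1)},\dots,s_{\sigma(m)}$. Its length is at most $2w(T)$. Drone~$1$ follows this walk with detours. When it reaches $s_{\sigma(j)}$, it picks up package $\sigma(j)$, carries it directly to $t_{\sigma(j)}$, and returns to $s_{\sigma(j)}$. It then continues along the walk to the next source, and stops after the last delivery. The result is a feasible non-preemptive schedule: each package is carried by one drone from its source to its destination, and the drone carries at most one package at a time.

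\textbf{Step 2: bound the length.} Drone~$1$ travels at most
\[
2w(T)+\sum_{i=1}^m 2\,d(s_i,t_i).
\]
The first term is the doubled tree walk. The second term comes from each detour $s_i\to t_i\to s_i$, whose length is $2d(s_i,t_i)$ by symmetry of the metric; the final return is not even needed. All other drones stay idle at cost $0$. Dividing by $p_1$, the total cost is at most
\[
\frac{2w(T)}{p_1}+\frac{2\sum_i d(s_i,t_i)}{p_1}\le 2\OPT+2\OPT=4\OPT,
\]
by Lemmas~\ref{lem:single-depot-MST} and~\ref{lem:lengthoverpklowerbound}. The construction runs in polynomial time, since it needs only an MST and an Euler tour.

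\textbf{Main obstacle.} The only delicate point is that both lower bounds must hold for the non-preemptive optimum in the common-depot setting. Lemma~\ref{lem:lengthoverpklowerbound} applies directly. Lemma~\ref{lem:single-depot-MST} is stated for exactly this setting. It relies on the fact that all drone paths start at the shared depot, so their union is connected and contains all sources. With both bounds in hand, no further charging argument is needed, and the factor~$4$ follows from the doubling of the MST together with the back-and-forth detours.
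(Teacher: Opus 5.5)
Your proposal is correct and matches the paper's proof essentially step for step. Both use the MST on the depot and sources, double it into an Euler tour, attach detours $s_i\to t_i\to s_i$ for the fastest drone, and bound the cost by $(2w(T)+2\sum_i d(s_i,t_i))/p_1\le 4\OPT$ via Lemmas~\ref{lem:single-depot-MST} and~\ref{lem:lengthoverpklowerbound}. The only cosmetic difference is that you shortcut repeated vertices, whereas the paper attaches each detour at the first occurrence of $s_i$ in the Euler tour without shortcutting.
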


\begin{proof}
     The algorithm first computes a minimum spanning tree $T$ for the depot
    and all source locations. Then it doubles every edge of $T$
    and creates a Eulerian tour $\pi$ of $T$. Next, 
    it attaches to each package source $s_i$
    at its first occurrence in $\pi$ the edges $(s_i,t_i)$ (to deliver package $P_i$) and $(t_i,s_i)$ (to return to $s_i$), yielding the tour $\pi'$.
    Finally, using the fastest drone to follow tour $\pi'$ starting at the
    depot and ending when the last package is delivered is output as the solution.

    The sum of completion times of the solution is equal to the completion
    time of the fastest drone, as no other drone is used. The completion
    time of the fastest drone is at most $$\frac{d(\pi')}{p_1}
    \le \frac{2w(T)+2\sum_{i=1}^m d(s_i,t_i)}{p_1}\,,$$
    which is bounded by $2\OPT+2\OPT=4\OPT$ by Lemmas~\ref{lem:single-depot-MST} and~\ref{lem:lengthoverpklowerbound}.
\end{proof}

\subsection{Arbitrary Depot Locations}\label{sec:arbdepot}
 We translate the Min-Sum CD problem into a certain tree combination problem.
An instance of the tree combination problem consists of a set of trees, referred to
as \emph{original trees}, each associated with a drone speed~$p_i$.
We can add edges to connect different trees. A tree formed by a number
of original trees together with the edges that have been added to connect
them into a single tree is called a \emph{large tree}. For every large
tree~$T$, the cost of $T$ is defined to be the total edge length of $T$
divided by the fastest drone speed associated with any of the original
trees that are included in~$T$.

\begin{definition}[Tree Combination Problem]
\label{def:TreeCombination}
Given a metric space $(V, d)$ and a set of trees $\mathcal{T} = {T_1, T_2, \ldots, T_k}$, where $V(T_i) \subseteq V$ and each tree $T_i$ is rooted at a drone location $r_i \in V$ with a corresponding drone speed~$p_i$, the objective is to combine trees  so as to minimize the sum,
over all the resulting large trees, of the total edge length of the large tree  divided
by the highest speed among the drones associated with the original trees included in
the large tree.
\end{definition}

\begin{lemma}[Reducing Min-Sum CD to a Tree Combination Problem]
\label{lem:ReducingMin-sumCDTtoCombiantion}
Every given instance of the Min-Sum CD problem can be reduced
in polynomial time to
an instance of the Tree Combination Problem such that an
$\alpha$-approximation algorithm for the Tree Combination Problem
implies a $6\alpha$-approximation algorithm for the Min-Sum CD problem. 
\end{lemma}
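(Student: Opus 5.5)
We need to design the reduction: the original trees should represent the drones and their depots, and the packages should be encoded somehow. The natural choice is an MST-type structure over depots and package sources in which each package's $s_i$--$t_i$ edge is accounted for separately. The constant 6 suggests the following accounting: a factor of 2 for doubling trees into tours, a factor of 2 for package edges traversed there and back, and a loss of 3 somewhere, or alternatively $2\cdot 3$.

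We build the instance of the Tree Combination Problem as follows. For each drone $j$ we take the original tree $T_j$ to be the single vertex $r_j$ with speed $p_j$. For each package $i$ we take a tree consisting of the single edge $\{s_i,t_i\}$. Since every tree in Definition~\ref{def:TreeCombination} must be associated with a speed, I would instead merge the package edges into the drone trees only after combination. Concretely, the tree combination instance has the $k$ drone trees together with the package sources as points that must be connected. To keep within the definition, I would attach package $i$ to its nearest depot via the edge $(r_j,s_i)$ plus the edge $(s_i,t_i)$, where $j$ minimizes $d(r_j,s_i)$. We then argue this initial assignment costs at most a constant times $\OPT$, using Lemma~\ref{lem:lengthoverpklowerbound} for the $(s_i,t_i)$ parts.

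\textbf{Upper direction (tree solution to schedule).} Given any solution to the tree instance, each large tree $T$ is traversed by its fastest drone $j$. Doubling $T$ and taking an Euler tour from $r_j$, with the detour $s_i\to t_i\to s_i$ inserted at the first visit of $s_i$, yields a non-preemptive schedule of time at most $2w(T)/p_j$ plus the package terms. Summing over large trees shows that the schedule cost is at most $2$ times the tree cost, as in Theorem~\ref{thm:Non-preemptiveMin-sumCDIdenticalDepot4}.

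\textbf{Lower direction (schedule to tree solution).} We show that the optimal tree combination cost is at most $3\OPT$. Take an optimal non-preemptive schedule. The path of each used drone $j$ together with its carried package edges is a connected structure containing $r_j$ and all of its sources; this gives a large tree of cost at most the drone's travel time. Unused drones form singleton large trees of cost $0$. The difficulty is that the large trees must be unions of the prescribed original trees plus added edges, so we must charge the original package attachments that were placed at the nearest depot rather than at the serving drone's depot. For this I would bound $d(r_{j'},s_i)\le d(r_j,s_i)$ for the serving drone $j$, and use a charging argument showing these extra edges cost at most $2\OPT$ in total: the edges $(s_i,t_i)$ are paid via Lemma~\ref{lem:lengthoverpklowerbound}, and the attachment edges via the optimal paths. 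Combining the two directions gives $2\cdot 3=6$, so an $\alpha$-approximation for trees yields $6\alpha$.

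\textbf{Main obstacle.} The step I expect to be hardest is this charging in the lower direction. A package's nearest depot may belong to a slow drone while the optimal schedule serves it with a fast drone. We must then merge trees with speeds that differ, and the cost is divided by the fastest speed, which only helps. However, attachment edges measured against the nearest depot rather than along the optimal path need care. My first attempt would be to define the original trees by an MST-like forest: Prim's algorithm run from all depots simultaneously over the sources. I would then argue, as in Lemma~\ref{lem:single-depot-MST}, that this forest weighs at most the total optimal path length, which is at most $p_1\OPT$ only up to speed factors. That mismatch of speeds is exactly where the constant 3 must be extracted.
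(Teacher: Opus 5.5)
Your upper direction is correct and is the paper's: each large tree is traversed as an Euler tour by its fastest drone, losing a factor $2$. The gap is the lower direction, that the tree instance admits a solution of cost at most $3\OPT$. You never establish it, and the construction you lead with cannot support it.

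If each source is joined directly to its nearest depot, take one drone of speed $p$ and $m$ packages with $s_i=t_i$, all within distance $\epsilon$ of a point at distance $D$ from the depot. Then $\OPT\le (D+2m\epsilon)/p$, while the single original tree already has length about $mD$. So the reduction loses a factor $\Theta(m)$, and no charging of attachment edges to optimal paths can work; the inequality $d(r_{j'},s_i)\le d(r_j,s_i)$ does not help.

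Your fallback, an MST of the sources with all depots contracted into one node, is the right construction (it is Algorithm~\ref{alg:CreatCombiantion}). But comparing its \emph{total} weight with the total optimal path length, or paying for the $s_i$--$t_i$ edges via Lemma~\ref{lem:lengthoverpklowerbound}, only gives bounds scaled by $p_1$. In the tree instance, a tree that lands in a large tree without a fast drone is divided by a slow speed. Also, ``the path of each used drone is a large tree'' is not a feasible solution. Merging, for each optimal tour, the original trees whose packages it serves is not well defined either: one original tree may contain packages served by several tours, and the drone running a tour need not own any tree it touches.

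The missing idea is a per-group accounting with matched index sets. The paper's Algorithm~\ref{alg:ConstructLargeTrees} partitions $\{1,\dots,k\}$ into groups closed in both directions under the relation ``tree $T_i$ contains a package served by tour $\pi^*_g$''. Each group is opened at its smallest unused index $i_q$, so within a group the original trees and the optimal tours carry the same indices and the same packages. Three facts then hold:
\begin{enumerate}
\item[(i)] The group's trees have total length at most twice the group's tour length (Observation~\ref{obs:constructtrees}). Their union, with depots contracted, is a subtree of the MST through $r$, hence an MST of its own vertex set, which the short-cut tours span; the $s$--$t$ edges are carried by those same tours.
\item[(ii)] Every tree joins the group through a tour that links it to a tree already present. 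So the MST edges connecting the group's trees cost at most the group's tour length.
\item[(iii)] The resulting large tree contains $T_{i_q}$, and drone $i_q$ is at least as fast as every drone whose tour lies in the group. This gives $\len(L_{i_q})/p_{i_q}\le 3\sum_{g}\len(\pi^*_g)/p_g$ over the group.
\end{enumerate}
Summing over groups gives $3\OPT$, and with your factor $2$ the claimed $6\alpha$. Step (iii), made possible by the index matching, is exactly the resolution of the speed mismatch you flag but leave open.
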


\begin{proof}
Given an instance of the min-sum CD problem, we construct an instance of the tree combination problem with the following algorithm:

\begin{algorithm}[H]
 \caption{Constructing Trees $((V, d), \{s_i, t_i\}_{i\in [m]}, \{r_i\}_{i\in[k]})$}\label{alg:CreatCombiantion}
 		\begin{algorithmic}[1]
 			\State Create a new graph $G'$ from the metric $(\{s_i\}_{i\in[m]}\cup \{r_i\}_{i\in[k]}), d)$ by contracting the set of depots $R=\{r_1, r_2, \ldots, r_k\}$ into a single node $r$.
\State Compute the minimum spanning tree (MST) $T$ of the graph $G'$.
\State Obtain the forest $\{T_i\}_{i\in [k]}$ from $T$ by un-contracting the node $r$ back into the original depots in $R$, such that each tree $T_i$ in the forest is rooted at a distinct depot $r_i$.
\State For each package source vertex $s_i$, connect vertex $t_i$ to $s_i$ in its respective tree. 
 		\end{algorithmic}	\end{algorithm}

The algorithm begins by contracting all  depots into a single node. Specifically, this involves: (i) introducing a new node $r$ and removing the depots in $R$, (ii) for each depot $r_i \in R$, every edge $(s, r_i)$ (for every source~$s$) induces an edge $(s, r)$, and (iii) setting $d(s, r) = \min_{i\in [k]} d(s, r_i)$. The algorithm then computes a minimum spanning tree (MST) in the contracted graph. In Line~3, the MST is broken into a set $\{T_i\}_{i\in[k]}$ of $k$ disjoint trees by un-contracting the nodes in $R$. This un-contraction means that an edge $(s, r)$ is mapped back to an edge $(s, r_i)$, where $d(s, r) = d(s, r_i)$. Note that, by construction, every tree $T_i$ is rooted at $r_i$ and each vertex is covered in exactly one tree. Tree $T_i$ is associated with drone speed $p_i$.
If several drones are located at the same depot~$r_i$, the drone with the fastest speed among the
drones at that depot is chosen.

Note that a solution to the constructed instance of the tree combination problem with objective value $x$ can
be used to construct a solution to the non-preemptive Min-Sum CD problem with
sum of completion times at most $2x$ as follows: For each large tree $T'$ with
highest drone speed $p_i$, let the drone with speed $p_i$ traverse an Euler tour of $T'$.

\begin{observation}
\label{obs:constructtrees}
For any subset $U$ of the trees $\{T_i \mid 1\le i\le k\}$,
the total length of the optimal tours for the packages contained in trees of $U$
is at least one half of the total length of the trees in $U$.
\end{observation}

\begin{obsproof}
The total length of the trees in $U$ is equal to the weight of the MST
of the subgraph $G''$ of $G'$ induced by $r$ and all the sources in $U$, plus the sum of
$d(s_j,t_j)$ for all packages $P_j$ whose source $s_j$ is in $U$.
Let $\Pi^*$ be optimal tours for serving the packages whose sources
are in $U$. For each tour $\pi^*$ in $\Pi^*$, consider the \emph{shortened tour}
obtained from $\pi^*$ by only visiting the sources on $\pi^*$
and short-cutting all other nodes. The union of these shortened tours forms
a spanning subgraph of $G''$ and hence has total length at least the
weight of the MST of $G''$. Therefore, the weight of the MST is at most
the total length of the shortened optimal tours, and hence at most the
total length of the optimal tours. Furthermore, a separate path of length at least $d(s_j,t_j)$
must be included in the tours in $\Pi^*$ for each package with $s_j$ in $U$.
Hence, the total length of the optimal tours is at least the
sum of $d(s_j,t_j)$ over all packages $P_j$ with $s_j$ in $U$.
This shows that the total length of the trees in $U$ is at most twice
the total length of the optimal tours.
\end{obsproof}

Let $\Pi^*= \{\pi^*_1, \pi^*_2, \ldots, \pi^*_k\}$ be an optimal schedule, with total cost~$\OPT$, for the given instance of the min-sum CD problem. If the package paths of each constructed tree $T_i$ are entirely contained within one of the optimal tours, and if the drone used to execute an optimal tour is one of the drones
associated with a tree $T_i$ whose packages that tour serves, then the lemma can be proved as follows. Consider one
optimal tour $\pi^*_j$, and let $U_j$ denote the set of trees $T_i$ that are served
by $\pi^*_j$. We combine the trees in $U_j$ into a large tree $T'_j$, using
the edges of an MST in the graph $H$ on $U_j$ where every tree $T_i$ in $U_j$
is contracted to a single node.
The total length of the edges used for the combination is at most the length
of the tour $\pi^*_j$, as $\pi^*_j$ induces a connected subgraph of $H$. Hence,
by Observation~\ref{obs:constructtrees}, the total length of $T'_j$
is at most $3$ times the total length of $\pi^*_j$. As the same argument holds
for every optimal tour $\pi^*_j$, the constructed solution to the tree
combination problem has objective value at most $3\OPT$.
An $\alpha$-approximation algorithm for the tree combination problem
hence produces a solution with objective value at most $3\alpha\OPT$,
giving a solution with total cost at most $6\alpha\OPT$
for the non-preemptive Min-Sum CD problem.

In the general case,
we can apply the following combination procedure to obtain
a solution to the tree combination problem, for which we can show that the cost
is at most three times the cost of the optimal solution to the Min-Sum CD problem.
We say that a tour $\pi_j$ or a tree $T_j$ is \emph{trivial} if it serves no packages
and \emph{non-trivial} otherwise.
In other words, a tour $\pi_j$ or tree $T_j$ is trivial if the drone from that depot is not
used to serve any packages.

\begin{algorithm}[H]
 \caption{Construct Large Trees $(\{\pi^*_i\}_{i\in [k]}, \{T_i\}_{i\in [k]})$}\label{alg:ConstructLargeTrees}
 		\begin{algorithmic}[1]
\For{$j = 1$ to $k$}
    \If{$\pi^*_j$ is non-trivial or $T_j$ is non-trivial and has not yet been merged}
        \State set $\Pi = \{ \pi^*_j \}$ and $\mathcal{T}=\{ T_j \}$
        \While{the set of packages served in $\Pi$ and $\mathcal{T}$ are different}
            \If{there exists a package $l$ served by a tour in $\Pi$ that is not served
            by one of the trees in $\mathcal{T}$}
                \State let $T_i$ (necessarily $i> j$) be the tree that serves $l$
                \State insert $T_i$ into $\mathcal{T}$
                \State insert $\pi^*_i$ into $\Pi$
            \EndIf
            \If{there exists a package $l$ served by a tree in $\mathcal{T}$ that is not served
            by one of the tours in $\Pi$}
                \State let $\pi^*_i$ (necessarily $i> j$) be the tour that serves $l$
                \State insert $\pi^*_i$ into $\Pi$
                \State insert $T_i$ into $\mathcal{T}$
            \EndIf
        \EndWhile
        \State form a large tree $L_i$ from the trees in $\mathcal{T}$ (using an MST computation to find edges of minimum total length to be added) and add it to the solution
    \EndIf
\EndFor
 \end{algorithmic}
 \end{algorithm}

Assume Algorithm~\ref{alg:ConstructLargeTrees} constructs $\ell$ large trees.
Denote these large trees by $L_{i_1},L_{i_2},\ldots, L_{i_\ell}$
with $1\le i_1<i_2<\cdots<i_\ell \le k$, where any large tree $L_{i_q}$ consists
of tree $T_{i_q}$ and some number of trees $T_j$ for $j> i_q$.
Let $\mathcal{T}(L_{i_q})$ denote the set of these trees.
Note that $L_{i_q}$ serves the packages served by $\pi^*_{i_q}$ and
some number of tours $\pi^*_j$ for $j>i_q$.
Let $\Pi(L_{i_q})$ denote
the set of optimal tours whose packages $L_{i_q}$ serves.
Note that the sets $\mathcal{T}(T_{i_q})$ of trees form a partition
of the original set of trees, that the sets $\Pi(L_{i_q})$ form a partition
of the set of optimal tours, and that the set of indices of the trees in
$\mathcal{T}(T_{i_q})$ is the same as the set of indices of the optimal tours
in $\Pi(L_{i_q})$.
At the point Algorithm~\ref{alg:ConstructLargeTrees} forms
the large tree $L_{i_q}$, the variables $\mathcal{T}$ and $\Pi$
correspond to $\mathcal{T}(L_{i_q})$ and $\Pi(L_{i_q})$, respectively.

By Observation~\ref{obs:constructtrees}, the total length
of the trees $T_j$ in $\mathcal{T}(L_{i_q})$ is at most
twice the total length of the tours in $\Pi(L_{i_q})$.
Furthermore, the total length of the edges used to merge the $T_j$ in
$\mathcal{T}(L_{i_q})$ into the single large tree $L_{i_q}$ is
at most the total length of the tours in $\Pi(L_{i_q})$.
To see this, we will show the following claim: Whenever a tree $T_i$ and a tour
$\pi^*_i$ get
added to $\mathcal{T}$ and $\Pi$, respectively, there is a tour $\pi^*_g$ in $\Pi$
that contains a path from a node in $T_g\in \mathcal{T}$
to a node in~$T_i$. From this claim it then follows that the optimal tours in $\Pi(L_{i_q})$
contain paths that make the trees in $\mathcal{T}(L_{i_q})$ a connected
structure and hence have total length at least the total
length of the edges the algorithm chooses to connect those
trees into $L_{i_q}$ in Line~16.

To show the claim, first note that throughout the execution of Lines~3--15
it holds that every tree $T_i$ in $\mathcal{T}$ contains the
same depot $r_i$ as the corresponding optimal tour $\pi^*_i$.
Now, if $T_i$ and $\pi^*_i$ get added in Lines~7--8 because another tour
$\pi^*_g$ in $\Pi$ serves a package $l$ that is not yet served
by a tree in $\mathcal{T}$, then $\pi^*_g$ contains a path from $r_g$
to $s_l$, so $\pi^*_g$ contains a path connecting $T_g$ and $T_i$.
If $T_i$ and $\pi^*_i$ get added in Lines~12--13 because a tree
$T_h$ in $\mathcal{T}$ serves a package $l$ that is not yet served
by an optimal tour in $\Pi$ and that is served by $\pi^*_i$, then
$\pi^*_i$ contains a path from $r_i$ to $s_l$, which is a path
connecting $T_i$ and $T_h$. Thus, the claim follows.

In the remainder of the proof, we use $\len(E)$ to denote the total
length of $E$, where $E$ can be a tree, a set of trees, a tour, or
a set of tours. Furthermore, we define $\cost(E)=\frac{\len(E)}{p_g}$,
where $p_g$ is the speed of the drone that serves the requests associated
with $E$ and whose identity will be clear from the context.
From the discussion above, we obtain
that the total length $\len(L_{i_q})$ of the large tree $L_{i_q}$
is at most $3$ times the total length $\len(\Pi(L_{i_q}))$ of the tours in $\Pi(L_{i_q})$: The trees in $\mathcal{T}(L_{i_q})$ have total length at most
$2\cdot \len(\Pi(L_{i_q}))$, and the edges used to connect them into
a large tree have total length at most $\len(\Pi(L_{i_q}))$.

Next, observe that the large tree $L_{i_q}$ is served by
drone $i_q$, and that this is the fastest of the drones associated
with any tree in $\mathcal{T}(L_{i_q})$, and thus also the fastest of the drones
associated with any of the optimal tours in $\Pi(L_{i_q})$.
We thus have
\begin{align*}
\cost(L_{i_q})& = \frac{\len(L_{i_q})}{p_{i_q}}
\le \frac{3\cdot \len(\Pi(L_{i_q}))}{p_{i_q}}
= \sum_{\pi^*_g \in \Pi(L_{i_q})} \frac{3\cdot \len(\pi^*_g)}{p_{i_q}}\\
& \le \sum_{\pi^*_g \in \Pi(L_{i_q})} \frac{3\cdot \len(\pi^*_g)}{p_{g}}
= 3 \sum_{\pi^*_g \in \Pi(L_{i_q})}\cost(\pi^*_g)
\end{align*}
where the last inequality follows from $p_g\le p_{i_q}$ for
all $\pi^*_g\in \Pi(L_{i_q})$ because $g\ge i_q$.
As the sets $\Pi(L_{i_q})$ form a partition of $\{\pi^*_1,\pi^*_2,\ldots,\pi^*_k\}$,
this implies:
$$
\sum_{j=1}^\ell \cost(L_{i_j}) \le 3 \sum_{j=1}^\ell \sum_{\pi^*_g \in \Pi(L_{i_q})} \cost(\pi^*_g)
= 3 \sum_{g=1}^k \cost(\pi^*_g)
$$
Therefore, the total cost of the large trees
is at most 3 times the total cost of the optimal tours.
Hence, an $\alpha$-approximation algorithm for the
tree combination problem yields a tree combination solution
with total cost at most $3\alpha\cdot \OPT$, which then gives
a solution to the Min-Sum CD problem with cost at most
$6\alpha\cdot\OPT$ as described earlier.
\end{proof}

Our problem can be reduced to solving the Tree Combination problem, where the input trees are the outputs of Algorithm~\ref{alg:CreatCombiantion}. Without loss of generality, the input consists of \( k \) trees, denoted as \( T_1, T_2, \dots, T_k \), where \( w(T_i) \) represents the total length of the tree \( T_i \). Each tree $T_i$ is rooted at a drone location \( r_i \) with an associated drone speed \( p_i \).

We introduce some notation for this problem as follows:   
Let $T = (V, E)$ represent one of the combined trees, where $V$ is a subset of the input trees $\mathcal{T}$, and $E$ consists of the shortest edges between the corresponding original trees. 
For an edge $e = (u, v) \in E$ connecting two trees $u, v$, the edge length $\ell(e)$ in the combined tree $T$ is defined as \[\ell(e) = \min_{a \in V(u), b \in V(v)} d(a, b)\]  
where $V(u)$ and $V(v)$ denote the sets of vertices in the original input trees corresponding to $u$ and $v$ in the combined tree $T$, respectively.  
If a tree $T$ contains more than two vertices (original trees), we assume it is rooted at a vertex with the highest speed. For every vertex $v \in V$, we use $V_v\subseteq V$ to denote the set of descendants of $v$, and $v$ itself. We use $E_v \subseteq E$ to denote all edges that include at least one vertex from the set $V_v$. 

\begin{observation}
\label{obs:OptimalCombinationTree}
Consider an optimal solution to the Tree Combination Problem with two speeds, $p_1 > p_2$. The following conditions are satisfied: 
\begin{enumerate}
    \item Each combined tree (with more than one original tree) contains exactly one tree with high speed (i.e., with speed $p_1$).
    \item For a combined tree $T = (V, E)$ and a vertex $v \in V$ different from the root, the total length of edges in $E_v$ is no more than $p_1/p_2 - 1$ times the total length of all vertices in $V_v$, i.e., $\sum_{e\in E_v} \ell(e) \le \sum_{u \in V_v} w(u)\cdot (p_1/p_2 - 1)$, where $w(v)$ is the total length of the corresponding original tree for vertex~$v$.
\end{enumerate}
\end{observation}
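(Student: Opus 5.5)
Both properties follow from exchange arguments: if either condition fails, we modify the optimal solution to obtain one of strictly smaller cost, which is a contradiction. Throughout, recall that the cost of a large tree $T$ is $\bigl(\sum_{u\in V} w(u)+\sum_{e\in E}\ell(e)\bigr)/p$, where $p$ is the highest speed among the original trees in $T$.

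\textbf{Part 1.} Suppose some combined tree $T$ contains two original trees $a,b$ of speed $p_1$. The tree is rooted at one of them, say $a$, so $b$ is a non-root vertex. Delete the edge joining $b$ to its parent. This splits $T$ into $T_a$, containing $a$, and $T_b$, containing $b$ and all of $V_b$. Both parts contain a tree of speed $p_1$, so each is still charged at speed $p_1$. The new total cost is therefore
\[
\frac{\len(T)-\ell(e)}{p_1} \le \frac{\len(T)}{p_1},
\]
with strict inequality when $\ell(e)>0$. If $\ell(e)=0$, the split costs the same, and we adopt the convention that among optimal solutions we take one with the maximum number of large trees. Repeating the split gives at most one speed-$p_1$ tree per combined tree. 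A combined tree with more than one original tree but no tree of speed $p_1$ consists only of speed-$p_2$ trees. Splitting such a tree into its original trees removes all connecting edges and keeps speed $p_2$, so it does not increase the cost. Hence every nontrivial combined tree contains exactly one tree of speed $p_1$. The statement should be read as holding for a suitably chosen optimal solution, or for all optimal solutions under strict positivity of edge lengths; I will phrase it with the tie-breaking convention.

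\textbf{Part 2.} Fix a combined tree $T$ rooted at its unique speed-$p_1$ vertex, and a non-root vertex $v$. By part 1 every vertex of $V_v$ has speed $p_2$. Detach the subtree $V_v$ by removing the edge from $v$ to its parent. The detached part becomes its own large tree of speed $p_2$, with cost
\[
\frac{\sum_{u\in V_v}w(u)+\sum_{e\in E_v\setminus\{e_v\}}\ell(e)}{p_2},
\]
where $e_v$ is the parent edge of $v$. The remaining tree keeps its root, so it is still charged at speed $p_1$, and its length drops by $\sum_{u\in V_v}w(u)+\sum_{e\in E_v}\ell(e)$. Here $E_v$ consists exactly of $e_v$ together with the edges inside the subtree, since every edge touching $V_v$ is either internal to the subtree or is $e_v$.

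Optimality means this change must not decrease the cost:
\[
\frac{\sum_{V_v}w+\sum_{E_v}\ell-\ell(e_v)}{p_2}\ \ge\ \frac{\sum_{V_v}w+\sum_{E_v}\ell}{p_1}.
\]
Rearranging gives
\[
\Bigl(\sum_{E_v}\ell\Bigr)\Bigl(\tfrac{1}{p_2}-\tfrac{1}{p_1}\Bigr)\ \ge\ \Bigl(\sum_{V_v}w\Bigr)\Bigl(\tfrac{1}{p_1}-\tfrac{1}{p_2}\Bigr)+\frac{\ell(e_v)}{p_2},
\]
which is the wrong direction. So the naive split does not yield the claimed inequality, and the argument needs a better alternative move.

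\textbf{Main obstacle.} The key step is choosing the right competing move. The detached subtree is charged at $p_2$ only on its own length, while the saving is at rate $1/p_1$. In the trade-off, the stated factor $p_1/p_2-1$ multiplies $w$. The candidate inequality is $\sum_{E_v}\ell/p_1 \le \sum_{V_v}w\,(1/p_2-1/p_1)$, which is equivalent to the claim. This says that keeping $V_v$ attached costs an extra $\sum_{E_v}\ell/p_1$ in connection length, and this must be at most the saving $\sum_{V_v}w(1/p_2-1/p_1)$ gained by serving $V_v$'s own length at the fast speed instead of the slow one.

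So the alternative solution I would compare against detaches every vertex of $V_v$ as a separate singleton tree, each of speed $p_2$. This removes all edges in $E_v$ from the fast tree and adds no edges among the singletons. The resulting cost change is
\[
\sum_{V_v}w\Bigl(\tfrac{1}{p_2}-\tfrac{1}{p_1}\Bigr)-\frac{\sum_{E_v}\ell}{p_1}\ \ge\ 0.
\]
Rearranging this gives exactly condition 2. The point to verify carefully is that the remaining tree stays connected after removing $E_v$. It does: $E_v$ contains only the parent edge of $v$ and edges inside the subtree, so the rest of $T$ is untouched.
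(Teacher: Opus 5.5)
Your proposal is correct and matches the paper's argument. Part 1 cuts an edge separating two fast subtrees, or dissolves an all-slow tree. Part 2 removes all of $E_v$ and leaves every vertex of $V_v$ as a slow singleton, which is exactly the paper's partition into $1+|V_v|$ trees. Your tie-breaking toward the maximum number of large trees also covers zero-length edges, where the paper's claim of a strict cost decrease does not hold, and the discarded single-subtree detachment can be dropped from the write-up.
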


\begin{proof}
The first part of the observation is straightforward: If a combined tree does not contain a tree with the high speed, there is no need to connect these trees, as the total cost would only increase. On the other hand, if there are at least two trees rooted at drones with high speed within a combined tree, we can partition the combined tree into two separate trees by removing an edge connecting two subtrees that both contain a high speed tree. The resulting separated trees will have a lower total cost than the original combined tree since both separated trees contain a high speed tree and the total length is smaller. This contradicts the assumption that the combined tree is optimal in the solution.

 To demonstrate the second part of the observation, assume there is a vertex $v$ in the combined tree such that the total length of edges in $E_v$ exceeds $p_1/p_2 - 1$ times the total cost of all vertices in $V_v$, i.e., 
\[\sum_{u\in V_v} w(u) \cdot (p_1/p_2 -1)  <  \sum_{e\in E_v} \ell(e) \] 
\[\sum_{u\in V_v} w(u) /p_2  < \sum_{u \in V_v } w(u) /p_1 + \sum_{e\in E_v} \ell(e) /p_1 \]  
In this case, we can partition the combined tree into $1 + |V_v|$ trees (one tree $T'$ with high speed and $|V_v|$ trees with slow speed) by removing all edges in $E_v$.  
The resulting trees will have a lower total cost than the original combined tree, 
\[\frac{w(T')}{p_1} + \sum_{u\in V_v} w(u) /p_2 + w(v)/p_1 < \frac{w(T')}{p_1} + \sum_{u \in V_v} w(u) /p_1 + \sum_{e\in E_v} \ell(e) /p_1, \] 
which contradicts the assumption that the combined tree is optimal in the solution. 
\end{proof}

To solve the Tree Combination Problem with two speeds, we consider a Prize Collecting Steiner Tree Problem:  Given an undirected graph $G=(V, E)$, edge cost $c_e\ge 0$ for all edges $e\in E$, a set of selected root vertices $R\subset V$, and penalty $\pi_i\ge 0$ for all $i\in V$. The goal is to find a forest $F$ such that each tree in $F$ contains a root vertex $r\in R$, with the objective of minimizing the cost of the edges in $F$ plus the penalties of the vertices not in $F$, i.e., $\sum_{e\in F} c_e+\sum_{r\in V-V(F)} \pi_i$, where $V(F) $ is the set of vertices in the forest~$F$. 
The Prize Collecting Steiner Tree problem is normally defined for a single root,
but our formulation with a set $R$ of roots is equivalent as the vertices in $R$ can be
merged into a single root before an algorithm for the Prize Collecting Steiner
Tree problem is applied.

\begin{lemma}[Reducing the Tree Combination Problem with Two Speeds to a Prize Collecting Steiner Tree  Problem]
\label{lem:ReducingTreeCombinationtoPrizeCollecting}
An $\alpha$-approximation algorithm for the Prize Collecting Steiner Tree Problem implies an $\alpha$-approximation algorithm for the Tree Combination problem with two speeds. 
\end{lemma}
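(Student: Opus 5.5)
We reduce the two-speed instance to a Prize Collecting Steiner Tree instance whose objective equals the tree combination objective up to an additive constant and a common scaling factor. Both quantities are the same for every solution, so approximation ratios carry over. Let $F$ be the set of original trees with high speed $p_1$ and $S$ the set with low speed $p_2$. We build a graph $G$ with one vertex per original tree. The root set is $R=F$. The edge between vertices $u$ and $v$ gets cost $c_{uv}=\ell(u,v)$, the shortest distance between the two original trees. Each slow vertex $u\in S$ gets penalty $\pi_u = w(u)\,(p_1/p_2-1)$, and roots get penalty $0$ (they are always covered anyway).

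\textbf{Cost correspondence.} By Observation~\ref{obs:OptimalCombinationTree}(1), we may restrict to tree combination solutions in which every large tree contains exactly one fast tree, or is a singleton slow tree. Such a solution maps to a PCST forest: keep the connecting edges, and leave singleton slow trees uncovered. Conversely, a PCST forest in which each component contains a root (possibly several) maps to a tree combination solution: combine the trees in each component, and leave uncovered slow trees alone. Every component then contains a fast tree, so its cost divisor is $p_1$. In both directions the tree combination cost is
\[
\frac{1}{p_1}\Bigl(\sum_{u} w(u) + \sum_{e\in F'} c_e + \sum_{u\in S\setminus V(F')} w(u)\,(p_1/p_2-1)\Bigr),
\]
since an uncovered slow tree costs $w(u)/p_2 = w(u)/p_1 + \pi_u/p_1$. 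Hence the tree combination cost equals $\frac{1}{p_1}\bigl(W + \mathrm{PCST}(F')\bigr)$, where $W=\sum_u w(u)$ is a constant.

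\textbf{Approximation ratio.} Let $\mathrm{OPT}_{TC}$ be the optimal tree combination cost and $\mathrm{OPT}_{PC}$ the optimal PCST cost. The correspondence gives $p_1\,\mathrm{OPT}_{TC} = W + \mathrm{OPT}_{PC}$. Take an $\alpha$-approximate forest $F'$, with $\mathrm{PCST}(F')\le\alpha\,\mathrm{OPT}_{PC}$. It yields a tree combination solution of cost $\frac{1}{p_1}(W+\mathrm{PCST}(F'))\le \frac{1}{p_1}(W+\alpha\,\mathrm{OPT}_{PC})\le \alpha\cdot\frac{1}{p_1}(W+\mathrm{OPT}_{PC})=\alpha\,\mathrm{OPT}_{TC}$, using $\alpha\ge1$ and $W\ge0$.

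\textbf{Main obstacle.} The delicate step is making sure the correspondence is cost-preserving in both directions. Two cases need care. First, a large tree with several fast trees must still be charged with divisor $p_1$, which holds because the maximum speed in it is $p_1$. Second, a large tree containing only slow trees must not be forced into the PCST solution; Observation~\ref{obs:OptimalCombinationTree}(1) lets us split such trees without increasing cost. We also have to check that the edge lengths $\ell$ behave additively when trees are merged. This holds because combined trees use one shortest connecting edge per pair of merged original trees, exactly as in the paper's definition of $E$.
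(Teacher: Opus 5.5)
Your proposal is correct and follows the paper's reduction. As in the paper, you use one vertex per original tree, take the high-speed trees as roots, set connection costs from the shortest inter-tree distances, give each slow tree a penalty equal to its extra cost when served at speed $p_2$, and split off the serving cost $\sum_u w(u)/p_1$ as a fixed additive constant. The only differences are cosmetic: you scale all costs by $p_1$ (the paper uses $\ell(u,v)/p_1$ and $w(u)(1/p_2-1/p_1)$), and you spell out both directions of the cost correspondence and the step $W+\alpha\,\mathrm{OPT}_{PC}\le\alpha(W+\mathrm{OPT}_{PC})$, which the paper leaves implicit.
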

\begin{proof}
To formulate the Tree Combination Problem with Two Speeds as a Prize Collecting
Steiner Tree Problem, all trees are represented as vertices in the graph $G = (V, E)$, where each vertex corresponds to a tree. The edge between any two trees represents the cost of connecting the nodes corresponding to those trees. 
Specifically, the cost of an edge is given by the minimum distance between the nodes in the two trees, divided by the higher speed.
We designate all trees with high speed as the root vertices $R \subseteq V$, and all trees with slow speed are assigned penalties $\pi_i \geq 0$ for all $i \in V$. 
Specifically, the penalty for each vertex is given by the difference in the inverses of the speeds multiplied by the total length of the tree, which represents the cost incurred if the vertex is not connected to a high-speed tree.
The goal is to find a forest $F$ that contains multiple trees, each rooted at a node corresponding to a tree with high speed. The objective is to minimize the sum of the costs of the edges in $F$ and the penalties of the vertices not included in the forest $F$. 

The original Tree Combination Problem aims to minimize both the connection cost and the serving cost for each tree. In contrast, the constructed instance of the Prize Collecting Steiner Tree Problem focuses on the connection cost and only a partial serving cost for each tree. Specifically, for trees with high speed and slow-speed trees connected to a high-speed tree, the serving cost is not considered. For slow-speed trees not connected to a high-speed tree, only the difference in serving costs between high-speed and low-speed service is counted. The disregarded cost, which is the serving cost for all trees when served with high-speed, is incurred by any solution for the Tree Combination Problem. The total cost of a solution to the Tree Combination Problem is this disregarded cost, whose value is a fixed amount~$C$, plus the connection cost and penalties. For the optimal solution to the Tree
Combination Problem, the cost can be divided into the fixed cost $C$ and the cost of an optimal solution to the Prize Collecting Steiner Tree problem. Therefore, an $\alpha$-approximation algorithm for the Prize Collecting Steiner Tree Problem provides an $\alpha$-approximation algorithm for the original Tree Combination Problem. 
\end{proof} 

We use the following known result for the Prize Collecting Steiner Tree Problem

\begin{lemma}[Archer et al.~\cite{archer2011improved}]
\label{lemma:PrizeCollectingProblem}
There exists a $1.9672$-approximation algorithm for the Prize Collecting Steiner Tree Problem. 
\end{lemma}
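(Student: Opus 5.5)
This statement is a known result from the literature (Archer, Bateni, Hajiaghayi and Karloff), so within the paper the plan is to invoke it rather than reprove it from scratch. Still, the structure of that proof can be outlined. The starting point is the Goemans--Williamson primal-dual algorithm for the Prize Collecting Steiner Tree problem with a single root. It is applied to the LP relaxation whose constraints are: for every vertex set $S$ not containing the root and every $i\in S$, the edges $x(\delta(S))$ crossing $S$ plus the penalty variable $z_i$ sum to at least $1$. With its pruning phase, this algorithm returns a tree $F$ satisfying
\[
\sum_{e\in F} c_e + 2\sum_{i\notin V(F)} \pi_i \le 2\,\mathrm{LP}.
\]
This gives the classical $2$-approximation, and the multi-root version used in the paper reduces to it by merging $R$ into one root, as the paper notes just before Lemma~\ref{lem:ReducingTreeCombinationtoPrizeCollecting}.

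To get below $2$, I would follow the Archer et al.\ strategy of combining this primal-dual algorithm with a good Steiner tree approximation on a carefully chosen terminal set. The steps are as follows.

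\begin{enumerate}
\item Scale all penalties by a factor $\beta$ and run the primal-dual algorithm iteratively. Whenever the algorithm would pay penalties for a set of vertices whose dual value does not "pay for itself", the instance is adjusted so that the remaining connected vertex set $A$ has the property that any optimal solution either connects most of $A$ or pays a large penalty.
\item Compute a Steiner tree on $A$ using a $\rho$-approximation for Steiner tree.
\item Pay the penalties for the vertices outside $A$.
\item Output the better of this solution and the plain Goemans--Williamson solution.
\end{enumerate}

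The analysis rests on the Lagrangian-preserving property, namely the bound with factor $2$ on the penalty term. This property lets the edge cost and the penalty cost be traded off against each other. Balancing the two candidate solutions over the parameter $\beta$ gives a ratio strictly below $2$. With $\rho\approx 1.39$ or $1.55$, the computed constant is $1.9672$.

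The main obstacle is the iterative step that ensures that the Steiner-tree cost on $A$ is bounded by $\mathrm{OPT}$ plus a controlled amount. This requires showing that the dual raised on sets that end up penalized, together with the reduced penalties, lower-bounds the optimal penalty cost. That structural argument is the genuine content of Archer et al., and I would cite it rather than reproduce it. For the purposes of this paper, the lemma is simply quoted, and the only point to check is that our multi-root formulation fits their single-root model, which the root-merging remark already ensures.
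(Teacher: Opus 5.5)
Your proposal matches the paper, which likewise gives no proof and simply quotes the result from Archer et al.; the only paper-specific check is the preceding remark that merging the root set $R$ into a single root reduces the multi-root formulation to the standard one, and you identify it correctly. Your outline of their algorithm is not needed and is loose in places (for instance, $1.9672$ comes from one specific Steiner-tree ratio, not from either $1.39$ or $1.55$), but nothing in the argument depends on it.
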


By combining Lemmas~\ref{lem:ReducingMin-sumCDTtoCombiantion},   \ref{lem:ReducingTreeCombinationtoPrizeCollecting} and~\ref{lemma:PrizeCollectingProblem}, we obtain the following theorem. 

\begin{theorem}[Min-Sum CD with Two Speeds]
\label{thm:Non-preemptiveMin-sumCDTwoSpeeds}
There exists an $11.8$-approximation algorithm for the non-preemptive Min-Sum CD problem when the drones have only two distinct speeds. 
\end{theorem}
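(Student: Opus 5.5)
The plan is to chain the three lemmas just cited and check that the constants multiply to the claimed bound. Given an instance of non-preemptive Min-Sum CD whose drones have only two distinct speeds $p_1>p_2$, we first run Algorithm~\ref{alg:CreatCombiantion}. This produces the forest $\{T_i\}_{i\in[k]}$ with a tree $T_i$ for each depot, where each tree carries the fastest speed among the drones at that depot. Hence every original tree is labelled with either $p_1$ or $p_2$, and the resulting Tree Combination instance again has only two speeds. This is the point to verify carefully: the reduction of Lemma~\ref{lem:ReducingMin-sumCDTtoCombiantion} must not create new speeds. It does not, because each tree inherits the speed of an actual drone.

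Next we solve this two-speed Tree Combination instance by the reduction of Lemma~\ref{lem:ReducingTreeCombinationtoPrizeCollecting}. Each original tree becomes a vertex. The fast trees form the root set $R$. Edge costs are the minimum inter-tree distance divided by $p_1$. Each slow tree $T_i$ receives the penalty $w(T_i)(1/p_2-1/p_1)$. Merging $R$ into a single root gives a standard rooted Prize Collecting Steiner Tree instance, to which we apply the algorithm of Lemma~\ref{lemma:PrizeCollectingProblem}. We then convert the returned forest back into large trees:
\begin{itemize}
\item each component containing a fast root becomes one large tree served by that fast drone;
\item each unconnected slow tree stays alone and is served by its own drone.
\end{itemize}
Lemma~\ref{lem:ReducingTreeCombinationtoPrizeCollecting} then gives $\alpha=1.9672$ for the Tree Combination problem. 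The additive fixed cost $C=\sum_i w(T_i)/p_1$ is common to all solutions, so it only helps the ratio.

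Finally, Lemma~\ref{lem:ReducingMin-sumCDTtoCombiantion} converts this into a $6\alpha$-approximation for Min-Sum CD: each large tree is traversed by an Euler tour by its fastest drone. The ratio is $6\cdot 1.9672=11.8032$.

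The main obstacle is this last number: $11.8032$ slightly exceeds the stated $11.8$. I would first check whether the factor $6$ can be tightened. The Euler-tour step loses a factor $2$ only on the MST and connecting edges, not on the $s_i t_i$ edges, which are traversed twice in the tree but could be traversed once as direct delivery edges. If that slack cannot be recovered, I would read $11.8$ as the rounded value of $6\cdot 1.9672$ and state the guarantee as approximately $11.8$. All steps run in polynomial time, since each reduction and the cited algorithm do.
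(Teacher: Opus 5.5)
Your proposal is the paper's argument: the theorem follows by chaining Lemma~\ref{lem:ReducingMin-sumCDTtoCombiantion} (factor $6\alpha$, and the constructed tree instance has only the two drone speeds), Lemma~\ref{lem:ReducingTreeCombinationtoPrizeCollecting} (ratio preserved, since the fixed cost $\sum_i w(T_i)/p_1$ only helps), and Lemma~\ref{lemma:PrizeCollectingProblem} with $\alpha=1.9672$. Your arithmetic check is also right: the paper simply reports $6\cdot 1.9672=11.8032$ as $11.8$, so the constant this chain rigorously gives is $11.8032$, and the paper does not attempt to tighten the factor $6$.
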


\subsection{Tree Combination Problem with $h\ge 3$ Speeds}

When we have drones with more than two speeds, the Prize Collecting Steiner Tree problem cannot be directly applied since the penalty is not fixed before we obtain the final solution.
Inspired by the study on the Heterogeneous Traveling Salesman Problem \cite{rathinam2020primal},
we employ a multi-level primal-dual algorithm to address this issue.
If vehicles travel at different speeds and the travel cost is defined as the ratio of the traveled distance to the vehicle's speed, then the travel cost adheres to the principle of monotone costs,
which is a prerequisite for the approach to apply.

In the Tree Combination Problem involving multiple-speed drones, we model the solution to include penalties for trees not visited by the highest-speed vehicle. These penalties are assessed at multiple levels, each calculated as the ratio of the distance to the speed difference between consecutive levels.

\subsubsection{LP Relaxation and Its Dual}
Consider the tree combination problem involving $k$ trees. Later, we will also use vertices to represent each tree. Suppose the $k$ vertices (trees) are ordered according to their respective speeds, and we categorize these vertices into $h \leq k$ distinct levels based on speed. The levels are organized from the fastest to the slowest, with level $1$ being the fastest and level $h$ being the slowest. Each level, $l\in [h]$, contains $h_l$ vertices. Formally, let $v_{l, i}$ represent the $i$-th vertex at speed level $l$. The set of vertices at level $l$ is denoted as $V_l = \{v_{l, i}\}_{i \in [h_l]}$. For each vertex $v_{l, i}$, where $l \in [h]$ and $i \in [h_l]$, the weight is defined as $w(v_{l, i})$, the subset of vertices it owns (in the corresponding tree)  is $V(v_{l, i})$,
 and its speed is denoted as $p_l$.

The input to the tree combination problem is a graph with vertex set \( V = \bigcup_{l} V_l \), edge set $
E = \left\{ (u, v) \mid u, v \in \bigcup_{l \in [h]} V_l \right\},$  where the length of an edge between two nodes \( v_{l, i}, v_{l', i'}\) is given by $
\ell(v_{l, i}, v_{l', i'}) = \min_{a \in V(v_{l', i'}),\ b \in V(v_{l, i})} d(a, b).$ 
A resulting tree rooted at $v_{l, i}$ is defined by $\tilde{T}(l, i)=(\tilde{V}_{l, i}, \tilde{E}_{l, i})$ with $ \tilde{V}_{l, i} \subseteq  V$ and $\tilde{E}_{l, i}\subseteq E$. The total cost of this tree is equal to the sum of the edge cost $\sum_{(u, v) \in \tilde{E}_{l, i}} \ell(u, v)/p_l$ and the vertex cost $\sum_{v \in \tilde{V}_{l, i}} w(v)/p_l$ if $ |\tilde{V}_{l, i}|\ge 1$, and is equal to $0$ otherwise.
(We set $\tilde{V}_{l, i}=\tilde{E}_{l, i}=\emptyset$ if none of the
resulting trees is rooted at $v_{l,i}$.) The objective is to identify a set of trees such that each vertex is included in exactly one of the resulting trees, while minimizing the total cost, i.e., $$\sum_{l \in [h], i\in [h_l]} \left( \sum_{(u, v) \in \tilde{E}_{l, i}} \ell(u, v)/p_l + \sum_{v \in \tilde{V}_{l, i}} w(v)/p_l \right).$$ 
Refer to Figure~\ref{fig:feasible_solution_instance} for an illustration of a tree combination problem involving nine trees across three levels. 

\begin{figure}[h] 
\centering
\includegraphics[width=\textwidth]{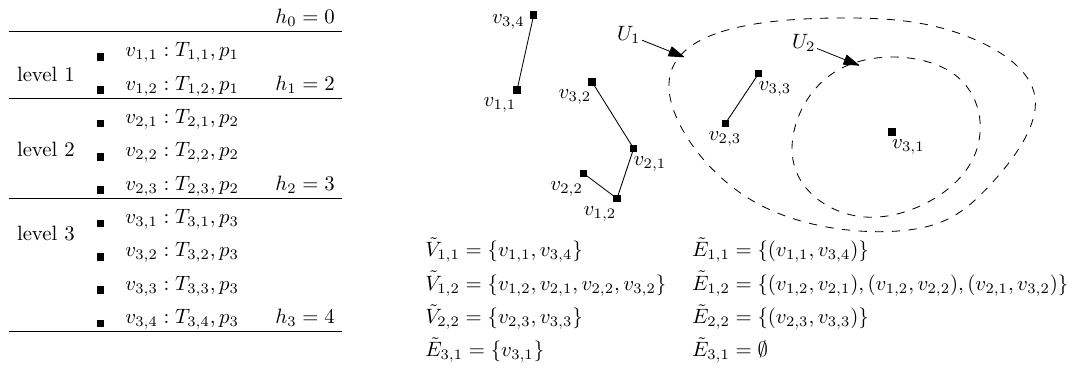}
\caption{An instance of the tree combination problem with nine vertices, each associated with one of three different speeds \( p_1 > p_2 > p_3 \). An illustration of a feasible solution shows four resulting trees.}
\label{fig:feasible_solution_instance}
\end{figure}

We design an approximation algorithm based on the primal-dual method.    
To formulate the problem as an integer linear program, we define two sets of integer variables with respect to the following vertex and edge sets:

\begin{itemize}
  \item The vertex set
  \[
  \bar{V}_l = \bigcup_{l' > l} V_{l'},
  \]
  which consists of vertices that may appear in the resulting trees rooted at a vertex in  level~\(l\), but are not initially in level~\(l\).

  \item The edge set
  \[
  \bar{E}_l = \left\{ e \in E \mid \text{both endpoints of } e \text{ have level } \geq l \right\},
  \]
  which consists of edges that may appear in the resulting trees rooted at  a vertex in level~\(l\).
\end{itemize}
The two sets of variables are:
\begin{itemize}
  \item 
  The first set of variables, denoted by \( x^l_e \), indicates whether an edge \( e \in \bar{E}_l \) is included in the resulting tree rooted at a vertex in \( V_l \), where \( l \in \{1, \ldots, h-1\} \).
  
  \item 
  The second set of variables, denoted by \( z_U^l \), is defined for \( l = 1, \ldots, h-1 \) and for any subset \( U \subseteq \bar{V}_l \). Specifically, \( z_U^l = 1 \) if \( U \) represents
  the subset of the
vertices with speed $p_{l+1}, p_{l+2}, ..., p_{h}$ that do not
get connected to a vertex with speed $p_l$ or faster.
\end{itemize}

Let the cost of traversing an edge $e \in \bar{E}_l$ by its root vertex (drone) with speed $p_l$ be denoted by 
$\cost_l(e) = \ell(e)/p_l$. 
Denote the cost of serving a vertex $v$ by its root vertex (drone) with speed $p_l$ as 
$w_l(v) = w(v)/p_l$. 
Let $\pi_l(U) = \sum_{v \in U} (w_{l+1}(v) - w_l(v))$ 
be the level-$l$ penalty if all vertices in $U \subseteq \bar{V}_l$ are served by a root vertex with speed less than or equal to $p_{l+1}$. 
Notice that if a vertex $v$ is served by a root vertex with speed $p_l$, then it will be included in the sets $U_1, U_2, \ldots, U_{l-1}$, and the total penalty incurred for this vertex will be 
$\pi(\{v\}) = \sum_{i=1}^{l-1} \pi_i(\{v\}) = w_l(v) - w_1(v)$. 
Similar to the tree combination problem with two distinct speeds, we disregard the (positive and fixed) cost of serving all vertices with the highest-speed drone in the integer program, as this does not affect the approximation ratio.

The Integer Programming (IP) formulation and its Linear Programming (LP) relaxation for the
problem that we are addressing are as follows:

\begin{align}
    \text{Min}   \quad &\sum_{l\in[h-1]}\sum_{e\in \bar{E}_l} \cost_l(e) x^l_e +\sum_{l\in[h-1]} \sum_{U\subseteq \bar{V}_{l}} \pi_l(U) z^l_U   \label{eq:minobjective} \\
    \text{st.} \quad & \sum_{e \in \delta_l(S)} x^l_e  \geq  \sum_{U: S \subseteq U \subseteq \bar{V}_{l-1}} z^{l-1}_U - \sum_{U: S \subseteq U \subseteq \bar{V}_l}  z^{l}_U \quad \forall S \subseteq \bar{V}_l, l\in[h-1], \label{eq:minconstraint1} \\
                            & z^0_V = 1; \quad z^0_U = 0 \quad \forall U \neq V
                            ,  \label{eq:minconstraint2} \\
                            & x^l_e \in\{0, 1\} \quad \forall e \in \bar{E}_l,  l\in[h-1],       \label{eq:minnonnegativity_x} \\
                            & z^l_U \in\{0, 1\}  \quad \forall U \subseteq \bar{V}_{l},   l\in[h-1].        \label{eq:minnonnegativity_y}
\end{align} 
Here in constraint~(\ref{eq:minconstraint1}), we let \( \delta_l(S) \) denote the set of edges in $\bar{E}_l$ with exactly one endpoint in \( S \), i.e.,  $
\delta_l(S) = \delta(S) \cap \bar{E}_l.$ 
If we consider solutions where, for each $l$, $z^l_U=1$ for exactly one $U$ and
$z^l_U=0$ for all other $U$, then
both terms $\sum_{U: S \subseteq U \subseteq \bar{V}_{l-1}} z^{l-1}_U$ and $\sum_{U: S \subseteq U \subseteq \bar{V}_l} z^l_U$ in constraint~(\ref{eq:minconstraint1}) only take binary values, either $0$ or $1$.
Then this constraint is
trivially satisfied except for the case when  $\sum_{U: S \subseteq U \subseteq \bar{V}_{l-1}} z^{l-1}_U=1$ and $\sum_{U: S \subseteq U \subseteq \bar{V}_l} z^{l}_U=0$. But
this case can occur only if there is at least one
vertex in $S$ that must be connected to a tree rooted  at level $l$. The constraint reduces
to  $\sum_{e \in \delta_l(S)} x^l_e\ge 1$ in that case and it is ensured that each such vertex is indeed connected 
to a tree rooted at level~$l$.
The IP formulation also allows solutions in which, for some~$l$, $z^l_U=1$ for more
than one set $U$, but this cannot increase the objective value of an optimal
solution.

The objective is to find a set of resulting trees such that each vertex is included exactly once, and the total cost — including both connection (edge) costs and vertex service costs — is minimized. 
If a vertex $v$ is connected to a tree rooted at level $l$, then its cost (penalty) is accounted for in every preceding level from $1$ to $l-1$. These costs are included in the penalties associated with the sets $U_1, U_2, \ldots, U_{l-1}$. Specifically, as already
mentioned earlier, the cumulative difference in cost for vertex $v$ from level $l$ down to level $1$ is given by:
\[
w_l(v)-  w_{1}(v)  = \sum_{j \in [l-1]} \left(  w_{l+1}(v)  -  w_l(v)  \right) = \sum_{j \in [l-1]} \pi_j(\{v\}).
\]

If we replace the integrality conditions $ x^l_e \in\{0, 1\} $ and   $z^l_U \in\{0, 1\}$ with $ x^l_e \ge 0$ and $ z^l_U\ge 0$ and take the dual of the linear program, we obtain the following:    

\begin{align}
    \text{Max}   \quad &  \sum_{S\subseteq \bar{V}_1} Y_1(S)  \label{eq:maxobjective} \\
    \text{st.} \quad & \sum_{S\subseteq \bar{V}_l: e \in \delta_l(S)} Y_l(S)  \le \cost_l(e), \quad e\in \bar{E}_l,   \forall l\in[h-1], \label{eq:maxconstraint1} \\
                            & \sum_{S: S \subseteq U} Y_l(S) - \sum_{S: S \subseteq U- V_{l+1}} Y_{l+1}(S) \le \pi_l(U) \quad \forall U\subseteq \bar{V}_{l},  \forall l\in[h-1],  \label{eq:maxconstraint2}
                         \\
                         &   Y_l(S) \geq 0 \quad \forall S \subseteq \bar{V}_l,  l\in[h-1].       \label{eq:maxnonnegativity_y}
\end{align}
The constraints specified in Equation~(\ref{eq:maxconstraint1}) pertain to edges, whereas those in Equation~(\ref{eq:maxconstraint2}) are related to sets of vertices. These constraints limit the dual values assigned to any subset of the set 
$U$ at each level, ensuring that the combined sum of accumulated values from a higher cost level does not exceed the benefits derived from serving them at the lower cost level.

\subsubsection{Primal-Dual Algorithm}
Before presenting the algorithm, we first introduce some notation regarding the constraints. 

\begin{definition}[Remaining Potential]
For a given set of vertices $U\subseteq \bar{V}_l$ and a dual feasible solution $Y_l(U)$, we say that $P(Y_l(U), \pi) = \pi_l(U) - (\sum_{S: S \subseteq U} Y_l(S) - \sum_{S: S \subseteq U- Y_{l+1}} Y_{l+1}(S))$ is the remaining potential of the set $U$ in $l$-th level. 
\end{definition}

The algorithm maintains a separate forest for each level: The forest of a level consists of trees rooted at vertices within that level to which vertices from higher levels are connected, and possibly some trees
that consist only of vertices from higher levels.
This organization is achieved through the application of the primal-dual moat-growing procedure, which is implemented independently and simultaneously across each level's graphs.
We define $F_l(t)$ to be the resulting forest at level $l \in [h]$ in the graph  $(V_l\cup \bar{V}_l, \bar{E}_l)$ by the end of iteration $t$ of the main loop.
For any two levels $l $ and $l'$, where $l < l'$, and at any iteration $t$, consider a component $C_i$ in forest $F_l(t)$ and  $C_j$ in forest $F_{l'}(t)$. Then $C_i$ is considered an ancestor of $C_j$, and $C_j$
 a descendant of $C_i$, if $C_i\supseteq C_j$.

A component can be classified as either \textbf{active}, \textbf{inactive}, or \textbf{frozen}. A component is considered \textbf{active} at the start of an iteration if it consists of higher-level vertices and its dual variable is allowed to increase during the iteration without violating any constraints in the dual problem. A component is deemed \textbf{inactive} if it contains a vertex at the corresponding level or if it is a descendant of an inactive component. 
A component is \textbf{frozen} if it has ceased growing due to constraint~(\ref{eq:maxconstraint2}), at which point its remaining potential becomes $0$.
If a component is inactive or frozen at the start of an iteration, its dual value will not change during that iteration.

When a component becomes frozen, meaning its remaining potential is reduced to zero, it implies that each of its descendants at the next higher level has either already become frozen or has become inactive.   

 While the dual LP has only variables $Y_l(S)$ for
 $S\subseteq \bar{V}_l$, $l\in [h-1]$, in our description of
 the primal-dual algorithm we consider variables
 $Y_l(S)$ for $S\subseteq V_l\cup \bar{V}_l$, $l\in [h]$ for
 ease of presentation. The values of the additionally introduced
 variables are zero at all times.

\paragraph{Initialization} 
 Initially, each forest $F_l(0)$ for level $1\le l\le h-1$ at time $0$ consists of singleton components, each of which is a vertex of $V_l\cup \bar{V}_l$. The singleton components containing only vertices from higher levels $\bar{V}_l$ are considered active, while any component containing a vertex with the speed $p_{l}$ is deemed inactive.
 The dual variables corresponding to all active and inactive components are initialized to zero, for every $l\in[h]$ and $ v \in V_l$, we have $Y_{l'} (\{v\}) = 0$ for all $ l'\in [l]$.

\paragraph{Main Loop} 
In each iteration of the primal-dual algorithm, the dual variables of all active components in all forests are simultaneously increased by the same amount as much as possible until at least one of the constraints in the dual problem becomes tight. If multiple constraints in both~(\ref{eq:maxconstraint1}) and~(\ref{eq:maxconstraint2}) become tight simultaneously during iteration $t$, only one constraint, either from~(\ref{eq:maxconstraint1}) or~(\ref{eq:maxconstraint2}), is chosen and processed based on the following procedure:

If constraints in~(\ref{eq:maxconstraint1}) become tight, then a tight constraint corresponding to the vertex with the lowest level (denoted as $l$) is chosen first. The edge corresponding to the chosen constraint is added to the forest, merging two components (denoted as $C_1$ and $C_2$). If both $C_1$ and $C_2$ do not contain a vertex with speed $p_l$, then the merged component remains active. However, if one of the components contains a vertex with speed $p_l$ (say vertex $v_{l, j} \in C_2$), then the merged component and all its active descendants become
inactive. 

If a constraint in~(\ref{eq:maxconstraint2}) becomes tight (i.e., it risks being violated if the current dual variables continue to grow) for some component, i.e., $P(Y_l(U), \pi)=0$, then the corresponding component $U$ in level $l$ is deactivated and becomes frozen.  

The main loop of the algorithm terminates when every component in all forests is either inactive or frozen.

\begin{figure}[htbp]
     \centering
     \begin{subfigure}[b]{0.48\textwidth}
         \centering
         \includegraphics[width=\textwidth]{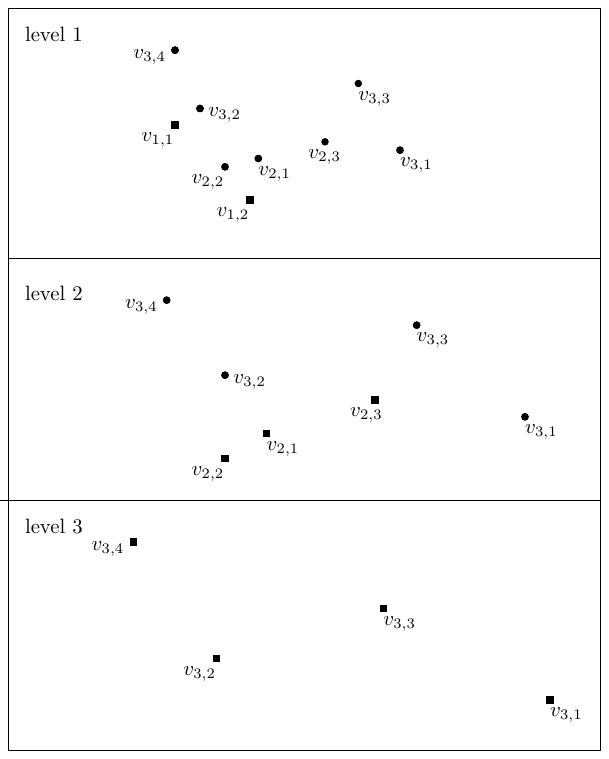} 
    \caption{Primal-dual: Initialization\\ For each level $l \in [3]$, the vertices $V_l$ are inactive and represented by squares, while the other vertices $\bar{V}_l$ are active and represented by disks. In each iteration, all active components' dual variables are increased equally until a dual constraint (either edge constraint or potential constraint) becomes tight.}
     \end{subfigure}
     \hfill
     \begin{subfigure}[b]{0.48\textwidth}
         \centering
         \includegraphics[width=\textwidth]{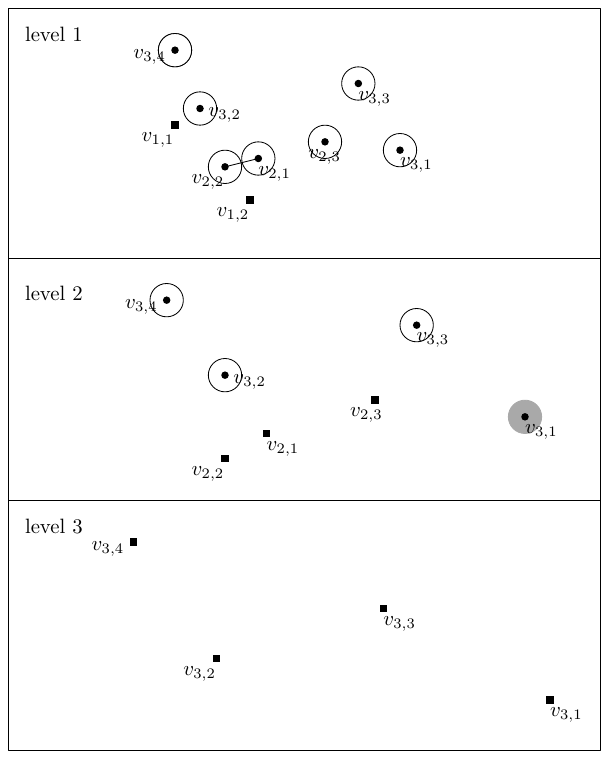} 
         \caption{Example of two tight constraints:\\
         Edge constraint: \( Y_1(v_{2,1}) + Y_1(v_{2,2}) = \text{cost}_1(\{v_{2,1}, v_{2,2}\}) \), thus 
         edge \( \{v_{2,1}, v_{2,2}\} \) is added to \( E_1 \), the components $\{v_{2,1}\}$ and $\{v_{2,2}\}$ in level $1$ become inactive and the merged component \( \{v_{2,1}, v_{2,2}\} \) becomes active.
         The component \( v_{3,1} \) in level $2$ becomes frozen.}

     \end{subfigure}
     \hfill
     \begin{subfigure}[b]{0.48\textwidth}
         \centering
         \includegraphics[width=\textwidth]{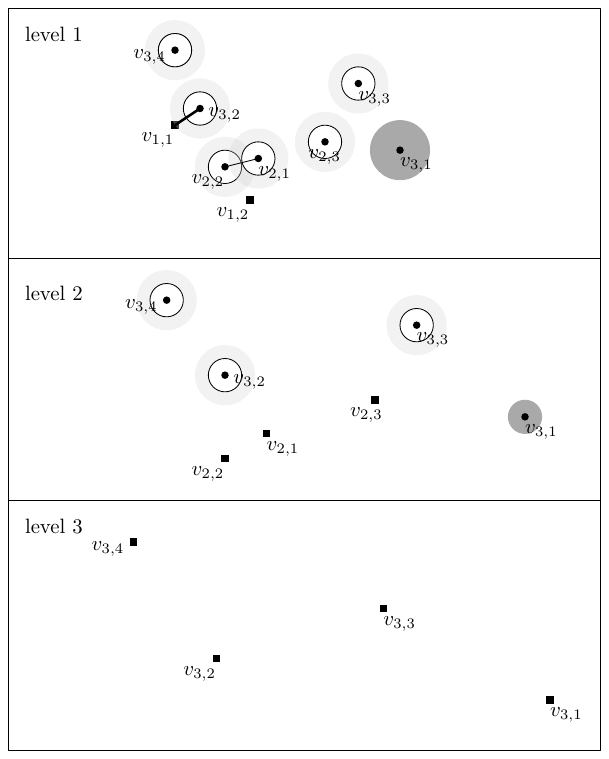} 
          \caption{ 
Edge \( \{v_{3,2}, v_{1,1}\} \) is added to \( E_1 \),
component \( v_{3,1} \) in level $1$ becomes frozen.}
     \end{subfigure}
      \hfill
     \begin{subfigure}[b]{0.48\textwidth}
         \centering
         \includegraphics[width=\textwidth]{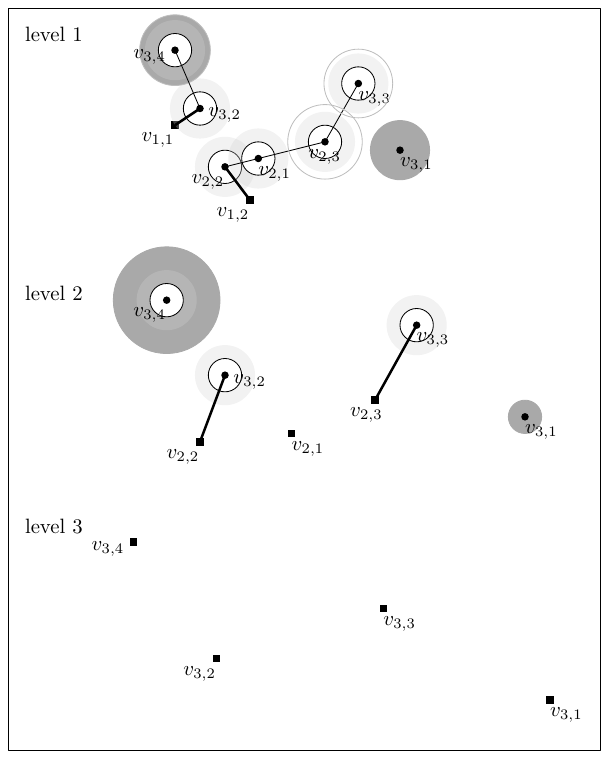}
          \caption{Final iteration. 
All components become inactive or frozen.}
     \end{subfigure}
        \caption{Illustration of the primal-dual algorithm}
        \label{fig:Illustrations_primal_dual}
\end{figure}

\begin{lemma}
\label{lem:primal-dual_runningtime}
 The main loop of the primal-dual algorithm terminates in $\sum_{l\in [h-1]} |V_l|\cdot (2l-1)$ iterations.
\end{lemma}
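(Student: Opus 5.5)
The plan is a charging argument. Every iteration of the main loop ends by processing exactly one tight constraint, which is either an edge constraint~(\ref{eq:maxconstraint1}) in some forest $F_{l'}$ or a potential constraint~(\ref{eq:maxconstraint2}) for some component of $F_{l'}$. So it suffices to assign each processed event injectively to a ``credit''. A vertex $v\in V_l$ with $l\le h-1$ will receive exactly $2l-1$ credits, and level-$h$ vertices will receive none.

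The distribution of credits matches $2l-1=2(l-1)+1$.
\begin{itemize}
\item In each of the $l-1$ forests $F_{l'}$ with $l'<l$, the vertex $v$ starts as an active singleton. It gets two credits there: one for the event that ends the activity of the component containing it (a merge), and one for a freeze event.
\item In its own forest $F_l$, the vertex $v$ starts as an inactive root. It gets one credit there, for the event in which active components are absorbed into its inactive component.
\end{itemize}

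I would first classify the events in a fixed forest $F_{l'}$ into three types.
\begin{enumerate}[label=(\alph*)]
\item A merge of two active components.
\item A merge of an active component with an inactive component containing a level-$l'$ vertex, which deactivates the merged component and all its active descendants.
\item A freeze.
\end{enumerate}
Type~(b) events will be charged to the level-$l'$ vertex $v_{l',j}$ owning the inactive component. Type~(a) and~(c) events will be charged to vertices of levels $l'+1,\dots,h-1$ inside the affected components.

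The justification for charging (a) and~(c) this way goes through the ancestor/descendant structure between consecutive levels. By the remark after the definition of frozen components, a component of $F_{l'}$ freezes only once each of its descendants in $F_{l'+1}$ is already frozen or inactive. I would use this, together with the tie-breaking rule that processes tight edge constraints at the lowest level first, to show that each freeze or active--active merge in $F_{l'}$ is ``triggered'' by a distinct event at level $l'+1$. The chain of triggers is then followed downward until it ends at a type~(b) event or at a freeze rooted at some non-level-$h$ vertex. This gives a charging scheme in which each credit is used at most once.

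The main obstacle is the components consisting solely of level-$h$ vertices. They are active in every forest $F_1,\dots,F_{h-1}$, yet by the statement they must carry no credit. Active--active merges among such vertices in $F_{h-1}$ are the dangerous case, so the crux is to show that in $F_{h-1}$ these merges and freezes can be paid for by the $|V_{h-1}|$ credits of level-$(h-1)$ vertices.

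For this I would first try to argue from the dual constraint~(\ref{eq:maxconstraint2}) at level $h-1$, where $Y_h\equiv 0$. The remaining potential of a level-$h$-only component depends only on $\pi_{h-1}$ of its vertices. I would then try to show that such a component either reaches a level-$(h-1)$ root or freezes before any further merge is processed, so that it contributes at most one event, charged to the root it eventually reaches or to the freeze credit of the ancestor it lies in.

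If this step fails as planned, the fallback is to strengthen the induction hypothesis: bound the events at levels $\ge l'$ by the credits of vertices of levels $l'$ to $h-1$, and prove the base case $l'=h-1$ separately by the potential argument above.
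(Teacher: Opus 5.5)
\textbf{The crux step fails, and so does the statement as written.} The level-$h$-only components you flagged can generate arbitrarily many events, and no vertex of level $\le h-1$ can pay for them. Take $h=2$, $V_1=\{a\}$, $V_2=\{b,c\}$. Place $b$ and $c$ at distance $\varepsilon$ from each other and at distance $D\gg\varepsilon$ from $a$, with weights so large that no constraint~(\ref{eq:maxconstraint2}) becomes tight early. In $F_1$ both $\{b\}$ and $\{c\}$ are active, and the edge $(b,c)$ becomes tight first, at dual value $\varepsilon/(2p_1)$. After this active--active merge, a second iteration is needed for $\{b,c\}$ to reach $a$ or freeze. That is two iterations against the claimed bound $|V_1|\cdot 1=1$.

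More generally, put one vertex on each of the levels $1,\dots,h-1$ and $N$ heavy level-$h$ vertices clustered far from everything else. Each of the forests $F_1,\dots,F_{h-1}$ must then perform $N-1$ separate merges inside the cluster, giving at least $(h-1)(N-1)$ iterations. The claimed bound $(h-1)^2$ does not depend on $N$. So the assertion that a level-$h$-only component ``contributes at most one event'' is false, and your fallback induction fails at the same base case $l'=h-1$.

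\textbf{What the paper does, and what is actually provable.} The paper uses a potential rather than a charging scheme. Let $\Phi$ be the number of components plus the number of active components, summed over all forests. An edge event merges two components and never increases the number of active ones, and a freeze removes one active component. So $\Phi$ drops by at least one per iteration, and the number of iterations is at most its initial value.

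The paper counts this initial value as $\sum_{l\in[h-1]}|V_l|\,l+\sum_{l\in[h-1]}|V_l|(l-1)$, which silently omits the level-$h$ vertices. Each of them is an active singleton in all of $F_1,\dots,F_{h-1}$. The correct initial value, and hence the bound the argument actually proves, is $\sum_{l\in[h-1]}|V_l|(2l-1)+2(h-1)|V_h|$; in particular $\sum_{l\in[h]}|V_l|(2l-1)$ is a valid bound. This is still independent of the number of requests, which is all the later running-time discussion uses.

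Your credit distribution is exactly this potential restricted to levels $\le h-1$. If you also give every level-$h$ vertex two credits in each forest where it starts active, the potential argument works directly and the cross-level ``trigger'' chains become unnecessary. Those chains were not sound anyway, for two reasons. Since $\cost_{l'}(e)<\cost_{l'+1}(e)$, an active--active merge in $F_{l'}$ typically happens before the corresponding merge in $F_{l'+1}$, not as a result of it. And a type~(a) merge may involve only level-$h$ vertices, leaving no vertex of levels $l'+1,\dots,h-1$ to charge.
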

\begin{proof}
At the start of the main loop, the number of components in all the forest is $\sum_{l\in [h-1]} |V_l|\cdot l$ and the number of active component is $\sum_{l\in [h-1]} |V_l|\cdot(l-1) $. During each iteration of the main loop, the sum of the number of components and the number of active components in all
the forests decreases by at least $1$ and thus the main loop will require at most $\sum_{l\in [h-1]} |V_l|\cdot(2l-1)$ iterations. 
\end{proof}

Let the main loop of the primal-dual procedure terminate after $t_f$ iterations.    
Then, starting from level $l=1$, the pruning procedure selects a sub-forest $\tilde{F}_l$ of the level $l$ forest $F_l$ such that each tree of $\tilde{F}_l$ is rooted at one of the vertices in $V_l$ and each vertex is connected in exactly one of the resulting sub-forests.   
Before describing the pruning procedure, we need to define a few terms. The degree of a subgraph $C$ within a graph (tree) $T$ is defined as:
\[
|\{(u, v) : u \in C, v \notin C, (u, v) \in T\}|.
\]
A subgraph $C$ of a tree $T$ in level $l$ is referred to as a \textit{frozen subgraph} if $C$ is a component that was frozen in level $l$ during some iteration $t \le t_f$. In other words, its remaining potential $P(Y_l(C), \pi) =0$. A subgraph $C\subseteq F_l$ of a tree $T$ in level $l$ is called a \textit{pendent-frozen subgraph} if $C$ is a frozen subgraph and its degree is equal to $1$. A \textit{maximal pendent-frozen subgraph} is a pendent-frozen subgraph $C \subseteq T$ such that there is no other pendent-frozen subgraph $C_0 \subseteq T$ with $C_0 \supseteq C$. 
For each vertex $u\in V$, let $ C_{l}(t, u)$ denote the
component containing $u$ in level $l$ at the start of iteration $t$ of 
the main loop.
Define $\text{active}_{l}(t, u)$ as an indicator for whether $C_{l}(t, u)$ is active:
\[
\text{active}_{l}(t, u) = 
\begin{cases} 
1 & \text{if } C_{l}(t, u) \text{ is active}, \\
0 & \text{otherwise}.
\end{cases}
\]

We revisit some observations derived from the execution of the main loop. 
\begin{observation}[Rathinam et al.~\cite{rathinam2020primal}]
\label{obs:primal-dual_merge}
Some observations regarding the primal-dual algorithm:
\begin{enumerate}
\item Every component in the Primal-dual procedure is a tree graph. 
\item Considering edge constraints~(\ref{eq:maxconstraint1}), components at lower levels generally merge earlier than their counterparts at higher levels, primarily because the edge costs between components are smaller at lower levels. For a component containing $u$ in level $l$, if $l<l'$, then 
\[
C_{l}(t, u) \supseteq  C_{l'}(t, u).
\]
\item Regarding vertex constraints~(\ref{eq:maxconstraint2}), components at higher levels tend to freeze before their counterparts at lower levels. This occurs because the corresponding dual value at a higher level ceases to increase, due to one of two reasons: either the component is frozen at all higher levels, or it becomes inactive at a higher level.  
\end{enumerate} 
\end{observation}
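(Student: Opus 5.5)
We prove the three items of Observation~\ref{obs:primal-dual_merge} together, by induction on the iteration counter $t$ of the main loop. The invariant maintained at the start of every iteration $t$ has three parts. (a) Every component of every forest $F_l(t)$ is a tree. (b) For all $u$ and all $l<l'$ we have $C_l(t,u)\supseteq C_{l'}(t,u)$. (c) If $C_{l'}(t,u)$ is active, then each component $C_l(t,u)$ with $l<l'$ that is not inactive has accumulated at least as much dual growth as $C_{l'}(t,u)$. As a consequence of (c), if a lower-level component is frozen, then all of its higher-level descendants are frozen or inactive. At $t=0$ all components are singletons and all duals are zero, so the invariant holds trivially.

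\emph{Item 1.} An edge is added only when its constraint~(\ref{eq:maxconstraint1}) becomes tight. Such an edge lies in $\delta_l(S)$ for some active $S$, so it joins two distinct components of $F_l$. Adding an edge between two distinct trees of a forest yields a tree, so acyclicity is preserved.

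\emph{Item 2.} We compare the slack of an edge $e$ at levels $l<l'$. We have $\cost_l(e)=\ell(e)/p_l\le \ell(e)/p_{l'}=\cost_{l'}(e)$. Moreover, by the inductive hypothesis (b)/(c), the total dual placed on the sets $S\ni$ endpoint of $e$ at level $l$ has been grown at least as long as at level $l'$. The reason is that a component is active at level $l$ whenever its descendant is active at level $l'$, unless the level-$l$ component has become inactive, and in that case it already contains a level-$l$ root and so is a superset in any case. Hence an edge becomes tight at level $l$ no later than at level $l'$. The tie-breaking rule, which processes the lowest level first, then guarantees that each merge at level $l'$ is preceded by, or simultaneous with, a merge making the endpoints already connected at level $l$. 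This re-establishes $C_l(t+1,u)\supseteq C_{l'}(t+1,u)$.

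\emph{Item 3.} We use constraint~(\ref{eq:maxconstraint2}). The potential of $U$ at level $l$ is $\pi_l(U)$ minus the level-$l$ growth plus the level-$(l+1)$ growth inside $U-V_{l+1}$. While the level-$(l+1)$ descendants are still active, the two growth terms increase at equal rates, so the potential at level $l$ does not decrease. The potential can therefore only be exhausted once every descendant at level $l+1$ has stopped growing, that is, once each has become frozen or inactive. This is exactly the claimed ordering, and it also gives part (c) of the invariant for $t+1$.

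The main obstacle is item 2. There the dual variables at the two levels do not live on the same sets, so a careful comparison of the accumulated duals $\sum_{S\ni v} Y_l(S)$ against $\sum_{S\ni v} Y_{l'}(S)$ is needed, and this comparison must account for frozen components. Our first attempt will be the pointwise inequality $\sum_{S\ni v}Y_l(S)\ge \sum_{S\ni v}Y_{l'}(S)$, proved jointly within the induction. As in Rathinam et al.~\cite{rathinam2020primal}, this argument relies on the monotone-cost property.
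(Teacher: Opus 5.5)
\textbf{Verdict: genuine gap.} Your overall strategy is the right one, but the key step of item~2 is left unproved, and the partial justification you give for it is incorrect.

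The paper does not prove this observation. It attributes it to Rathinam et al.\ and records only the two heuristic reasons in the statement: edges are cheaper at lower levels, and a higher-level dual stops growing because that component is frozen or inactive. Your joint induction over iterations uses $\cost_l(e)\le\cost_{l'}(e)$ for item~2 and the rate of change of the remaining potential for item~3, which is the natural way to make those reasons rigorous.

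Item~2 needs the pointwise dominance $\sum_{S\ni v}Y_l(S)\ge\sum_{S\ni v}Y_{l'}(S)$ for $l<l'$, which you leave as a ``first attempt''. The only active level-$l$ set containing $v$ is $C_l(t,v)$, so $\sum_{S\ni v}Y_l(S)$ grows at rate $\text{active}_l(t,v)$. Both sides are zero at time $0$. Hence dominance follows from the status implication $\text{active}_{l'}(t,v)=1\Rightarrow\text{active}_l(t,v)=1$. That implication, not your conditional invariant (c), is what must be carried.

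Your argument for the implication has three problems:
\begin{itemize}
\item It treats only the case that $C_l(t,u)$ is inactive, and the reason you give there (it ``contains a level-$l$ root and so is a superset in any case'') does not hold: an inactive $C_l(t,u)$ need not contain the other endpoint $v$. The correct reason is that inactivity is inherited by descendants. By (b), $C_{l'}(t,u)$ is then inactive too, so both $u$-sides stop growing at the same moment.
\item The case that $C_l(t,u)$ is \emph{frozen} while $C_{l'}(t,u)$ is active is not addressed. Excluding it is exactly item~3, so items~2 and~3 cannot be proved separately.
\item Your ``consequence of (c)'' does not follow from (c). Condition (c) only compares accumulated duals and does not rule out a frozen component sitting above an active descendant.
\end{itemize}

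To close the gap, carry the invariant (a), (b) and the status implication, and prove item~3 inside the same induction.

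\emph{Item~3 at adjacent levels.} Let $C$ be a current level-$l$ component. By (b), the level-$(l+1)$ sets $S\subseteq C-V_{l+1}$ whose duals are growing are exactly the active level-$(l+1)$ components contained in $C$. So $P(Y_l(C),\pi)$ changes at rate $a-1$ if $C$ is active and at rate $a$ otherwise, where $a$ is the number of active descendants. The two growth terms do not grow at ``equal rates'' when $a\ge 2$. Hence $C$ can freeze only when $a=0$. After that, no descendant can resume growing before $C$ merges at level $l$: by (b) every level-$(l+1)$ merge partner of a component inside $C$ also lies inside $C$, and no edge between non-growing components becomes newly tight.

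\emph{All levels $l<l'$.} Chain the adjacent-level statement across $l,l+1,\dots,l'$: a frozen or inactive component forces frozen or inactive descendants one level up. This gives the status implication, and hence dominance, for all $l<l'$.

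\emph{Preserving (b).} Take an edge $e=(u,v)$ that becomes tight at level $l'$ while $u$ and $v$ lie in different level-$l$ components. Then $u$ and $v$ have never shared a component at level $l$, nor, by (b), at level $l'$. So both left-hand sides of~(\ref{eq:maxconstraint1}) equal the sum of the two endpoint accumulations. Dominance and cost monotonicity make the level-$l$ constraint tight no later. Lowest-level-first tie-breaking then processes it first, which preserves (b).
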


\paragraph{Pruning Procedure}  The pruning procedure is implemented over $h$ iterations. At the start of the pruning procedure, set $l:= 1$ and let $F_1 = F_1(t_f)$.

\begin{enumerate}
    \item From each tree of $F_l$, remove the maximal pendent-frozen subgraph of $F_l$ and the frozen components from it. The resultant pruned forest is $\tilde{F}_l$. 
    \item If $l:= h$, stop. Otherwise, let $F_{l+1}$ be the union of discarded pendent-frozen subgraphs and components from $F_l$ in the previous step. Set $l = l + 1$ and go to step 1.
\end{enumerate}

Similar to other primal-dual algorithms, frozen components contain vertices connected to vertices in higher level graphs, 
and hence the pruning step discards them for processing in an appropriate (higher) levels. 
Similarly, maximal pendent-frozen subgraphs must be pruned to ensure that no frozen 
component in this layer contributes a degree of one in the standard degree-based inductive 
argument for the $2$-approximation ratio in the primal-dual method.

\paragraph{Feasibility}
We first demonstrate that the resulting pruned forests
constitute a feasible solution to the Tree Combination Problem.
To substantiate this, we show that after any iteration $l$ in the pruning procedure, all vertices $\bigcup_{l' \in [l]} V_{l'}$ from levels $1, 2, ..., l$ and some vertices of $\bar{V}_l$ are connected to exactly one of the resulting forests $\{\tilde{F}_{l'}\}_{l'\in [l]}$, and the remaining vertices discarded in iteration $l$ are either included in an inactive component or found in frozen components at higher levels. 

It is straightforward to verify that this holds true for level $l=1$, as all vertices are either included in the inactive forest $F_1$ or contained within some frozen components. Assume that the lemma holds for $l = l'$. We will now demonstrate that the lemma is also valid for $l = l'+1$. Consider a pruned frozen component $C$ from level $l'$. For any vertex $u \in C$ at any time $t$, the component $C_{l}(t, u) $ is a subset of $ C $ and its
 potential becomes $0$ in level $l$ earlier than component  $C$ in level $l'$. This implies that either
$ C_{l}(t_f, u) \text{ is a frozen component in level $l$} $ and discarded in level $l$ 
or
$ C_{l}(t_f, u) $ is inactive, meaning that $ C_{l}(t, u) \text{ contains a vertex of } V_{l} $. Thus, the components that are pruned away in each pruning step do not affect its connectivity in the higher levels. Hence, the induction step is proved.

\paragraph{Approximation Guarantee}
We first show that the dual value of the LP relaxation can be equivalently written as the sum
of the dual values and penalty costs corresponding to vertices at each level. This will allow us to
bound the cost of the edges in the forest of each level with respect to the dual value.

Consider any level  $l \in\{1, ... , h-1\}$. 
Let $\tilde{V}_l$ denote the set of vertices that are spanned by the forest $\tilde{F}_l$, and let $U_l$ be the union of the vertex sets $\tilde{V}_{l+1}, \tilde{V}_{l+2}, \ldots, \tilde{V}_h$ and $U_0 = V$.   
We further define $X_l$ to be the set of components not spanned by the resulting forests $\tilde{F}_1, \tilde{F}_2, \ldots, \tilde{F}_l$. Notice that the union of the vertex sets of the components in $X_l$ is equal to $U_l$, since $U_l = \bigcup_{l' > l} \tilde{V}_{l'}$. 

We first claim that $X_l$ can be partitioned into sets $X_{l, 1}, X_{l, 2}, ...X_{l, m_l} $ such that $P(Y_l(X_{l, j}), \pi)=0$
 for each set $X_{l, j}$.   To see this, note that for any vertex in $X_{l, j}$ either it was connected to a vertex in the level $l$ root set $V_l$  before the deletion process, or it was not. If it was not connected to the root set, then it was in some frozen component $I$ not containing a root vertex, and $I$ was frozen since its potential $ P(Y_l(I), \pi)=0$. If it was connected to a root vertex before the edge deletion process, then it must have been pruned because it is a pendent-frozen subgraph. According to the notation, its remaining potential $P(Y_l(X_{l,j}), \pi)=0$. Hence, the claim follows.

\begin{lemma} 
\label{lem:dualvaluerelation}
 $$ \sum_{S\subseteq V} Y_{1}(S) = \sum_{S\subseteq V, S\not\subseteq U_1} Y_{1}(S) + \sum_{l=2}^{h-1} \sum_{S\subseteq U_{l-1}, S\not\subseteq U_l} Y_{l}(S)  + \sum_{l=1}^{h-1}  \pi_l(U_l)   $$ 
\end{lemma}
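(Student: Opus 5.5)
The identity will follow from a telescoping argument over the levels, with the tight penalty constraints on the pruned-away sets supplying the penalty terms. At level $l$ we split the dual mass $\sum_{S} Y_l(S)$ into sets $S$ not contained in $U_l$ and sets contained in $U_l$. Then we rewrite the latter via the partition of $X_l$ into pieces $X_{l,1},\dots,X_{l,m_l}$ with zero remaining potential, established just before the lemma. Iterating from $l=1$ to $l=h-1$ produces the claimed formula.

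\textbf{Step 1 (laminarity).} Each $Y_l(S)$ with $S\subseteq U_l$ and $Y_l(S)>0$ must lie inside a single $X_{l,j}$. The supports of the dual variables at each level are the components created by moat growing, so they are laminar. Each $X_{l,j}$ is a union of components discarded by pruning: either a frozen component, or a pendent-frozen subgraph cut off along an edge of degree one. Hence any positive-dual set contained in $U_l$ is contained in some $X_{l,j}$. It follows that
\[\sum_{S\subseteq U_l} Y_l(S)=\sum_j \sum_{S\subseteq X_{l,j}} Y_l(S).\]

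\textbf{Step 2 (tight potential).} By the claim preceding the lemma, $P(Y_l(X_{l,j}),\pi)=0$. This gives
\[\sum_{S\subseteq X_{l,j}} Y_l(S)=\pi_l(X_{l,j})+\sum_{S\subseteq X_{l,j}-V_{l+1}} Y_{l+1}(S).\]
Now sum over $j$. Since $\pi_l$ is additive over vertices, the penalty terms combine to $\pi_l(U_l)$. We also need to identify $\bigcup_j (X_{l,j}-V_{l+1})$ with $U_l$ for the purposes of the $Y_{l+1}$ sum. The key point is that the level-$(l+1)$ duals are supported on sets avoiding $V_{l+1}$, since components containing a level-$(l+1)$ vertex are inactive and their dual variables are zero. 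Laminarity again keeps each such set inside one $X_{l,j}$. So the right-hand side becomes $\pi_l(U_l)+\sum_{S\subseteq U_l} Y_{l+1}(S)$. Also, every positive $Y_{l+1}(S)$ is supported on $\bar V_{l+1}$, and the pruning at level $l+1$ acts only on vertices of $U_l$. Therefore $\sum_{S\subseteq U_l}Y_{l+1}(S)$ is exactly the total level-$(l+1)$ dual mass relevant at the next stage, with $U_l$ playing the role of $V$.

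\textbf{Step 3 (induction).} We apply this from $l=1$ upward:
\[\sum_{S\subseteq V}Y_1(S)=\sum_{S\not\subseteq U_1}Y_1(S)+\pi_1(U_1)+\sum_{S\subseteq U_1}Y_2(S).\]
Next we split $\sum_{S\subseteq U_1}Y_2(S)$ into sets not contained in $U_2$ plus sets contained in $U_2$, and repeat Step 2 at level $2$. We continue up to level $h-1$. The last remainder $\sum_{S\subseteq U_{h-1}}Y_h(S)$ is zero, because level-$h$ variables are identically zero by the convention that the auxiliary variables $Y_h$ vanish. This yields exactly the stated identity.

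\textbf{Main obstacle.} The hardest part is Step 2: justifying that every positive-dual set $S$ at levels $l$ and $l+1$ lies entirely within one $X_{l,j}$, or entirely outside $U_l$. For the frozen pieces this follows from laminarity and from the fact that a frozen component's descendants are nested inside it (Observation~\ref{obs:primal-dual_merge}). For the pendent-frozen pieces, the first thing to try is to show that the degree-one cut edge is never crossed by a positive-dual set contained in $U_l$. Any set crossing that edge would contain the remainder of the tree, which contains a root vertex. Such a set has zero dual value, since root-containing components are inactive.
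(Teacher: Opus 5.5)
Your proposal is essentially the paper's proof. It splits off the sets not contained in $U_l$, rewrites the rest through the tight penalty constraints of the pruned pieces $X_{l,j}$ (summed using additivity of $\pi_l$), drops the $V_{l+1}$ restriction because sets containing a level-$(l+1)$ vertex carry zero $Y_{l+1}$, and telescopes down to $Y_h\equiv 0$. Your laminarity discussion also makes explicit the no-straddling property that the paper uses silently when it passes from the tight constraints on the individual $X_{l,j}$ to a single identity for $U_l$.

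One correction to your sketch for pendent-frozen pieces: a positive-dual set crossing the degree-one cut edge need not contain the rest of the tree or any root. The cut edge can only become tight through growth of the component on its other side, which is therefore active and rootless, and the moat formed at that merge stays active and crosses the edge. The needed conclusion still holds for a different reason: every such set contains the outer endpoint of the cut edge, which stays in $\tilde F_l$, so the set is not contained in $U_l$ and does no harm.
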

\begin{proof}
Note that
 $$ \sum_{S\subseteq V} Y_{1}(S) =\sum_{S\subseteq V, S\not\subseteq U_1} Y_{1}(S) + \sum_{ S\in U_1} Y_{1}(S).$$ For any  $l = 1, ..., h-1$, 
$$  \sum_{S \subseteq U_{l}}  Y_{l}(S) = \sum_{j=1}^{m_{l}} \sum_{S\subseteq X_{l, j}} Y_{l}(S).$$
Since $X_{l, j}$ is a pruned frozen component from level $l$, therefore its corresponding vertex constraint in~(\ref{eq:maxconstraint2}) is tight. Thus, we have 
 $$ \sum_{S \subseteq U_{l}}  Y_{l}(S) - \sum_{S \subseteq U_{l}\setminus V_{l+1}}  Y_{l+1}(S)   = \pi_{l}(U_{l}). $$
Since any component that includes a vertex from $V_{l+1}$ at level $l+1$ becomes inactive and its dual value is $0$, it follows that 
$$ \sum_{S \subseteq U_{l}}  Y_{l}(S) - \sum_{S \subseteq U_{l}} Y_{l+1}(S)   = \pi_{l}(U_{l}). $$
$$ \Rightarrow   \sum_{S \subseteq U_{l}}  Y_{l}(S) =\sum_{S \subseteq U_{l+1}}  Y_{l+1}(S) +\sum_{S \subseteq U_{l}, S\not\subseteq U_{l+1}}  Y_{l+1}(S)  + \pi_{l}(U_{l}).
 $$

Note that $Y_h(S) = 0$ for any subset $S \subseteq V_h$, since all components at level $h$ are singletons and inactive. 
Applying the above relation recursively, the lemma follows.
\end{proof}

Similar to the standard degree argument used in the primal-dual algorithm proof for the Prize-collecting Steiner Forest problem~\cite{goemans1995general}, for any $l = 1, ..., h-1$, we have   
\begin{align}\sum_{e\in \tilde{E}_l} \cost_l(e)\le 2\sum_{S\subseteq  U_{l-1}, S\not\subseteq U_l} Y_l(S).\label{equ:degreeargument}
\end{align}
 The 
approximation ratio will readily follow from these results.
 
\begin{lemma} 
The primal-dual algorithm constructs a set of forests $\tilde{F}_l =(\tilde{V}_l, \tilde{E}_l)$ such that
$$ \sum_{l\in[h-1]}\sum_{e\in \tilde{E}_l} \cost_l(e) +\sum_{l\in[h-1]} \pi_l(U_l)  \le 2\sum_{S  \subseteq V} Y_1(S)$$
where $U_l = \bigcup_{l' > l} \tilde{V}_{l'}$. 
\end{lemma}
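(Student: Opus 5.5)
The plan is to combine Lemma~\ref{lem:dualvaluerelation} with the per-level degree bound~(\ref{equ:degreeargument}) and then sum over all levels. Nothing new about the moat-growing process should be needed beyond these two ingredients. The only care required is matching index ranges and keeping track of the factor $2$.

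First, sum inequality~(\ref{equ:degreeargument}) over $l=1,\ldots,h-1$. This gives
\[
\sum_{l\in[h-1]}\sum_{e\in\tilde{E}_l}\cost_l(e)\le 2\sum_{l=1}^{h-1}\sum_{S\subseteq U_{l-1},\,S\not\subseteq U_l}Y_l(S).
\]
Since $U_0=V$, the $l=1$ term on the right is exactly $\sum_{S\subseteq V,\,S\not\subseteq U_1}Y_1(S)$. The remaining terms ($l=2,\ldots,h-1$) are exactly the middle sum in Lemma~\ref{lem:dualvaluerelation}. Denote by $A$ the sum of these edge-related dual terms.

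Lemma~\ref{lem:dualvaluerelation} then reads $\sum_{S\subseteq V}Y_1(S)=A+\sum_{l=1}^{h-1}\pi_l(U_l)$. Combining the two relations,
\[
\sum_{l}\sum_{e\in\tilde{E}_l}\cost_l(e)+\sum_{l}\pi_l(U_l)\le 2A+\sum_l\pi_l(U_l)\le 2A+2\sum_l\pi_l(U_l)=2\sum_{S\subseteq V}Y_1(S).
\]
The middle step uses only $\pi_l(U_l)\ge 0$, which holds because $w_{l+1}(v)\ge w_l(v)$ when $p_{l+1}\le p_l$. Note that $\sum_{S\subseteq V}Y_1(S)$ coincides with the dual objective $\sum_{S\subseteq\bar V_1}Y_1(S)$, since variables indexed by sets containing level-$1$ vertices stay zero throughout the algorithm.

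The main obstacle is justifying~(\ref{equ:degreeargument}) itself, which the paper invokes by analogy with Goemans--Williamson. I would prove it by induction over the iterations of the main loop at a fixed level $l$. Each edge of $\tilde{E}_l$ was tight when it was added, so its cost equals $\sum_{S:e\in\delta_l(S)}Y_l(S)$. Swapping the order of summation, the left side becomes $\sum_S Y_l(S)\,\deg_{\tilde F_l}(S)$.

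It then suffices to show that in every iteration the total degree in $\tilde F_l$ of the currently active level-$l$ components that intersect $\tilde{V}_l$ is at most twice their number. For this I would contract the components of that iteration in $\tilde F_l$, giving a forest. Two facts about the leaves of this forest are needed:
\begin{itemize}
\item Inactive components either contain a root in $V_l$, or do not intersect the pruned forest.
\item Frozen components never appear as leaves, because maximal pendent-frozen subgraphs were pruned away.
\end{itemize}
Hence every leaf that is not active is a root-containing component, and each tree contains at most one such component. The standard counting then gives average active degree at most $2$, with the usual $-1$ slack per tree absorbing the root.

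Finally, active components that contributed growth but lie entirely inside $U_l$ have degree $0$ in $\tilde F_l$, so they contribute nothing to the left side. This is why the right side of~(\ref{equ:degreeargument}) may be restricted to sets $S\not\subseteq U_l$. The restriction $S\subseteq U_{l-1}$ holds because $Y_l$ only grows on components disjoint from the vertices already settled at lower levels.
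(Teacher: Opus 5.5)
Your proposal is correct and follows the paper's proof exactly: you sum (\ref{equ:degreeargument}) over $l\in[h-1]$, identify the resulting dual terms (using $U_0=V$) with those of Lemma~\ref{lem:dualvaluerelation}, and conclude via $2A+\sum_l\pi_l(U_l)=2\sum_{S\subseteq V}Y_1(S)-\sum_l\pi_l(U_l)\le 2\sum_{S\subseteq V}Y_1(S)$, making explicit the nonnegativity of $\pi_l(U_l)$ that the paper uses tacitly. The paper never proves (\ref{equ:degreeargument}) itself, so your sketch of it is extra; be aware, though, that your justification of the restriction $S\subseteq U_{l-1}$ does not hold as stated, because vertices are settled only by the pruning after the main loop, and an active level-$l$ component can contain a vertex that ends up in some $\tilde V_{l'}$ with $l'<l$ (e.g.\ when it merged with a frozen piece that is later cut off as pendent-frozen at level $l'$), so what you would actually need to show is that such crossing sets have degree zero in $\tilde F_l$.
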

\begin{proof}

 We now rewrite the cost of the primal solution in terms of the dual variables following Equation~(\ref{equ:degreeargument}):
 
\begin{align*}
&\sum_{l\in[h-1]}\sum_{e\in \tilde{E}_l} \cost_l(e) +\sum_{l\in[h-1]} \pi_l(U_l)  &&\le 2\sum_{l\in[h-1]} \sum_{S\subset U_{l-1}, S\not\subseteq U_l} Y_l(S) +\sum_{l\in[h-1]} \pi_l(U_l)  \\
& &&=2 \sum_{S\subseteq V} Y_{1}(S) -2\sum_{l\in[h-1]} \pi_l(U_l) +\sum_{l\in[h-1]} \pi_l(U_l) 
\\
& &&\le 2 \sum_{S\subseteq V} Y_{1}(S)
\end{align*}
Note that the second equality follows from Lemma~\ref{lem:dualvaluerelation}.
\end{proof}

Note that the cost of the constructed solution to the Tree Combination Problem equals
the cost of the IP solution to the IP obtained via the primal-dual algorithm plus
the non-negative fixed cost $\sum_{v\in V} w(v)/p_1$. Therefore, we get a $2$-approximation algorithm for the Tree Combination Problem with multiple different speeds.  Combining this result with Lemma~\ref{lem:ReducingMin-sumCDTtoCombiantion}, we
conclude that there is an $12$-approximation algorithm for the non-preemptive Min-Sum CD
problem.

\begin{theorem}[Min-Sum CD]
\label{thm:Non-preemptiveMin-sumCD}
There exists an $O(1)$-approximation algorithm for the non-preemptive Min-Sum CD problem. 
\end{theorem}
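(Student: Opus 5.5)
The theorem follows by chaining the reductions already established. Given an instance of non-preemptive Min-Sum CD, I would first run Algorithm~\ref{alg:CreatCombiantion} to build the trees $T_1,\dots,T_k$. This takes polynomial time: contract the depots, compute an MST, un-contract, and attach each $(s_i,t_i)$ edge. The result is an instance of the Tree Combination Problem in which each tree carries the speed of the fastest drone at its depot. By Lemma~\ref{lem:ReducingMin-sumCDTtoCombiantion}, any $\alpha$-approximation for the Tree Combination Problem yields a $6\alpha$-approximation for Min-Sum CD. That yield comes from the factor $3$ in relating an optimal schedule to a tree-combination solution, and the factor $2$ from traversing an Euler tour of each large tree with its fastest drone. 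So it suffices to give a constant-factor approximation for the Tree Combination Problem with an arbitrary number $h$ of distinct speeds.

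For that step I would apply the multi-level primal-dual algorithm of the previous subsection. Its main loop terminates within polynomially many iterations by Lemma~\ref{lem:primal-dual_runningtime}, and the pruning phase also runs in polynomial time. The feasibility argument shows that the pruned forests $\tilde F_1,\dots,\tilde F_{h}$ assign every vertex to exactly one resulting tree rooted at a vertex of its own or a faster level. The final lemma bounds the forests' edge costs plus the level penalties $\sum_l \pi_l(U_l)$ by $2\sum_{S} Y_1(S)$. Since $Y$ is dual feasible, weak LP duality gives $\sum_S Y_1(S)\le$ the LP optimum, which is at most the IP optimum of (\ref{eq:minobjective})--(\ref{eq:minnonnegativity_y}). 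It remains to check that this IP optimum equals the tree-combination optimum minus the fixed cost $C=\sum_v w(v)/p_1$. The telescoping identity $w_l(v)-w_1(v)=\sum_{j<l}\pi_j(\{v\})$ ensures that every solution's cost splits as $C$ plus connection costs plus penalties. Adding $C$ to both sides then gives a solution of cost at most $2\cdot\OPT_{TC}+C\le 2\cdot\OPT_{TC}$, i.e.\ a $2$-approximation.

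Combining the two steps gives a $6\cdot 2=12$-approximation, which is $O(1)$. For only two speeds one could alternatively use Lemma~\ref{lem:ReducingTreeCombinationtoPrizeCollecting} with Lemma~\ref{lemma:PrizeCollectingProblem}, but this is unnecessary for the general statement.

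The main obstacle is the inequality (\ref{equ:degreeargument}), the degree argument bounding each level's edge cost by twice the dual growth outside $U_l$. I would follow the Goemans--Williamson proof. At any moment of the main loop, the forest $\tilde F_l$ restricted to the current level-$l$ components forms a forest in which every leaf component that is active is not frozen-pendent, because maximal pendent-frozen subgraphs were pruned. Hence the average degree of active components is at most $2$, while inactive components with a root can be charged nothing. Making this rigorous requires Observation~\ref{obs:primal-dual_merge}, namely the nesting $C_l(t,u)\supseteq C_{l'}(t,u)$. That nesting lets dual growth of components that are pruned and passed to higher levels be attributed to the sets $S\subseteq U_{l-1}$ with $S\not\subseteq U_l$ and not double-counted across levels. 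I expect that bookkeeping, rather than the chaining itself, to need the most care.
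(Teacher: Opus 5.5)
Your proposal follows the paper's route: compose the $6\alpha$ reduction of Lemma~\ref{lem:ReducingMin-sumCDTtoCombiantion} with the primal-dual $2$-approximation for the Tree Combination Problem to get a $12$-approximation, resting (as the paper does) on the dual lower bound and the Goemans--Williamson-style inequality~(\ref{equ:degreeargument}), which the paper asserts rather than proves. There are two small slips to fix. After adding back the fixed cost, the bound should read $2(\OPT_{TC}-C)+C=2\OPT_{TC}-C\le 2\OPT_{TC}$, not $2\OPT_{TC}+C\le 2\OPT_{TC}$. In the degree argument, the invariant you need is that no leaf of the auxiliary forest is frozen (every leaf is active or contains a level-$l$ root), which is exactly what pruning the pendent-frozen subgraphs guarantees.
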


\section{Algorithmic Adjustments for Implementation}\label{sec:adjustments}
Before presenting the experimental results, we discuss two modifications that
we have made in the implementation of our algorithm to improve the performance,
both in terms of running time and solution quality, in practical problem settings.
First, we refine the algorithm for tree construction (Algorithm~\ref{alg:CreatCombiantion}). A straightforward implementation
of step 2 of Algorithm~\ref{alg:CreatCombiantion} would involve constructing
the graph $G'$ of size $O(n^2)$ and computing an MST in $O(n^2)$ time.
In our experiments we only consider instances where all
input points lie in the Euclidean plane. Then the
MST is a subset of the Delaunay triangulation. We therefore compute
a Delaunay triangulation in $O(n\log n)$ time \cite{DBLP:books/lib/BergCKO08} before contracting the
depots into a single depot~$r$ (referred to as \emph{super-depot} in the remainder
of this section). The resulting graph has $O(n)$ edges,
and an MST can be computed in $O(n\log n)$ time. This significantly
reduces the running time.

Furthermore, we observed in the experiments that computing first an MST of the sources
and then attaching the targets produces trees that lead to solutions of relatively large cost.
This is because the vehicles move from the target of one request to the source of the
next request they serve, and therefore target-to-source distances are more relevant
to the application than source-to-source distances. We would therefore like to
grow the spanning tree in a Prim-like fashion by adding to the tree the request whose source is closest to the \emph{target} of a request that is already in the tree, rather than the request whose source is closest to a \emph{source} already in the tree. A straightforward implementation of this idea would increase the running-time
significantly, however, as after a request is added to the tree, the priority of
all sources of requests not yet in the tree may need to be updated. To
avoid this computational expense, we use the Delaunay triangulation to
determine for each target a (small) set of nearby candidate sources, and we update
only the priority of those candidate sources.

As the total length of a spanning tree constructed in this way might not be
within a constant factor of the MST of sources (with targets attached) on
some pathological input instances, we consider a further modification that
guarantees that the spanning tree constructed is within a constant factor of the
MST of sources (with targets attached): When deciding
which request to add to the tree, we add the request with minimum target-to-source
distance $d_{ts}$ only if $d_{ts}$ is at most $k$ times the minimum
source-to-source distance $d_{ss}$ between the tree and a request not in the tree,
where $k$ is a small constant. This ensures that the spanning tree constructed
has total length at most $k$ times the total length of an MST of the sources
(and super-depot) with the targets attached. We add `MST $k$' in brackets
after the name of an algorithm to indicate that this modification has been
applied.

Second, we consider three alternatives for converting the large trees
produced as solution to the Tree Combination problem
by our primal-dual algorithm into tours. In the theoretical analysis,
we had constructed the tours as Euler tours of the large trees.
In the implementation, we perform a DFS of the large tree and
let the vehicle serve the requests in the order in which their
sources are first visited by the DFS. The length of a tour constructed
in this way is within a constant factor of the total length of the large tree.
We refer to the algorithm with this tour construction method as PD-DFS.

As an alternative, we implemented the following cheapest-insertion
heuristic: Build the tour for a large tree incrementally. Initially,
the tour consists only of the depot of the fastest vehicle. Then
process the requests in the large tree in arbitrary order, and
insert each request at the position of the tour where it leads to
the smallest increase in length (either right after the depot or
right after the target of a request already in the tour). We
refer to this method as PD-Greedy. It is efficient as long as
the number of requests included in a large tree is not too large.

As a final alternative, we implemented a two-stage version of the
cheapest-insertion heuristic: First, use the cheapest-insertion
heuristic to build a tour for each of the original trees included
in the large tree. Then, use the cheapest insertion heuristic again
to produce an order of the tours produced for the original trees (discarding
their depots), starting with the tour of the original tree of the fastest depot.
We refer to this method as PD-DGreedy (``double-greedy'').

We note that PD-DFS (MST $k$) is guaranteed to produce a solution that
is a constant-factor approximation of the optimal solution, while the
other algorithm variants do not guarantee this property for all inputs.

In the following, we describe the above-mentioned adjustments in full detail.

\paragraph{Adjustment 1: Minimum-Cost Tree Construction}

The tree construction algorithm (Algorithm~\ref{alg:CreatCombiantion}) is adjusted as follows: We build the tree in a Prim-like fashion. We allow each $(s,t)$ pair also to be connected to the destination of another package in the tree rather than only to the super-depot or a source that is already in the tree.
The packages that are not yet in the tree are maintained in a priority queue,
and the priority of a package $(s,t)$ in the priority queue is
the minimum distance between $s$ to the super depot or any destination already in the tree. If the MST $k$ modification is used, the minimum distance between $s$ and
a source already contained in the tree, multiplied with factor~$k$, is also included in the computation of the priority.
At any time, the package with minimum priority is then removed from the priority queue and added to the tree.
A straightforward implementation of this idea would cause the running time to increase substantially, however, as adding one $(s,t)$ pair to the tree could require the priority of all remaining packages in the priority queue to be updated.
To circumvent this, we determine a set of \emph{candidate sources} for each destination
as follows. Assuming that the input points are points in the Euclidean plane,
we first compute a Delaunay triangulation \cite{DBLP:books/lib/BergCKO08} of the sources. For each destination, we then determine a small set of candidate sources that are `near' the destination, as determined via the Delaunay triangulation.
When an $(s,t)$-pair gets added to the tree, we then only update the distances
of the candidate sources of~$t$, leading to a significantly reduced running time. (If the MST~$k$ modification is used, we also update the Delaunay neighbors of
$s$.)

\begin{enumerate}[label=Step \arabic*:]
  \item \textbf{Super-depot contraction.}  
  All depots are contracted into a virtual node $r_{\text{super}}$, the super depot. For each source node, its distance to $r_{\text{super}}$ is defined as the distance to the nearest depot.  

  \item \textbf{Source-only Delaunay triangulation \footnote{We also experimented with constructing a target-only Delaunay triangulation (and using it to determine for each source the targets to which it is near), allowing each source to have one or more opportunities for its connection cost to be updated. However, in tests with uniformly distributed points, this approach produced higher final costs compared to the source-only Delaunay method.}.}  
  A Delaunay triangulation is constructed using only the source nodes, producing a sparse set of edges among them.  

  \item \textbf{Determine candidate sources (see Figure~\ref{fig:case_analysis_for_target_insertion_fig}).}  
  For each target node $t$:  
  \begin{enumerate}[label=Step 3.\arabic*:]
    \item If $t$ lies inside the circumcircle of any Delaunay triangle, the corresponding triangle is considered relevant to $t$ and its three vertices are added to the set of candidate sources for $t$.  
    \item If $t$ does not lie inside any circumcircle, two tangent lines are drawn from $t$ to the convex hull of the sources. The chain of hull vertices between the tangent points is taken as the candidate source set.  
    \item If the source $s$ corresponding to $t$ (i.e., the source $s$ of the package that is to be delivered to $t$) is contained in the candidate source set, then all neighbors of $s$ in the Delaunay graph are also added to the set.  
  \end{enumerate}

  \item \textbf{Prim-based tree construction.}  
  A Prim-like algorithm is executed on the sparse graph consisting of $r_{\text{super}}$, the sources $S$, and the targets $T$:  
  \begin{enumerate}[label=Step 4.\arabic*:]
    \item The connection cost of each source is initialized as its distance to $r_{\text{super}}$ (Step~1).  
    \item The source with the minimum connection cost is selected and connected to the tree.  
    \item The paired target $t$ of the selected source is immediately connected.  
    \item (a) The connection costs of the candidate sources associated with this target (Step~3) are updated. More precisely, for every candidate source $s$ that is not yet in the tree and has current priority $p$, its priority is updated to $\min\{p,d(t,s)\}$.
    
    (b) If the MST $k$ modification is used: The connection costs of all Delaunay neighbors of the source selected in Step 4.2 are updated. Specifically, for every source $s'$ that is not yet in the tree and is a Delaunay neighbor of the selected source $s^*$, if its current priority is $p'$, then its priority is updated to $\min\{p', k \cdot d(s^*,s')\}$, where $k$ is a fixed constant.
    \item Steps~4.2--4.4 are repeated until all $(s,t)$ pairs are connected into the tree.  
  \end{enumerate}

  \item \textbf{Super-depot expansion.}  
  The virtual node $r_{\text{super}}$ is replaced by the original depots.  

  \item  Return the collection of trees obtained.
\end{enumerate}

Steps~2 and~3 play a critical role in reducing the computational burden: once a new $(s,t)$ pair is connected, only the sources in the vicinity of the corresponding target, and potentially the sources that are Delaunay neighbors of~$s$, require cost updates,  rather than all sources.

Step 4.4 can be instantiated in two ways. The first variant applies only rule (a) and updates connection costs solely via target to source distances.
This variant follows the problem structure by prioritizing target-to-source connections, but it does not guarantee a constant approximation ratio. In particular, a subtle difficulty arises when certain sources are distant from all targets: In such cases, the connection costs of these sources may not be updated during the process and thus remain at their initial values, defined as the distance to the virtual super-depot $r_{\text{super}}$.
The second variant, called the MST $k$ modification, applies both rules (a) and (b). In this case, we maintain two priorities for each remaining source: (i) its best source–target priority $P_{ST}$ (the minimum distance from targets currently in the tree, initialized as the distance from the super-depot), and (ii) its best source–source priority $P_{SS}$ (to the sources that are Delaunay neighbors). If $P_{ST} \le k \cdot P_{SS}$, we add the request corresponding to $P_{ST}$; otherwise, we add the request indicated by $P_{SS}$. Here, $k$~is a fixed constant. This additional consideration of source–source distances is essential for guaranteeing a constant-factor approximation ratio.

\begin{figure}[H]
\centering
\includegraphics[width=0.85\textwidth]{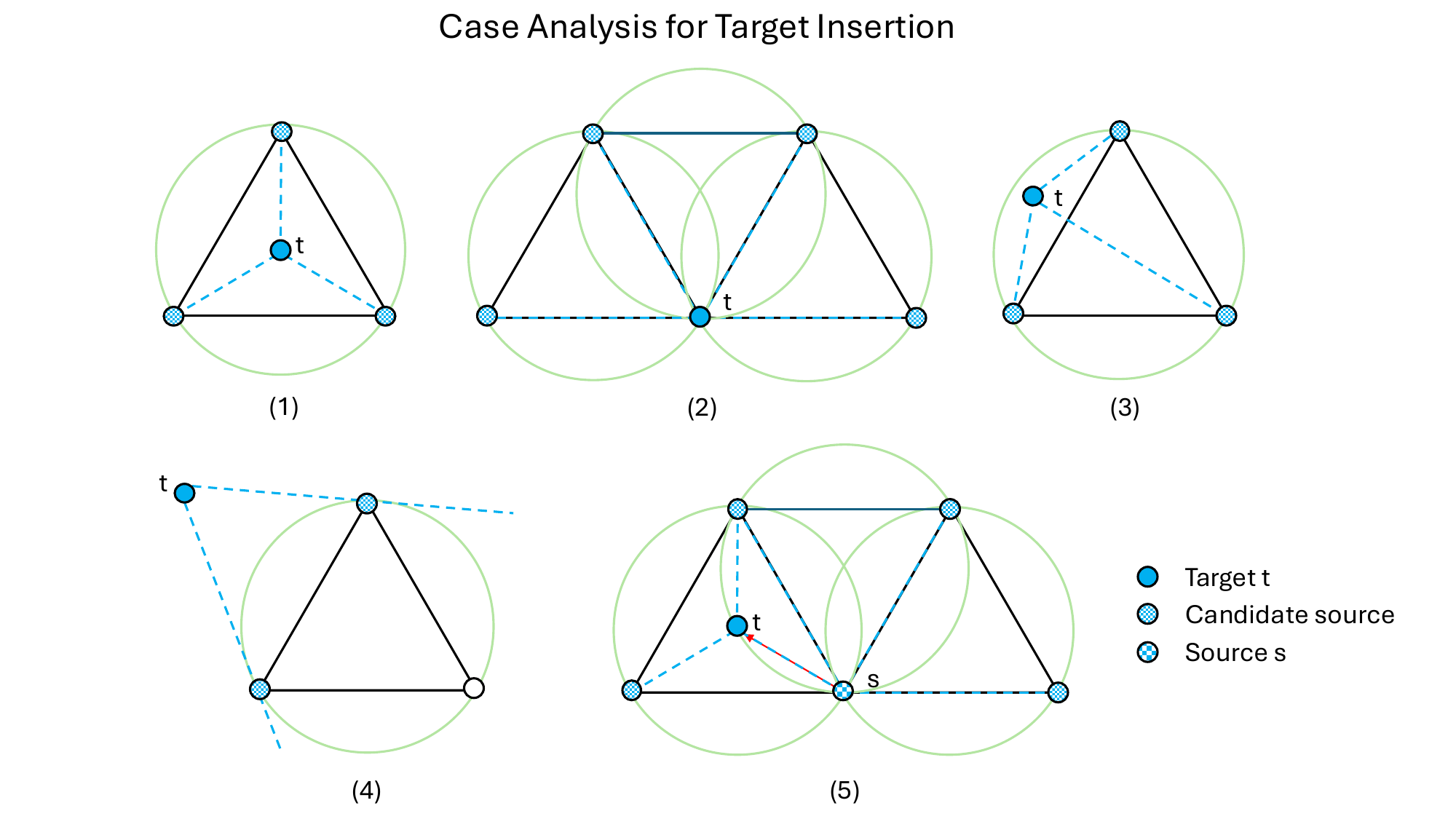}
\caption{The cases for determining candidate sources: (1)–(3) Step 3.1: The target $t$ lies inside the circumcircle of any Delaunay triangle. (4) Step 3.2: The target $t$ lies outside all circumcircles. (5) Step 3.3 : The corresponding source $s$ of $t$ is contained in the candidate source set.}  
\label{fig:case_analysis_for_target_insertion_fig}
\end{figure} 

\paragraph{Adjustment 2: Heuristic Final Path Construction} 
The result of the primal-dual procedure is a set of large trees, each of which is a combination of one or several original trees. We refer to the original trees
also as \emph{small trees} in the following. Each of the small trees is associated with a depot, and for each large tree we assign the fastest depot (vehicle) of its small trees to serve all $(s,t)$ pairs in the large tree.

We consider three different methods for constructing the final tour for each large tree.
Based on the approach used in the proof of Lemma~\ref{lem:ReducingMin-sumCDTtoCombiantion}, our first method is to perform a depth-first search (DFS) traversal on each large tree to construct a feasible final path. This procedure is computationally efficient, and the total cost of the resulting path is within a constant factor of the tree cost, providing the desired performance guarantee. We refer to this implementation as the \emph{Primal--Dual Algorithm with DFS} (PD-DFS).

The second method is to use the greedy cheapest-insertion heuristic: For each large
tree, we construct a route for its fastest vehicle to serve all requests in the
large tree. As this is done separately for each large tree, the need to compare against different depot paths and possible insertion positions is eliminated. This substantially reduces the computational cost and explains why our algorithm performs significantly faster than the heuristic baseline (see Section~\ref{sec:exp}). We refer to this approach as the \emph{Primal--Dual Algorithm with Greedy Routing} (PD-Greedy).

Additional efficiency is achieved by exploiting the fact that the requests in
a large tree are partitioned among the small trees that make up the large tree.
Specifically, we first apply the cheapest-insertion heuristic separately to the set of packages in each small tree (starting with the depot that was originally associated with that small tree). We then apply the heuristic a second time to determine the order in which the resulting paths (starting with the path of the small tree that contains the depot with the fastest drone, and omitting the depots of the paths of all
other small trees) are combined into the final path. If both the number of packages in each small tree and the number of small trees within each large tree are small, the greedy cheapest-insertion heuristic is applied only to instances of small size, resulting in a reduced running time. We refer to this approach as the \emph{Primal--Dual Algorithm with Double Greedy Routing} (PD-DGreedy).

By adjusting the tree construction procedure and considering three different methods for final path construction, we thus have three variants of our PD-based algorithm:
\begin{itemize}
\item \emph{Primal--Dual algorithm with Depth-First Search Routing} (PD-DFS);
\item \emph{Primal--Dual algorithm with Greedy Routing} (PD-Greedy); 
\item \emph{Primal--Dual algorithm with Double Greedy Routing} (PD-DGreedy).
\end{itemize}
Furthermore, for each of these algorithms we can optionally apply the MST $k$ modification to the tree construction step. In our experiments we found that
the choice $k=7$ worked best, and we refer to these three variants
as PD-DFS (MST 7), PD-Greedy (MST 7), and PD-DGreedy (MST 7).

\paragraph{Running Time Analysis}
For ease of presentation of the following analysis, we make the natural assumption that $k\le n$.
In the tree-construction phase, we first construct a Delaunay triangulation of the source nodes. This step runs in \(O(n \log n)\) time~\cite{DBLP:books/lib/BergCKO08}.
We then determine for each target node a set of candidate sources; this also takes
$O(n \log n)$ time provided that the number of candidate sources for each target
node is constant on average, which is typically the case. The Prim-based construction
of a spanning tree then also runs in $O(n \log n)$ time.

In our implementation of the primal-dual procedure, as analyzed in Lemma~\ref{lem:primal-dual_runningtime}, the main loop of the algorithm terminates after 
$
\sum_{l \in [h-1]} |V_l| \cdot (2l-1)
$
iterations, where \(V_l\) denotes the set of constructed trees whose depots are at level \(l\). Note that $|V_l|\le k$. The number of iterations of the primal-dual procedure thus depends
only on $k$ and~$h$, independent of the number of requests.

When the number of request pairs is small, the primal-dual phase dominates the running time. When the number of request pairs is large, the computation of the final routes becomes
more computationally expensive, especially for PD-Greedy and PD-DGreedy.
Specifically, the routing part of the PD-DFS algorithm requires $O(\sum_i N_i)=O(n)$ time, where \(N_i\) denotes the number of source-target pairs contained in the \(i\)-th large tree produced by the primal--dual algorithm.
The routing part of the PD-Greedy algorithm requires 
$
O\!\left(\sum_i N_i^2\right)
$
time, where \(N_i\) denotes the number of source-target pairs contained in the \(i\)-th large tree produced by the primal-dual algorithm. The routing part of the PD-DGreedy algorithm requires
$
O\!\left(\sum_i \left(S_i^2 + \sum_j M_{i,j}^2\right)\right)
$
time, where $S_i$ denotes the number of small trees contained in the $i$-th large
tree and \(M_{i,j}\) denotes the number of source-target pairs in the \(j\)-th small tree contained in the \(i\)-th large tree.

The algorithmic adjustments discussed in this section partly improve the running time
and partly improve the solution quality at the cost of a small increase in running time.
In the next section, we evaluate the performance and running time of our PD-based algorithms relative to a baseline algorithm.

\section{Computational Experiments}
\label{sec:exp}
In Section~\ref{sec:InstanceAndBaseline}, we describe the synthetic and
real-word instances used for our experiments as well as the baseline algorithm
to which we compare our primal-dual based algorithms. Then, in Section~\ref{sec:expeval},
we present the results of our experiments. Our implementations, which allow all experimental results to be fully reproduced, are publicly available at \url{https://github.com/WenZhang-Vivien/Tree_construction_Primal_Dual_Routing}

\subsection{Instances and Baseline Algorithm}
\label{sec:InstanceAndBaseline}

\paragraph{Synthetic Datasets}
We consider three synthetic datasets in our experimental evaluation.
The first dataset is constructed based on the worst-case instance described below for the baseline algorithm with which we compare the performance of our algorithms. It is used to demonstrate the guaranteed performance of our algorithms relative to the baseline. 

The second dataset, denoted by SY-U, consists of instances in which all locations---including request pickup locations, drop-off locations, and depot locations--—are independently generated from a uniform distribution over a $[0,100]\times[0,100]$ grid.

The third dataset, denoted by SY-GMM, is generated using a Gaussian mixture model with $c$ clusters. Each cluster follows a bivariate Gaussian distribution whose center is sampled uniformly from the $[0,100]\times[0,100]$ grid and whose covariance matrix is given by $\sigma^2 I$. To study the effects of clustering structure and spatial dispersion, we evaluate multiple combinations of the parameters $c$ and $\sigma$, as summarized in Table~\ref{tab:dataset_parameters}.

 \begin{table}[ht]
\centering
\caption{Parameter settings for the synthetic datasets. Bold values indicate fixed parameters used in the sensitivity analyses, while ``--'' denotes parameters that are not applicable.}
\label{tab:dataset_parameters}
\begin{tabular}{lcc}
\toprule
\textbf{Parameter} & \textbf{Worst-case} & \textbf{SY-U} / \textbf{SY-GMM} \\
\midrule
$n$ (number of requests) 
& $5,10,15,20 \,(\times 10^3)$ 
& $5,\mathbf{10},15,20 \,(\times 10^3)$   \\

$k$ (number of depots) 
& $2$ 
& $1,3,5,7,9,10,15,20,\mathbf{30} ,40 \;(\times 3)$  \\

$h$ (number of speeds/levels) 
& $2$ 
& $1,2,\mathbf{3},4,5$  \\

$c$ (number of clusters) 
& -- 
&  --  / $1,2,3,4,5,6 \;(\times 5)$ \\

$\sigma$ (cluster covariance) 
& -- 
&  --  /$1,2,3,4,5,6 \;(\times 5)$ \\
\bottomrule
\end{tabular}
\end{table}

\paragraph{Realworld Datasets}

We also conduct experiments using a real-world dataset collected from Meituan~\cite{meituan_informs_tsl_challenge}, one of the largest on-demand delivery platforms. The dataset contains orders, couriers (we treat the initial locations of the couriers as depots), and delivery records from an anonymized midsize city in China. To capture both regular and peak demand patterns, we select one representative weekday (October 18, 2022) and one representative weekend day (October 22, 2022) for evaluation.

To study scalability under varying system sizes, we consider multiple depot-scale settings with 3, 9, 15, 21, 27, 30, 45, 60, 90, and 120 depots. For each setting, depots are randomly sampled from the full pool of available depots. To ensure controlled comparisons across different scales, depot sets are constructed incrementally: the depot set in a larger-scale setting strictly contains the depot set from the previous smaller-scale setting. For example, the 15-depot configuration includes the same three depots used in the 3-depot setting, together with 12 additional depots randomly sampled from the remaining depots. 

For each depot configuration and each selected day, all orders recorded
on that day in the dataset are used as requests.
In these datasets, each location is represented by its latitude and longitude coordinates. The distance between two locations is defined as the Euclidean distance between their coordinates. A summary of key statistics for all experimental settings, including the number of depots, number of orders,
and other relevant characteristics, is reported in Table~\ref{tab:realdataset_parameters}.

\begin{table}[ht]
\centering
\caption{Parameter settings for the real-world datasets.}
\label{tab:realdataset_parameters}
\begin{tabular}{lcc}
\toprule
\textbf{Parameter} & \textbf{Weekday} & \textbf{Weekend} \\
\midrule
$n$ (number of requests) 
& 69{,}103 
& 77{,}638 \\

$k$ (number of depots) 
& \multicolumn{2}{c}{$1, 3, 5, 7, 9, 10, 15, 20, 30, 40 \;(\times 3)$} \\

$h$ (number of speed levels) 
& \multicolumn{2}{c}{$3$} \\
\bottomrule
\end{tabular}
\end{table}

\paragraph{Baseline Algorithm}
We are not aware of any previously proposed methods for the
heterogeneous collaborative delivery problem.
Heuristic algorithms, such as greedy methods, tend to be computationally efficient and align well with our goal of solving the problem quickly.
Therefore, we consider the following cheapest-insertion heuristic algorithm as the baseline in our experiments.
The algorithm begins with an arbitrary order of requests and processes them sequentially. For each request, it evaluates all vehicles and all insertion positions within their current routes (a request can be inserted after the depot or after the destination of any package that is already contained in the route), computing the additional travel time incurred. The request is then assigned to the vehicle and insertion position that yields the minimum increase in cost. This process is repeated until all requests are placed, producing a feasible set of routes that serves as a baseline for comparison in our experiments. The running time of the baseline algorithm is \(O(n(k+n))\).

We emphasize that this greedy cheapest-insertion heuristic is not a constant-factor approximation, even in the special case where each request is of the form $s_i = t_i$. In fact, we next construct an instance where the algorithm incurs a cost $\Theta(\alpha)$ times the optimal solution, where $\alpha = v_{\max}/v_{\min}$ is the ratio of the maximum to minimum vehicle speed.

\paragraph{Worst-case Instance of the Baseline Algorithm} 
Consider two vehicles with speeds $v_1$ and $v_2 = \alpha v_1$ with $\alpha>1$, located at distinct depots $o_1$ and $o_2$ respectively. 
There are $n$ requests $\{(s_i,t_i)\}_{i=1}^n$ with $s_i = t_i$.
The depots and request endpoints are placed in a line metric as follows:
\begin{itemize}
    \item $o_1$ is placed at $-nv_1$,
    \item $o_2$ is placed at $-(n+\varepsilon)v_2$ for some small $\varepsilon>0$,
    \item $s_i=t_i$ is placed at $(i-1)nv_1$ for $i=1,2,\ldots,n$.
\end{itemize}
By construction, the greedy cheapest-insertion heuristic assigns all requests to the slower vehicle (vehicle~1). 
Hence, the total travel time is
\[
\frac{n \cdot n \cdot v_1}{v_1} = n^2.
\]
On the other hand, if all requests are served by the faster vehicle (vehicle~2), then the total travel time is
\[
\frac{(n+\varepsilon)\cdot v_2 +  (n-1)\cdot n \cdot v_1}{v_2} 
\le n + \varepsilon + n^2 \cdot \frac{v_1}{v_2}.
\]
When $n$ is large, the optimal solution cost approaches $O(n^2 \cdot \frac{v_1}{v_2})$, whereas the greedy algorithm incurs cost $n^2$. 
Thus, the approximation ratio of the greedy heuristic is $\Theta(v_2/v_1) = \Theta(\alpha)$, showing that the algorithm is not a constant-factor approximation.

\subsection{Experimental Evaluation}\label{sec:expeval}
We first present the results for synthetic inputs and then those
for real-world data.
All experiments were conducted on a laptop with
Apple M2 Pro processor and 16 GB main memory running macOS~15.2.

\subsubsection{Computational Results in Synthetic Data Sets}
In this section, we present the results on the worst-case, SY-U, and SY-GMM datasets.
For the total travel time objective, the baseline algorithm performs well in synthetic data sets, but is slowest, and in some instances can perform significantly worse due to the lack of a performance guarantee.    
All three PD-based algorithms run faster than the baseline algorithm. Among them, PD-Greedy consistently achieves performance close to that of the baseline algorithm. PD-DFS is the fastest algorithm, but its solution quality in the experiments is worse than that of the other two. PD-DGreedy offers a good trade-off: it is faster than PD-Greedy while achieving better performance than PD-DFS. 

\paragraph{Computational Results in Worst-Case Data Sets} 
We construct a worst-case dataset on a line metric with two speed levels, following the example described in the previous section. Specifically, we set $\varepsilon$, $v_{\max}$, and $v_{\min}$ as fixed parameters and increase the number of source--target pairs $n$.
We let $\alpha=v_{\max}/v_{\min}=1,000$.
Figure~\ref{fig:routeCost_VS_runningTime_ratio_plot_diff_stPairs} shows the solution
cost and running-time versus the number $n$ of requests.
As $n$ increases, the total cost of the baseline algorithm grows with $n^2$ whereas the costs of all PD-based algorithms are about
$1,000$ times smaller, demonstrating the superior performance of our proposed methods.
The MST 7 variants of the PD algorithms produce the same solutions as the variants not using the MST 7 modification and are not shown.

In terms of running time, all algorithms except PD-DFS exhibit significant increases in running time as the number of requests $n$ increases. The baseline algorithm consistently requires more running time than the PD-based algorithms across all instances.
Compared to the other algorithms, the running time of PD-DFS grows much more slowly as the number of requests increases.

\begin{figure}[H]
\centering
\includegraphics[width=0.85\textwidth]{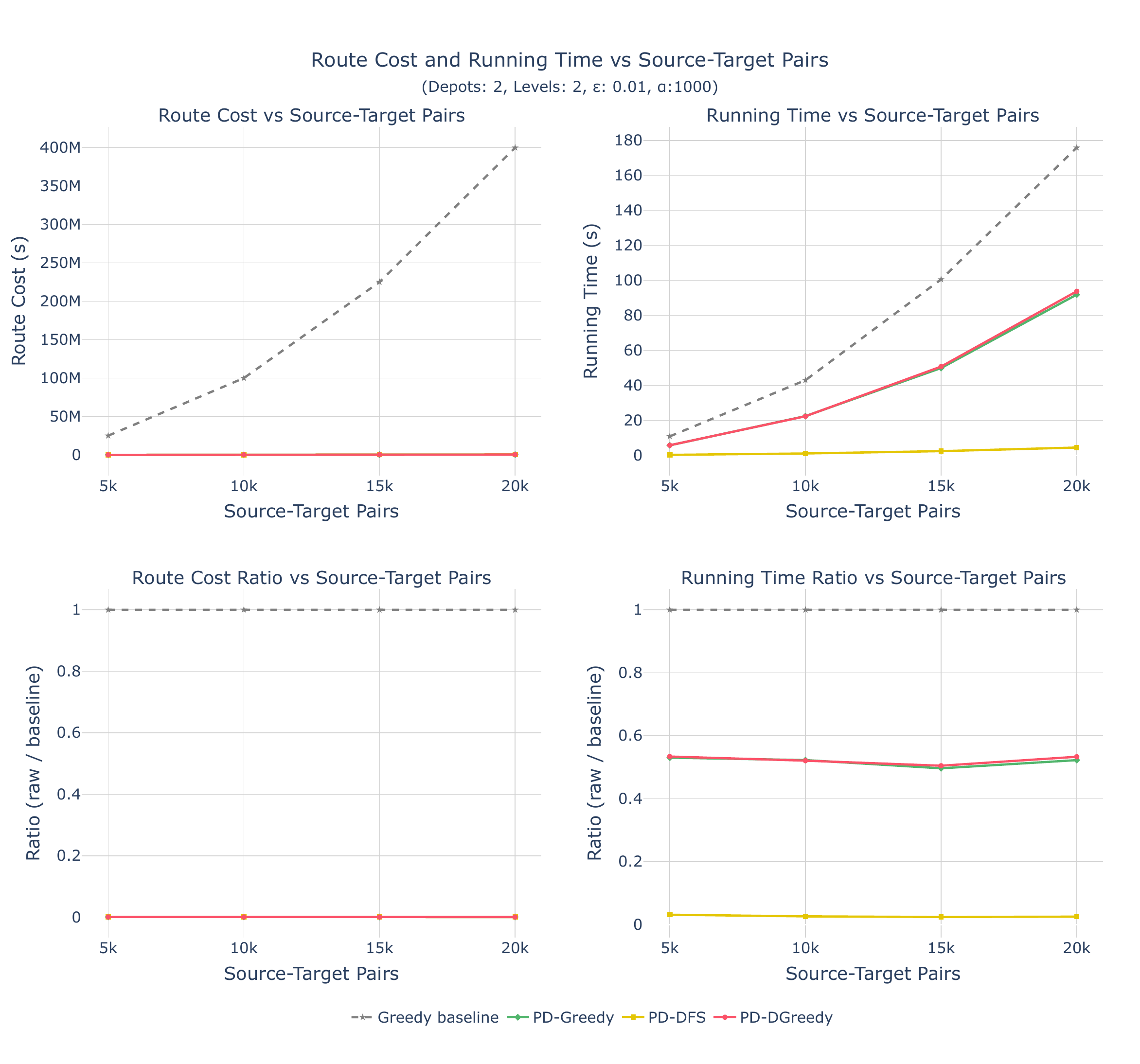}
\caption{Results on the worst-case instance for the baseline algorithm with varying $n$, where $\alpha = v_{\max}/v_{\min}$ and $\epsilon$ is a small constant parameter (see Section~\ref{sec:InstanceAndBaseline}).
}  
\label{fig:routeCost_VS_runningTime_ratio_plot_diff_stPairs}
\end{figure}

\paragraph{Computational Results in SY-U and SY-GMM Data Set}   
Figures~\ref{fig:routeCost_VS_runningTime_ratio_plot_diff_stPair}, 
\ref{fig:routeCost_VS_runningTime_ratio_plot_diff_depots_DFS}, and 
\ref{fig:routeCost_VS_runningTime_ratio_plot_diff_levels_speed_decay_5_DFS}
illustrate the impact of the number of source--target pairs \(n\), the number of depots \(k\), and the number of speeds/levels \(h\) on algorithm performance for data generated from a uniform distribution.
The corresponding figures for
data generated from a GMM distribution can be found in Appendix~\ref{app:SY_GMM_results}. We observe that the performance trends on the SY-GMM datasets are similar to those on the SY-U datasets; therefore, we only provide a detailed discussion of the effects of \(n\), \(k\), and \(h\) for the uniform case. Finally, we examine how the performance and running time vary with respect to the GMM parameters, namely the number of clusters \(c\) and the cluster covariance \(\sigma\).

For a fixed number of depots ($k = 90$) and a fixed number of speed levels ($h = 3$), as the number of source-target pairs $n$ increases from $5\times 10^3$ to $20\times 10^3$ as shown in Figure~\ref{fig:routeCost_VS_runningTime_ratio_plot_diff_stPair}, the ratio between the total travel time objective of our PD-Greedy (and PD-DGreedy) algorithm and that of the baseline algorithm decreases and approaches~1. Meanwhile, the ratio of the running times remains below~1 and continues to decrease, indicating a growing computational advantage of our approach at larger scales.
The MST 7 variants of the PD algorithms tend to produce slightly better solutions at the
expense of a slightly increased running time.

Figure~\ref{fig:TC_Primal_Dual_running_time_with_route_DFS} illustrates the breakdown of the running time of the individual components of each PD-based algorithm as the number of request pairs increases. When \(n\) is small (e.g., \(n = 5{,}000\)), the primal--dual component dominates the running time. As \(n\) increases, the routing phases account for a larger portion of the total running time: first the PD-Greedy routing component becomes dominant (around \(n = 10{,}000\)), followed by the PD-DGreedy routing component (around \(n = 15{,}000\)), and for sufficiently large \(n\) (e.g., \(n = 20{,}000\)), the PD-DFS routing phase also dominates the primal--dual step.

\begin{figure}[H]
\centering
\includegraphics[width=0.85\textwidth]{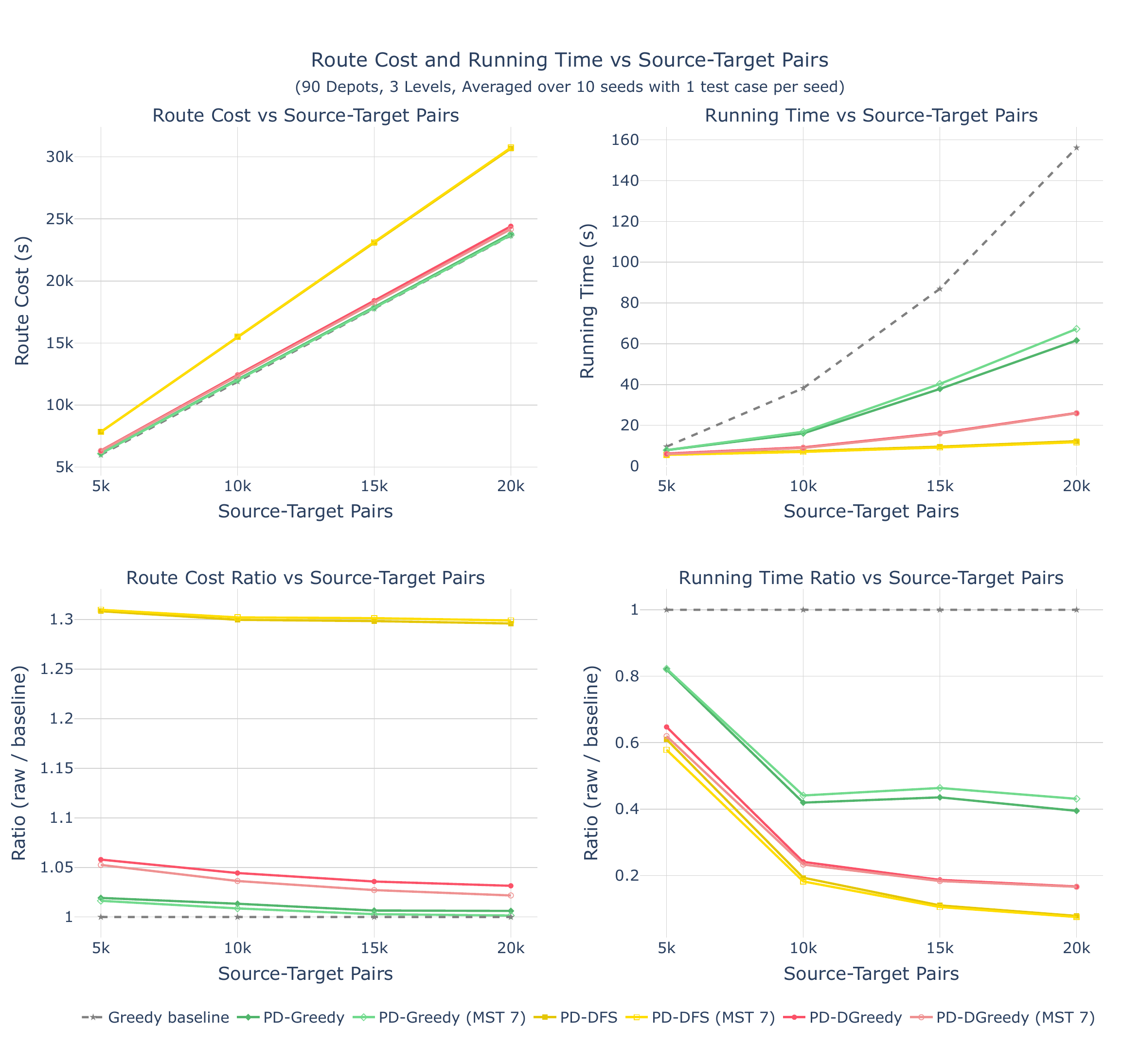}
\caption{Results on the Uniform synthetic dataset (SY-U) with varying $n$.}  
\label{fig:routeCost_VS_runningTime_ratio_plot_diff_stPair}
\end{figure} 

\begin{figure}[H]
\centering
\includegraphics[width=0.85\textwidth]{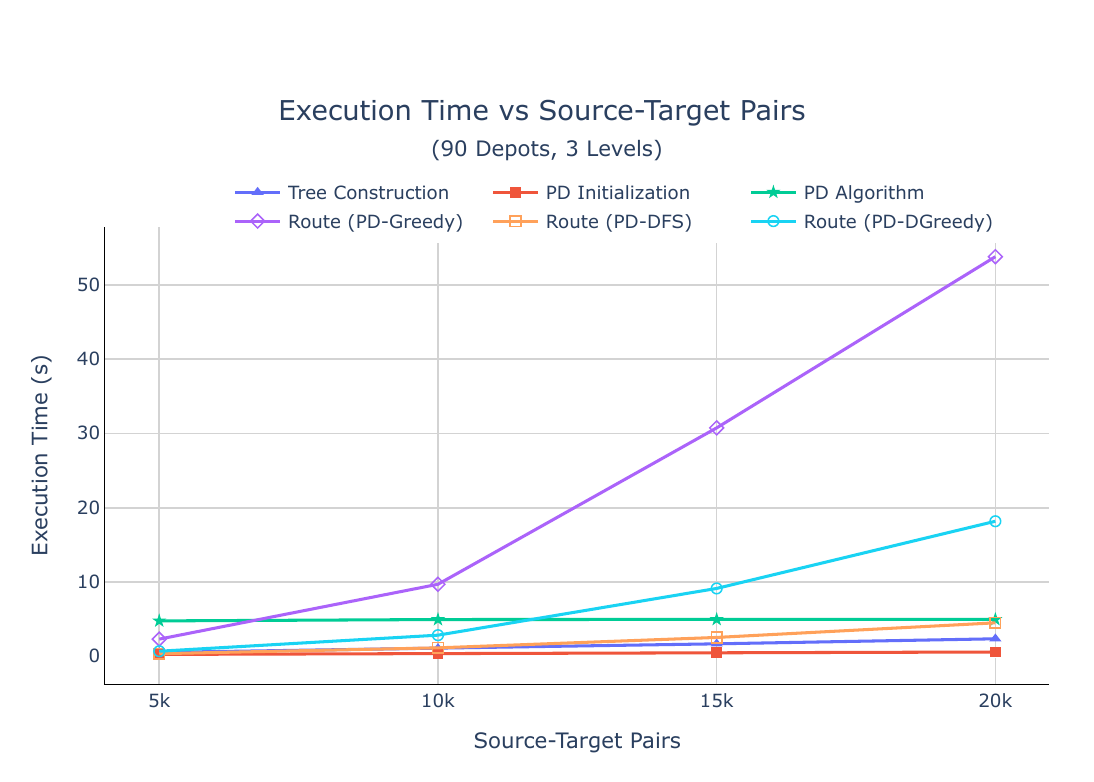}
\caption{Breakdown of running time for the three PD-based algorithms  on the Uniform synthetic dataset (SY-U) with varying $n$.}  
\label{fig:TC_Primal_Dual_running_time_with_route_DFS}
\end{figure}

For a fixed number of source-target pairs ($n = 10\times 10^3$) and a fixed number of speed levels ($h = 3$), when the number of depots $k$ is small (less than~30), as shown in Figure~\ref{fig:routeCost_VS_runningTime_ratio_plot_diff_depots_DFS}, our PD-Greedy (and PD-DGreedy) algorithm outperforms the baseline in terms of solution quality; with an increasing number of depots, the baseline algorithm attains slightly better performance. 
In terms of running time, the PD-based algorithms are always faster than the baseline. Interestingly, the running time of all PD-based algorithms decreases as the number of depots increases when the number of depots is relatively small (less than~60), and then increases thereafter. This behavior can be explained by the changing dominance of different algorithmic components. When the number of depots is small,
 the routing component dominates the overall running time, and its cost decreases as more depots are added. As the number of depots becomes large, the primal--dual (PD) component becomes dominant, leading to increased running time with further increases in the number of depots. This trend is illustrated in Figure~\ref{fig:TC_Primal_Dual_running_time_diff_depots_with_route}, where we separately report the running times of the individual components of each PD-based algorithm as the number of depots varies.  
These observations suggest that, for a fixed number of source-target pairs, there may exist an appropriate number of depots that simultaneously yields best solution quality and fast running time.

\begin{figure}[H]
\centering
\includegraphics[width=0.85\textwidth]{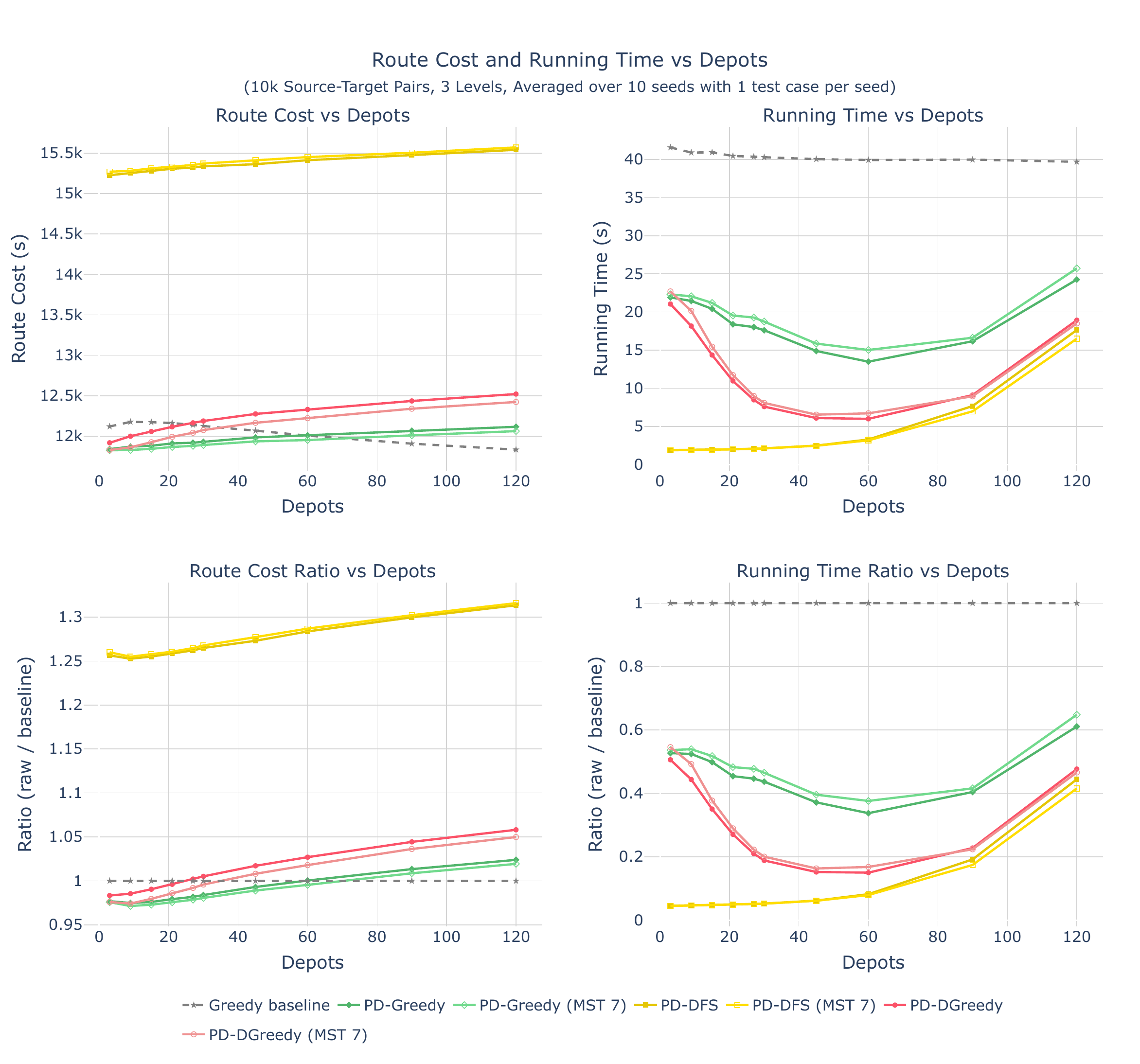}
\caption{Results on the Uniform synthetic dataset (SY-U) with varying $k$.}  
\label{fig:routeCost_VS_runningTime_ratio_plot_diff_depots_DFS}
\end{figure}

\begin{figure}[H]
\centering
\includegraphics[width=0.85\textwidth]{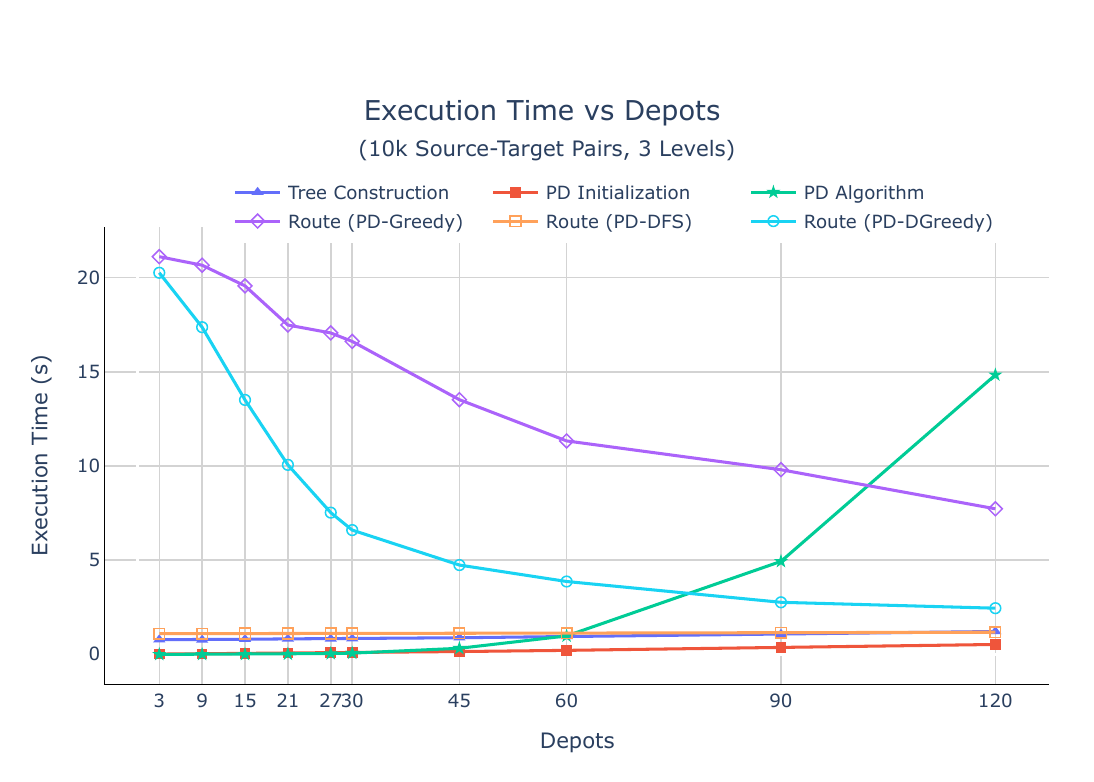}
\caption{Breakdown of running time for the three PD-based algorithms  on the Uniform synthetic dataset (SY-U) with varying $k$.}  
\label{fig:TC_Primal_Dual_running_time_diff_depots_with_route}
\end{figure}

For a fixed number of source-target pairs ($n = 10\times 10^3$) and a fixed number of depots ($k = 90$), as the number of speed levels $h$ increases, as shown in Figure~\ref{fig:routeCost_VS_runningTime_ratio_plot_diff_levels_speed_decay_5_DFS}, 
the performance of our PD-Greedy algorithm is slightly worse than that of the baseline algorithm when the number of speed levels is small (less than~3), but becomes superior as the number of speed levels increases. In terms of running time, the baseline algorithm spends most of its time in the same computational components, and its running time therefore remains largely unchanged as the number of speed levels increases.  
In contrast, the running time of our PD-based algorithms  increases with the number of speed levels, but they consistently remain faster than the baseline algorithm. This increase is mainly due to the Primal Dual component, whose running time grows with the number of speed levels. This trend is illustrated in Figure~\ref{fig:TC_Primal_Dual_running_time_diff_levels_with_route_speedDiff_5}, where we separately report the running times of the individual components of each PD-based algorithm as the number of speed levels varies.  

\begin{figure}[H]
\centering
\includegraphics[width=0.85\textwidth]{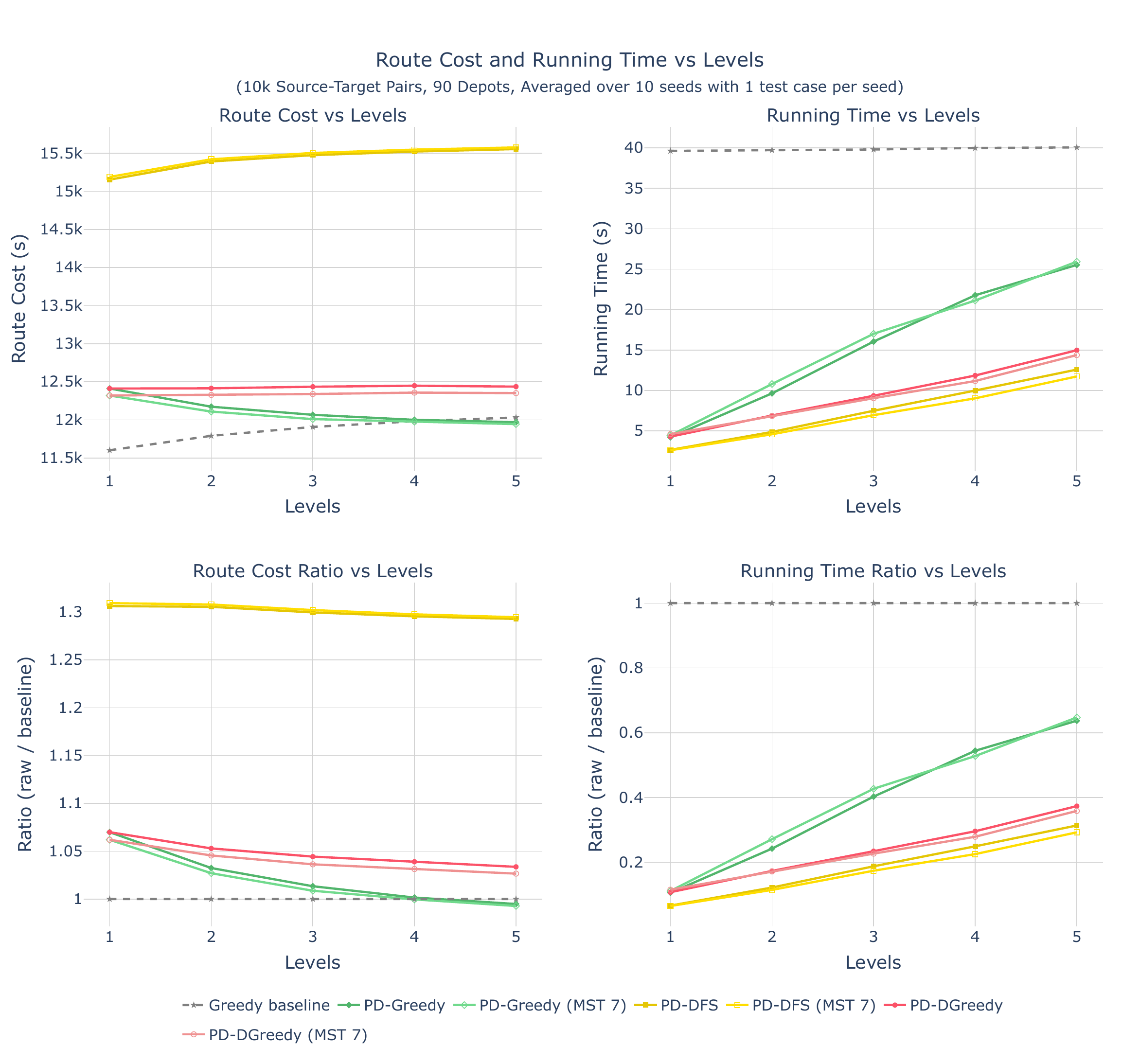}
\caption{Results on the Uniform synthetic dataset (SY-U) with varying $h$.}  
\label{fig:routeCost_VS_runningTime_ratio_plot_diff_levels_speed_decay_5_DFS}
\end{figure}

\begin{figure}[H]
\centering
\includegraphics[width=0.85\textwidth]{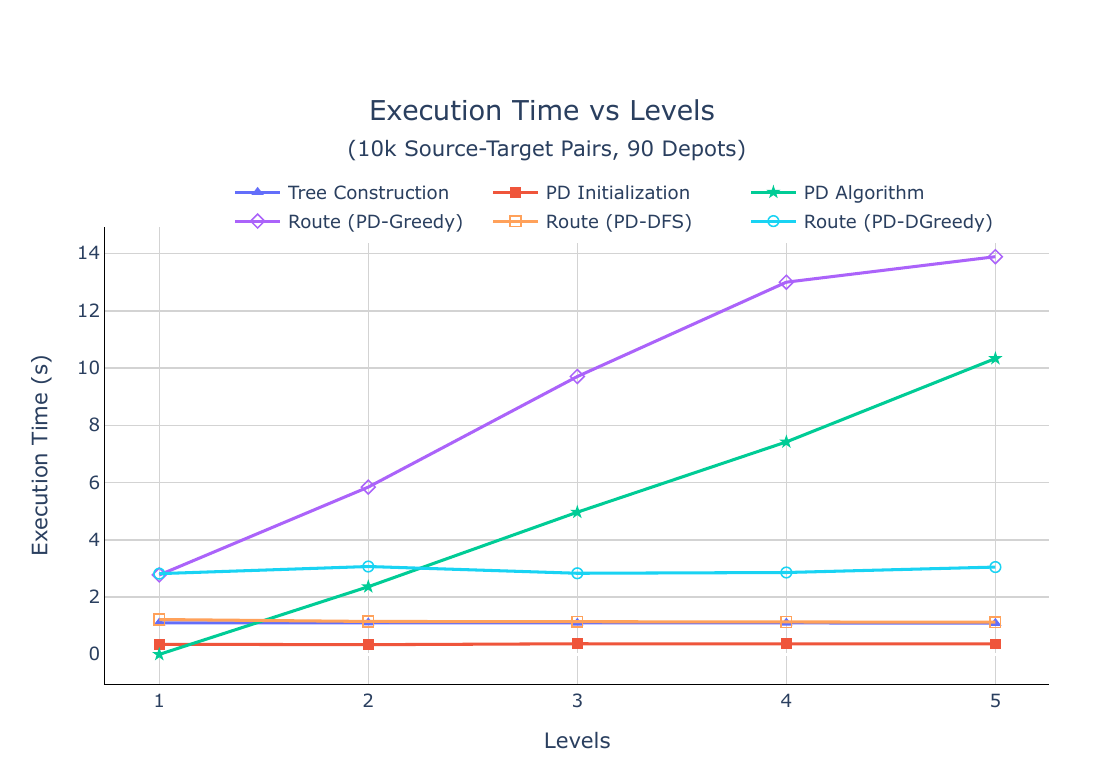}
\caption{Breakdown of running time for the three PD-based algorithms  on the Uniform synthetic dataset (SY-U) with varying  $h$.}  
\label{fig:TC_Primal_Dual_running_time_diff_levels_with_route_speedDiff_5}
\end{figure}

Figure~\ref{fig:route_cost_Radar_Plot_two_panels} illustrates the effect of the GMM distribution parameters—--the number of clusters $c$ and the covariance $\sigma$--—on algorithm performance. All three PD-based algorithms exhibit similar trends, with PD-DFS (yellow curve) clearly illustrating the performance changes with respect to $c$ and $\sigma$ in the figure.

As either $c$ (which partitions source-target pairs into more spatially separated clusters) or $\sigma$ (which increases spatial variance) grows, the data become more spatially dispersed. In this setting, the greedy cheapest-insertion heuristic performs poorly because local cost minimization leads to early routing decisions that create large and irreversible inter-cluster detours. In contrast, the PD-based algorithms leverage early assignment and hierarchical structure to better respect the spatial organization of the data, resulting in increasingly lower route cost ratios and a growing performance advantage over the greedy baseline.

 \begin{figure}[H]
\centering
\includegraphics[width=0.85\textwidth]{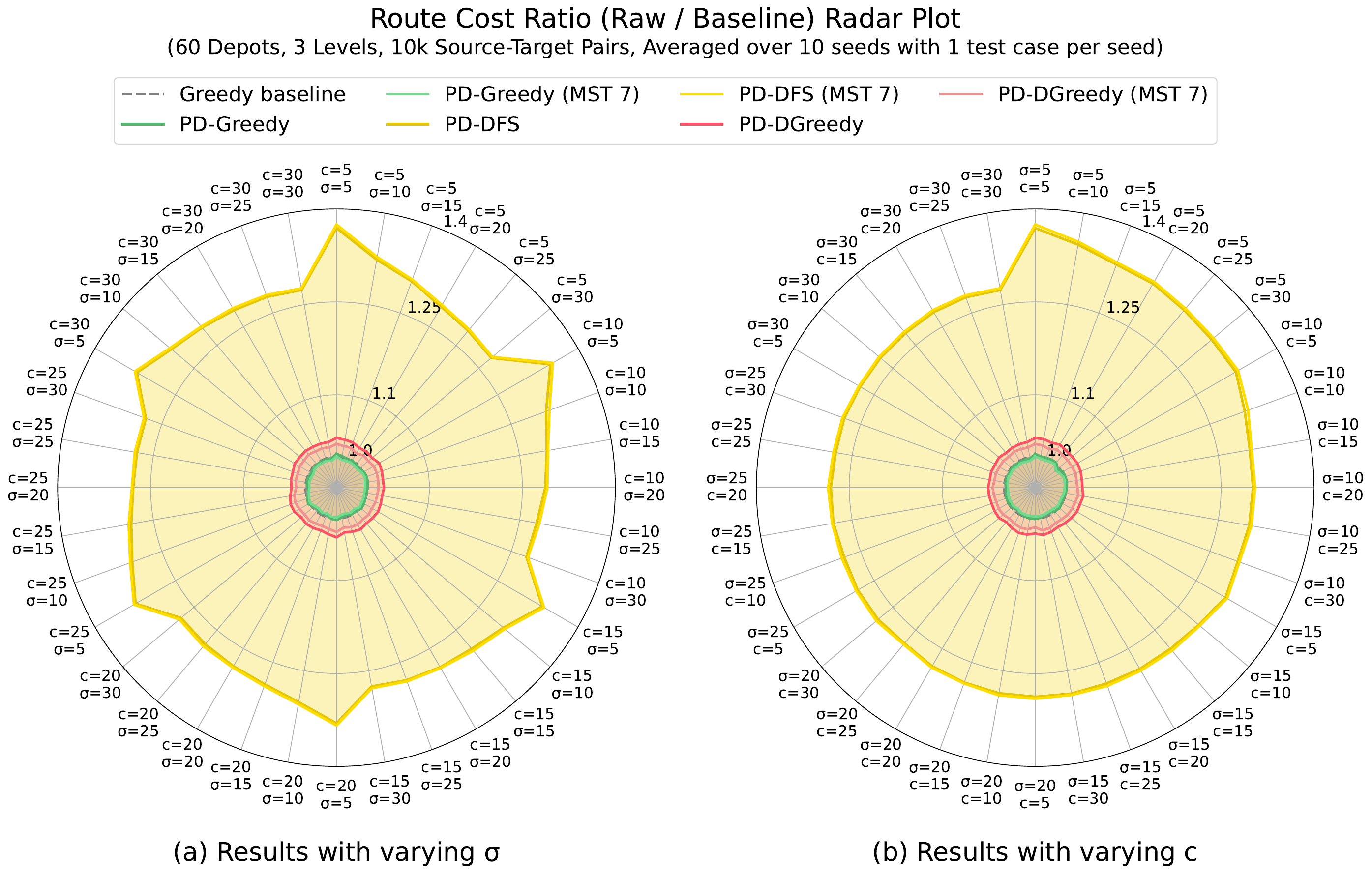}
\caption{Results on the GMM synthetic dataset: varying $c$ and $\sigma$.}  
\label{fig:route_cost_Radar_Plot_two_panels}
\end{figure}

\paragraph{Computational Results in Realworld Data}    
As observed in the synthetic datasets, a relatively small number of depots (vehicles) is sufficient to achieve good performance with low running time. Accordingly, we focus on representative settings with different numbers of depots \(k\), selected at random, and compare the performance of the three PD-based algorithms against the baseline algorithm on both weekday and weekend real-world data, considering both the full set of depots (only for the baseline algorithm) and randomly selected depot subsets.

From Figure~\ref{fig:routeCost_VS_runningTime_ratio_plot_diff_depots_meituan-weekday-18} (weekday data) and Figure~\ref{fig:routeCost_VS_runningTime_ratio_plot_diff_depots_meituan-weekday-22} (weekend data), we observe trends consistent with those seen in the synthetic experiments. 
First, PD-Greedy achieves the best solution quality, while PD-DFS is the fastest among the PD-based methods. 
Second, as the number of depots increases, the total route cost of the PD-based algorithms—particularly PD-greedy and PD-DGreedy—slightly increases from their initially strong performance; however, their cost remains consistently low relative to the baseline, with a ratio no greater than \(1.1\). 
Meanwhile, the running time of the PD-Greedy and PD-DGreedy algorithms continues to decrease as more depots are included, due to the reduced number of packages assigned to each depot. This trend becomes more pronounced for large-scale real-world source--target request pairs.

\begin{figure}[H]
\centering
\includegraphics[width=0.85\textwidth]{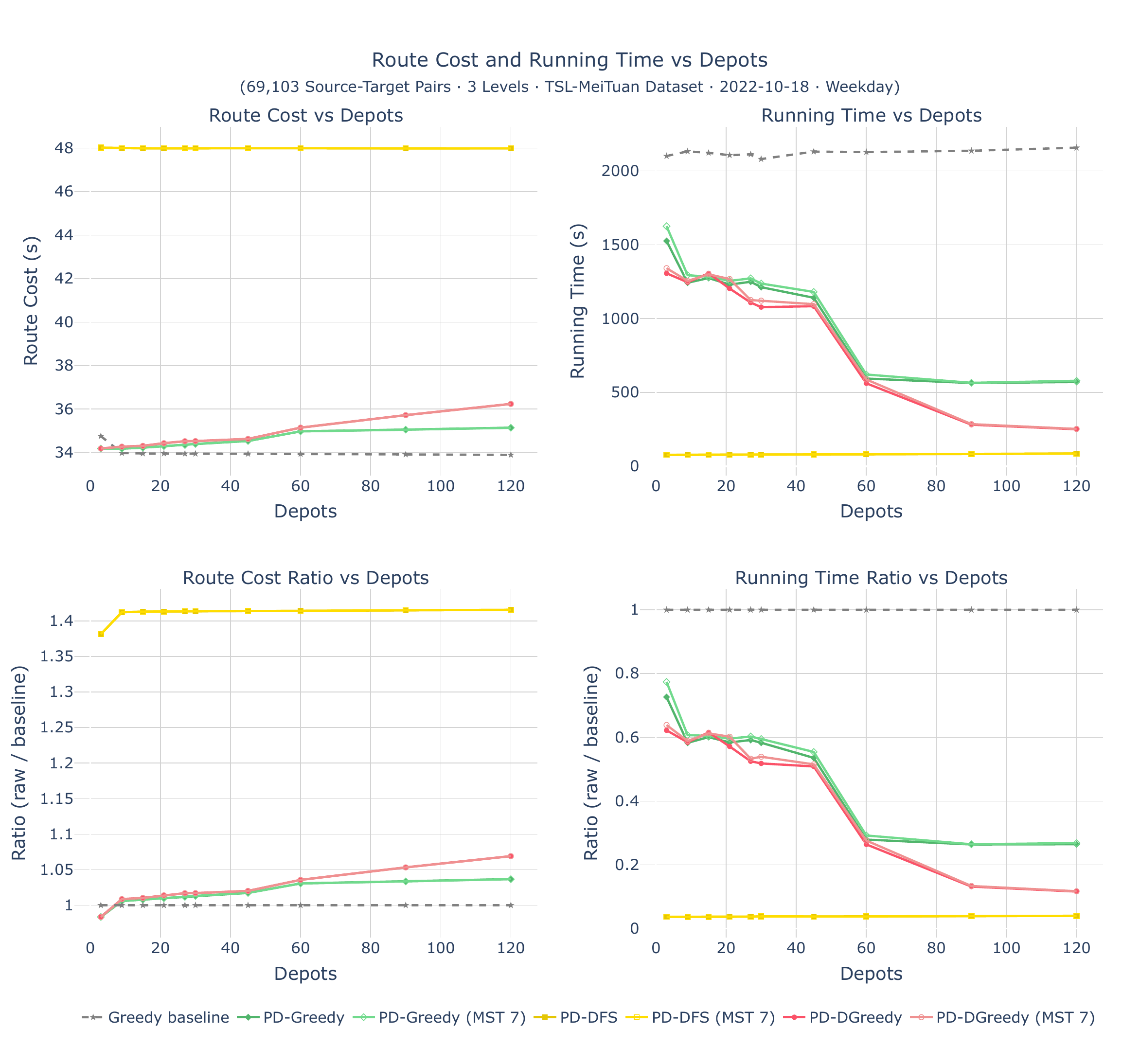}
\caption{Results on the weekday dataset with varying numbers of depots \(k\). 
We compare the PD-based algorithms with the baseline algorithm using the same randomly selected depot sets. 
For reference, when the baseline algorithm uses all depots, the total route cost is \(32.066\) seconds and the running time is \(2137.99\) seconds.}
\label{fig:routeCost_VS_runningTime_ratio_plot_diff_depots_meituan-weekday-18}
\end{figure}

\begin{figure}[H]
\centering
\includegraphics[width=0.85\textwidth]{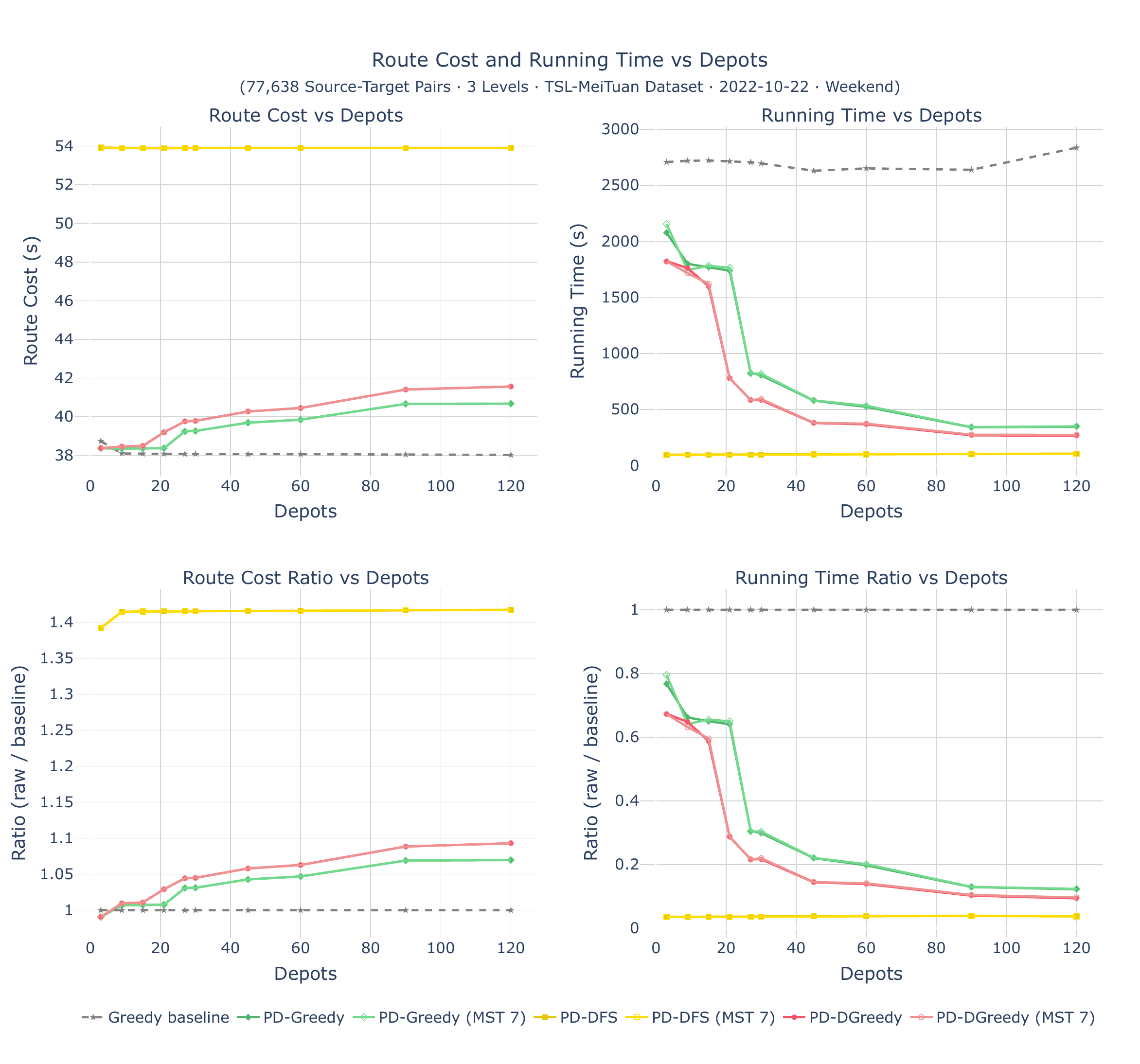}
\caption{Results on the weekend dataset with varying numbers of depots \(k\). 
We compare the PD-based algorithms with the baseline algorithm using the same randomly selected depot sets. 
For reference, when the baseline algorithm uses all depots, the total route cost is \(36.12\) seconds and the running time is \(2731.58\) seconds.}
\label{fig:routeCost_VS_runningTime_ratio_plot_diff_depots_meituan-weekday-22}
\end{figure}

We remark that, for both the synthetic datasets and the real-world datasets, the effects of varying \(n\), \(k\), and \(h\) exhibit similar trends. As the number of request pairs increases, our algorithm continues to achieve favorable performance, particularly when fewer depots are used, while maintaining relatively low running times.

\section{Conclusion}\label{sec:conclusion}

In this paper, we extend the study of multi-package collaborative delivery to the speed-heterogeneous setting and provide new theoretical insights and practical algorithmic approaches. We show that the benefit of fine-grained collaboration over coarse-grained collaboration is limited and devise a primal-dual based constant-factor approximation algorithm for minimizing the total travel time with coarse-grained collaboration. Furthermore, we propose several algorithmic adjustments to improve the performance of the algorithm in practical settings, both in terms of running time and solution quality. In experiments on both synthetic datasets and real-world datasets, we compare our approach
with a cheapest-insertion heuristic as the baseline. The experimental results demonstrate that our approach has significantly improved running-time while achieving comparable or sometimes significantly better solution quality, with the added benefit of having a
provable worst-case performance guarantee.

 \section*{Acknowledgments}
 This research was supported by data provided by Meituan.
 
\bibliographystyle{plainnat}
\bibliography{citepapers}

\appendix

\section{Experimental Results on SY-GMM}
\label{app:SY_GMM_results}

In this appendix, we report additional experimental results on a synthetic dataset generated using a Gaussian Mixture Model, referred to as SY-GMM. This synthetic benchmark is used to evaluate the performance of the proposed methods under controlled and varied conditions, complementing the experiments presented in the main part of the paper.

Figures~\ref{fig:routeCost_VS_runningTime_ratio_plot_diff_stPairs_GMM_5-15}, 
\ref{fig:routeCost_VS_runningTime_ratio_plot_diff_depots_GMM_5-15_100seed}, and 
\ref{fig:routeCost_VS_runningTime_ratio_plot_diff_levels_GMM_5-15} summarize the experimental results on the SY-GMM dataset. Overall, the results exhibit trends consistent with those observed on the synthetic uniform dataset. These findings indicate that the proposed methods are robust to variations in spatial request distribution and scale favorably with increasing problem size, reinforcing the conclusions of this work.

\begin{figure}[H]
\centering
\includegraphics[width=0.85\textwidth]{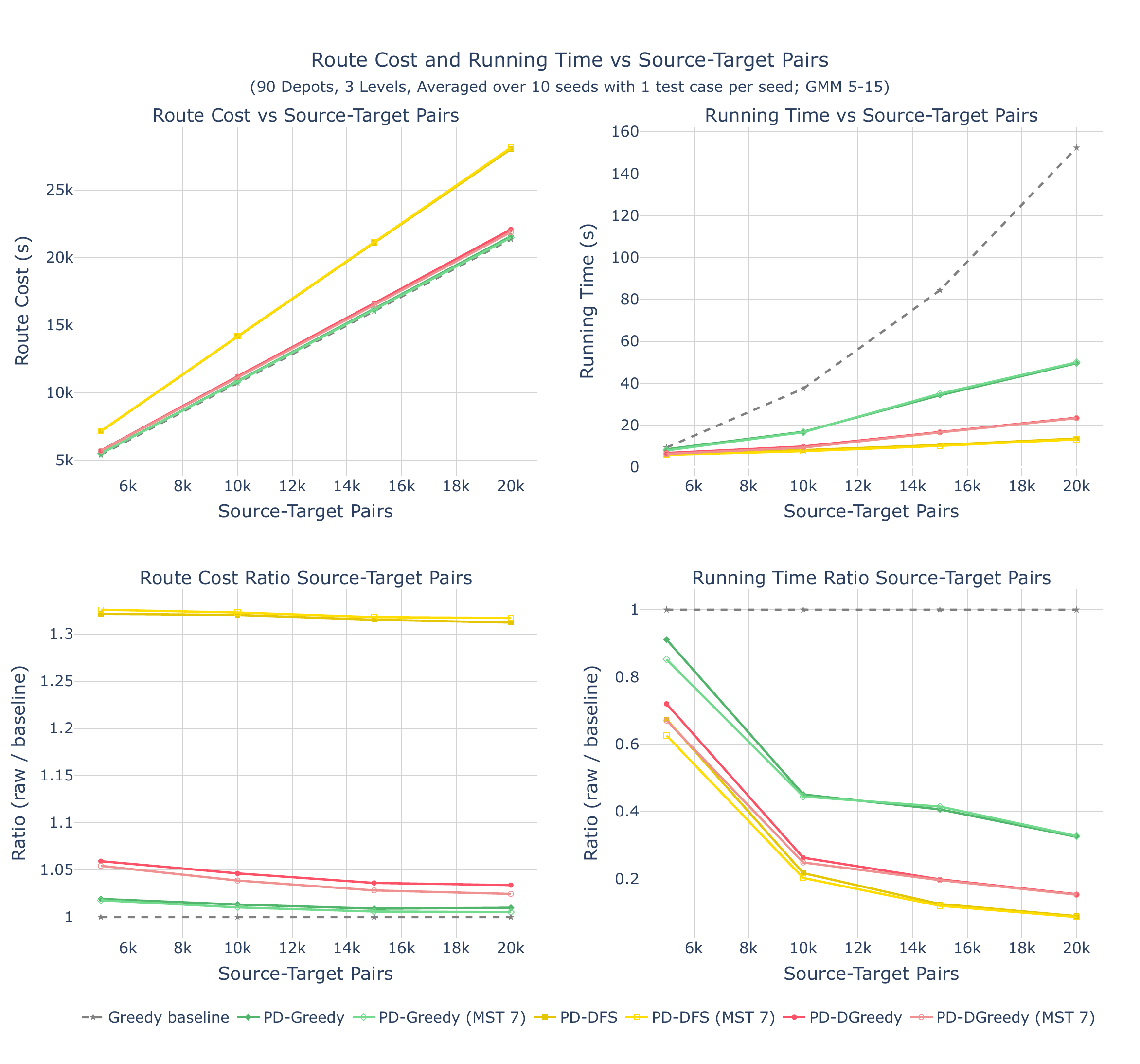}
\caption{Results on the SY-GMM  with varying $n$.}  
\label{fig:routeCost_VS_runningTime_ratio_plot_diff_stPairs_GMM_5-15}
\end{figure}

\begin{figure}[H]
\centering
\includegraphics[width=0.85\textwidth]{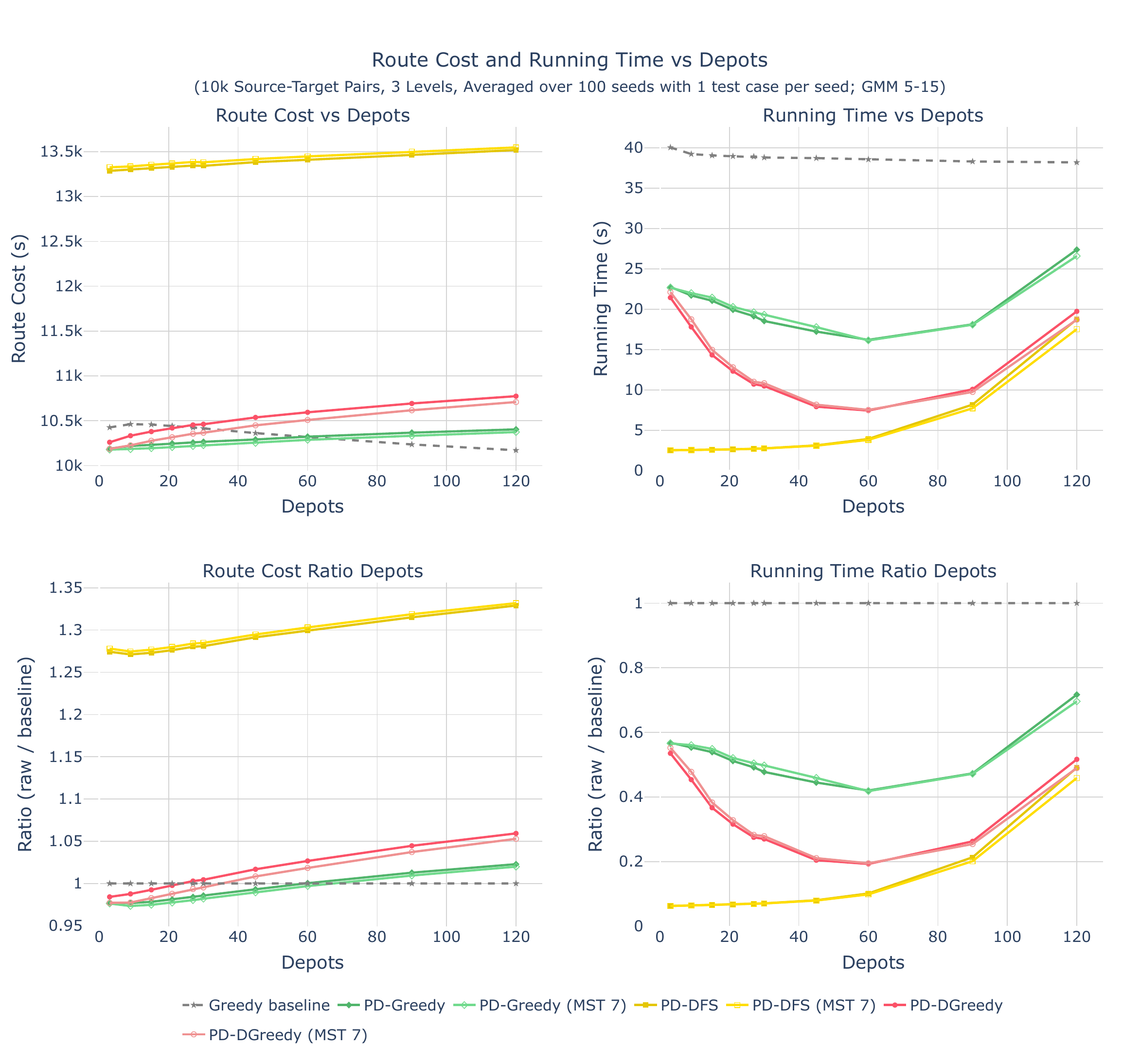}
\caption{Results on the SY-GMM with varying $k$.}  
\label{fig:routeCost_VS_runningTime_ratio_plot_diff_depots_GMM_5-15_100seed}
\end{figure}

\begin{figure}[H]
\centering
\includegraphics[width=0.85\textwidth]{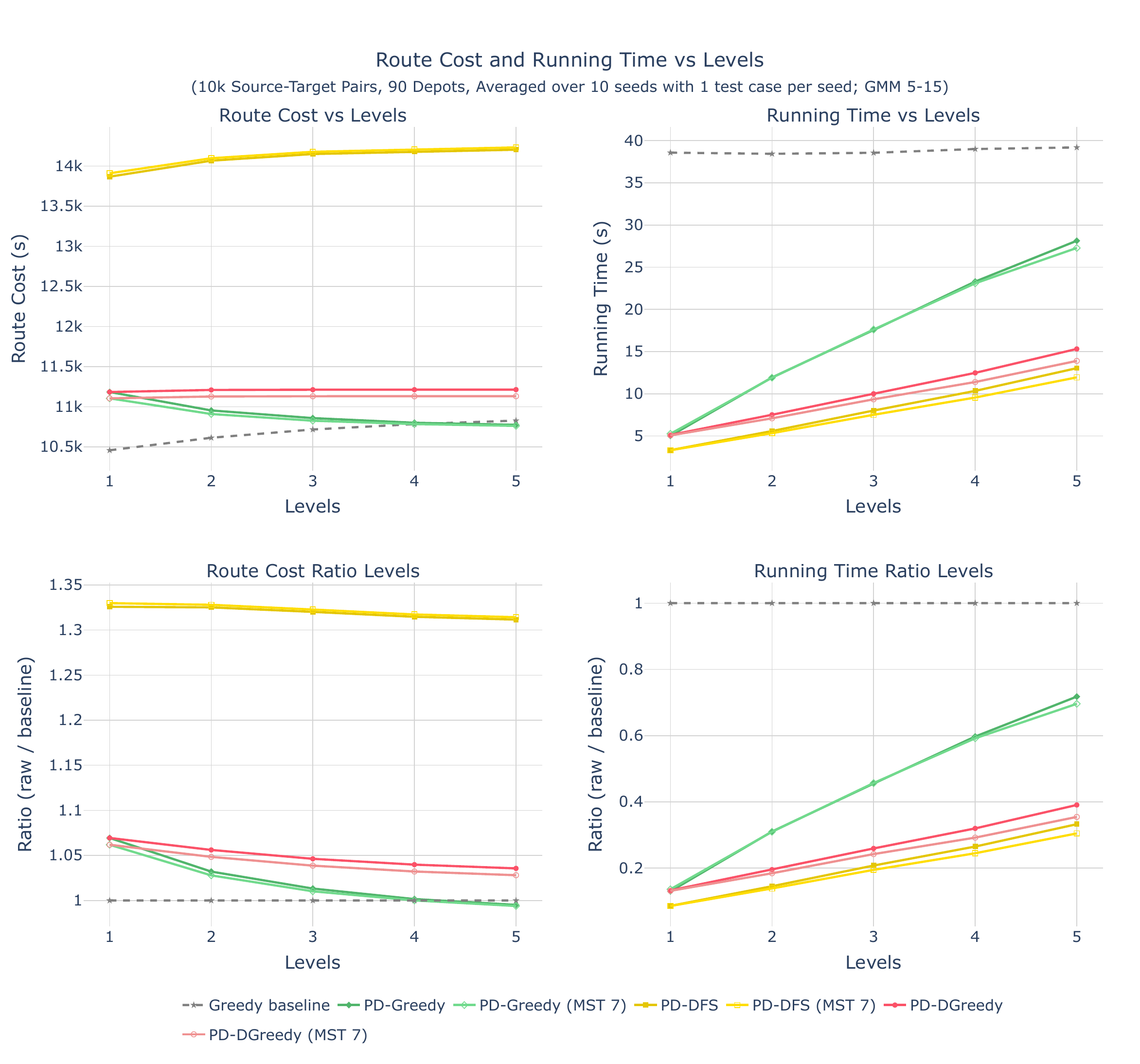}
\caption{Results on the SY-GMM with varying $h$.}  
\label{fig:routeCost_VS_runningTime_ratio_plot_diff_levels_GMM_5-15}
\end{figure}  

\end{document}